\theoremstyle{plain}
\newtheorem{theorem}{Theorem}[section]
\newtheorem{lemma}[theorem]{Lemma}
\newtheorem{corollary}[theorem]{Corollary}
\theoremstyle{definition}
\newtheorem{definition}[theorem]{Definition}
\newtheorem{assumption}[theorem]{Assumption}
\theoremstyle{remark}
\let\oldemptyset\emptyset
\DeclareMathOperator{\R}{\mathbb{R}}
\DeclareMathOperator{\E}{\mathbb{E}}
\newcommand{\Tau}{\mathcal{T}}
\icmltitlerunning{Context-Aware Bayesian Network Actor-Critic Methods for Cooperative Multi-Agent Reinforcement Learning}
\begin{document}

\twocolumn[
\icmltitle{Context-Aware Bayesian Network Actor-Critic Methods for Cooperative Multi-Agent Reinforcement Learning}



\icmlsetsymbol{equal}{*}

\begin{icmlauthorlist}
\icmlauthor{Dingyang Chen}{yyy}
\icmlauthor{Qi Zhang}{yyy}
\end{icmlauthorlist}

\icmlaffiliation{yyy}{Artificial Intelligence Institute, University of South Carolina, SC, USA}

\icmlcorrespondingauthor{Dingyang Chen}{dingyang@email.sc.edu}
\icmlcorrespondingauthor{Qi Zhang}{qz5@cse.sc.edu}

\icmlkeywords{Machine Learning, ICML}

\vskip 0.3in
]



\printAffiliationsAndNotice{} 

\begin{abstract}
Executing actions in a correlated manner is a common strategy for human coordination that often leads to better cooperation, which is also potentially beneficial for cooperative multi-agent reinforcement learning (MARL). However, the recent success of MARL relies heavily on the convenient paradigm of purely decentralized execution, where there is no action correlation among agents for scalability considerations. In this work, we introduce a Bayesian network to inaugurate correlations between agents' action selections in their joint policy. Theoretically, we establish a theoretical justification for why action dependencies are beneficial by deriving the multi-agent policy gradient formula under such a Bayesian network joint policy and proving its global convergence to Nash equilibria under tabular softmax policy parameterization in cooperative Markov games. Further, by equipping existing MARL algorithms with a recent method of differentiable directed acyclic graphs (DAGs), we develop practical algorithms to learn the context-aware Bayesian network policies in scenarios with partial observability and various difficulty. We also dynamically decrease the sparsity of the learned DAG throughout the training process, which leads to weakly or even purely independent policies for decentralized execution. Empirical results on a range of MARL benchmarks show the benefits of our approach. The code is available at https://github.com/dchen48/BNPG.
\end{abstract}

\section{Introduction}
\label{submission}

Cooperative multi-agent reinforcement learning (MARL) methods equip a group of autonomous agents with the capability of planning and learning to maximize their joint utility, or reward signals in the reinforcement learning (RL) literature, which provides a promising paradigm for a range of real-world applications, such as traffic control  \cite{chu2019multi}, coordination of multi-robot systems  \cite{corke2005networked}, and power grid management  \cite{callaway2010achieving}.
As a key distinction from the single-agent setting, multi-agent joint action spaces grow exponentially with the number of agents, which imposes significant scalability issues.
As a convenient and commonly adopted solution, most existing cooperative MARL methods only consider {\em product policies}, i.e., each agent selects its local action independently given the state or its observations.
Restricting to product policies, however, does come at a cost for cooperative tasks:
consider an example where cars wait at a crossroads, it would be hard for the cars to coordinate their movements without knowing others' intentions, potentially resulting in a crash or congestion.
Intuitively, optimizing over the smaller joint policy space of all product policies can lead to suboptimal joint policies compared to optimizing over the entire set of joint policies that also includes {\em correlated policies} where the local actions of all agents are sampled together in a potentially correlated manner.   

The research question then arises naturally:
how can we introduce correlations for cooperative multi-agent joint policies, while taming the scalability issues?
Noting that a joint policy is joint distributions (over agents' local actions), a straightforward yet underexplored solution idea is to use a Bayesian network (BN) that represents conditional dependencies between agents' local actions via a directed acyclic graph (DAG), where a desirable DAG topology structure captures important dependencies that exist among hopefully a set of sparsely connected agents.
As our first contribution, we formalize this solution idea of BN joint policies in the cooperative Markov game framework  \cite{Boutilier1999SequentialOA,peshkin2001learning}, derive its associated BN policy gradient formula, and then prove the global convergence of its gradient ascent to Nash equilibria under the tabular policy parameterization.

As our second contribution, we then adapt existing multi-agent actor-critic methods such as MAPPO  \cite{yu2021surprising} to incorporate BN joint policies.
For practicality and efficiency, our algorithm features the following two key design choices:
(i) 
Our DAG topology of the BN joint policy is learnable to be context-aware based on the environment state or the agents' joint observations, leveraging a recently developed technique for differentiable DAG learning.
(ii)
To execute a BN joint policy, the agents need to communicate their intended actions to their children in the BN, unless the BN's DAG topology reduces to product policies, and the corresponding communication overhead is directly determined by the DAG's denseness/sparseness.
To encourage sparse communication during execution, we develop a learning strategy that dynamically increases the sparsity of the learned DAG, where full sparsity (i.e., product policies) can be achieved at the last stage of the training process and therefore the learned joint policy can be executed in a purely decentralized manner, making our algorithm compatible with the centralized training, decentralized execution (CTDE) paradigm  \cite{lowe2017multi}.
Empirically results show the benefits of our algorithm equipped with these two design choices.

The rest of this paper is structured as follows:
Section \ref{sec:Releated Work} reviews closely related work; 
Section \ref{sec:Preliminaries} introduces preliminaries of cooperative Markov games and solution concepts therein;
Section \ref{sec:Bayesian network joint policy} formulates our novel notion of Bayesian network joint policy, followed by the theoretical results in Section \ref{sec:Asymptotic convergence of the policy gradient dynamics};
Section \ref{sec:Practical Algorithm} describes our practical algorithm, followed by the empirical results in Section \ref{sec:Experiments};
Section \ref{sec:Conclusion} concludes the paper.

\section{Related Work}
\label{sec:Releated Work}
\paragraph{Convergence of policy gradient in cooperative MGs.}
Cooperative MGs are an important subclass of Markov games, where each agent has the same reward function. Recent work has established the convergence guarantee of policy gradient in Cooperative MGs to Nash policies under tabular setting with direct parameterization \cite{leonardos2021global}, and with softmax parameterization \cite{zhang2022effect,chen2022convergence}. 
\paragraph{Policy correlations in MARL.} Some prior work has noticed the limitation of purely decentralized execution and made some efforts to introduce correlations among policies. Value-based method  \cite{rashid2018qmix} following the CTDE-based training paradigm has been combined with coordination graph  \cite{bohmer2020deep} for introducing pairwise correlation. However, the optimization process requires Max-Sum  \cite{Rogers2009BoundedAD} which is computationally intensive when the coordination graph is dense.  \cite{wang2022contextaware} proposes a rule-based pruning method to generate a sparse coordination graph that can speed up the Max-Sum algorithm without harming the performance. However, the extension from the pairwise correlation to more complicated ones is not trivial. There are also some policy-based algorithms augmented with correlated execution.  
\cite{ruan2022gcs} combines MAPPO \cite{yu2021surprising} with a graph generator outputting Bayesian Network that determines action dependencies. The optimization of the graph generator is achieved by maximizing the cumulative rewards constrained to the depth and dagness of the output graph. However, there is no theoretical justification for why using Bayesian networks is reasonable, and the output graph is not guaranteed to be a DAG, which requires some non-differentiable rule-based pruning and can harm performance. Moreover, the existing methods do not generate fully decentralized policies at the end of the training, which increases the execution time in the deployment of the model.   

\paragraph{Differentiable DAG learning.}
The goal is to learn such an adjacency matrix that can help the actors better coordinate. However, the generation of DAG requires non-differentiable operations due to its discreteness and acyclicity, which prevents end-to-end training. Fortunately, recent work  \cite{charpentier2022differentiable} proposes a simple fully differentiable DAG learning algorithm. Every DAG can be decomposed into the multiplication of a permutation matrix determining topological ordering and upper triangular matrix (edge matrix) determining DAG structure. We can use neural networks to learn the logits for permutation and edge matrices, and use  Gumbel-Softmax  \cite{jang2016categorical} and Gumbel-Sinkhorn  \cite{mena2018learning} to differentiably transform them to the corresponding discrete ones. 

\section{Preliminaries}
\label{sec:Preliminaries}
\paragraph{Cooperative Markov game.}
We consider a cooperative Markov game (MG) $\langle \mathcal{N},\mathcal{S},\mathcal{A}, P, r, \mu \rangle$ with
$N$ agents indexed by $i\in\mathcal{N}=\{1,...,N\}$,
state space $\mathcal{S}$,
action space $\mathcal{A} = \mathcal{A}^1\times\cdots\times\mathcal{A}^N$,
transition function $P: \mathcal{S}\times\mathcal{A}\to\Delta(\mathcal{S})$, (team) reward function $r: \mathcal{S}\times\mathcal{A}\to\R$ shared by all agents $i\in\mathcal{N}$,
and initial state distribution $\mu \in \Delta(\mathcal{S})$,
where we use $\Delta(\mathcal{X})$ to denote the set of probability distributions over $\mathcal{X}$.
For ease of exposition, we assume full observability, i.e., each agent observes the global state $s\in\mathcal{S}$, until Section \ref{sec:Practical Algorithm} where we introduce our practical algorithm that incorporates partial observability.
Under full observability, we consider general {\em joint policy}, $\pi:\mathcal{S}\to\Delta(\mathcal{A})$, which maps from the state space to distributions over the joint action space.
As the size of action space $\mathcal{A}$ grows exponentially with $N$, the commonly used joint policy subclass is the {\em product policy},  $\pi=(\pi^1,\cdots,\pi^N):\mathcal{S}\to\times_{i\in\mathcal{N}}\Delta(\mathcal{A}^i)$, which is factored as the product of local policies $\pi^i:\mathcal{S}\to\Delta(\mathcal{A}^i)$, $\pi(a|s) = \prod_{i\in\mathcal{N}}\pi^i(a^i|s)$, each mapping the state space only to the action space of an individual agent.   
Define the discounted return from time step $t$ as $G_t = \sum_{l=0}^{\infty} \gamma^l r_{t+l}$, where $r_t:=r(s_t,a_t)$ is the team reward at time step $t$. Joint policy $\pi$ induces a value function defined as $V_\pi(s_t) = \E_{s_{t+1:\infty}, a_{t:\infty}\sim\pi}[G_t|s_t]$, and action-value function $Q_\pi(s_t, a_t) = \E_{s_{t+1:\infty}, a_{t+1:\infty}\sim\pi}[G_t|s_t, a_t]$.
Following policy $\pi$, the cumulative team reward, i.e., the value function, starting from $s_0\sim\mu$ is denoted as $V_\pi(\mu):=\E_{s_0\sim\mu}[V_\pi(s_0)]$. The (unnormalized) {\em discounted state visitation measure} by following policy $\pi$ after starting at $s_0\sim\mu$ is defined as 
\begin{align*}
d^\pi_\mu(s):=\E_{s_0\sim\mu}\left[\sum_{t=0}^{\infty}\gamma^t{\rm Pr}^\pi(s_t=s|s_0)\right]
\end{align*}
where ${\rm Pr}^\pi(s_t=s|s_0)$ is the probability that $s_t=s$ when starting at state $s_0$ and following $\pi$ subsequently.

As all agents share a team reward, cooperative MARL considers the same objective as single-agent RL of optimizing the joint policy from experience to maximize its value, i.e., $\max_\pi V_\pi(\mu)$. 
For product policies, we will also consider the weaker solution concept of the Nash policy, as formally defined below.
\begin{definition}[Nash policy]
\label{Nash policy}
Product policy $\pi=(\pi^1,\cdots,\pi^N)=(\pi^i,\pi^{-i})$ is a Nash policy if  
$$
    \forall i\in \mathcal{N}, \forall \bar{\pi}^{i}\in \Delta(\mathcal{A}^i),  V_{\bar{\pi}^{i},\pi^{-i}}(\mu) \leq V_{\pi}(\mu)
$$
where $\pi^{-i}$ is the local policies of the agents excluding $i$.
\end{definition}
For a Nash policy, each agent $i$ maximizes the value function given fixed local policies of other agents.
\begin{figure}[t]
\begin{center}
\centerline{\includegraphics[width=.97\columnwidth]{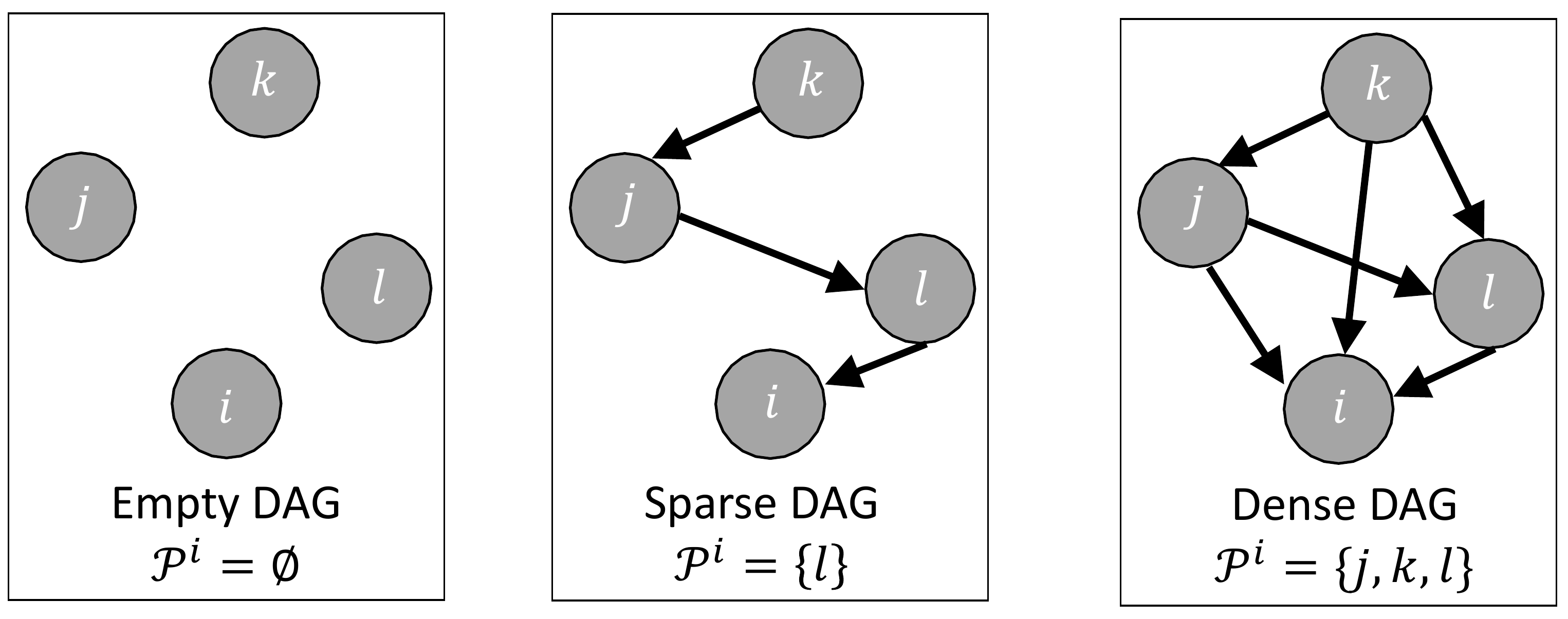}}
\caption{Illustration of various DAG topologies.}
\label{fig:Bayesian network illustration}
\end{center}
\vskip -0.3in
\end{figure}

\section{Bayesian Network Joint Policy}
\label{sec:Bayesian network joint policy}
Most existing cooperative MARL methods consider only product policies to optimize, rather than the more general set of general joint policies.
This is mainly because product policies can conveniently deal with the scalability issue of the joint action space.
Another justification is that restricting to product policies incurs no optimality gap, since it is well-known that there is always an optimal joint policy that is deterministic and therefore a product policy.
However, the existence of an optimal product policy does not guarantee that we can search it out easily.
In fact, existing theoretical and empirical results \cite{leonardos2021global,zhang2022effect,chen2022convergence}, including ours in this paper, have shown that restrictively searching the product policies via gradient ascent can only find local optima such as Nash policies, even in the noiseless tabular setting.

As the key notion in this work, we now formally introduce a class of joint policies that is more general than product policies by introducing action dependencies captured by a Bayesian network (BN). 
We specify a BN by a directed acyclic graph (DAG) $\mathcal{G}=(\mathcal{N}, \mathcal{E})$ with vertex set $\mathcal{N}$ and directed edge set $\mathcal{E}\subseteq \{(i,j): i,j\in\mathcal{N}, i\neq j\}$. Denote the parents of agent $i$ as $\mathcal{P}^i:=\{j: (i, j)\in\mathcal{E}\}$, and the corresponding parent actions as $a^{\mathcal{P}^i}\in\times_{j\in\mathcal{P}^i}\mathcal{A}^j$, as illustrated in Figure \ref{fig:Bayesian network illustration}. Under full observability and with BN $\mathcal{G}$, we consider a {\em BN (joint) policy}, $(\pi,\mathcal{G})=(\pi^1, \cdots, \pi^N,\mathcal{G}):\mathcal{S}\to\Delta(\mathcal{A})$. Similar to product policies, BN policies are also factored as the product of local policies given the state and action dependencies determined by $\mathcal{G}$, i.e., $\pi^i:\mathcal{S}\times (\times_{j\in\mathcal{P}^i}\mathcal{A}^j) \to\Delta(\mathcal{A}^i)$, and thus joint action $a=(a^1,\cdots,a^N)$ is sampled as $\pi(a|s) = \prod_{i\in\mathcal{N}}\pi^i(a^i|s,a^{\mathcal{P}^i})$. 

We make two remarks on BN policies:
(i) With the factorization of a BN policy into its local policies, the concept of Nash policy in Definition \ref{Nash policy} is also applicable to BN policies.
(ii) BN policies naturally interpolate product policies and general joint policies, including them as two extremes:
BN policies reduce to product policies when DAG $\mathcal{G}$ is an empty graph (Figure \ref{fig:Bayesian network illustration} (left)) and can model general joint policies when  $\mathcal{G}$ is dense (Figure \ref{fig:Bayesian network illustration} (right)).

\section{Convergence of the Tabular Softmax BN Policy Gradient in Cooperative MGs}
\label{sec:Convergence of the tabular softmax policy gradient in MPGs}
In this section, we consider optimizing BN policies through policy gradient ascent under the tabular softmax parameterization.
Under the same assumptions, we are able to extend existing convergence results from products policies to BN policies, asserting that optimizing BN policies through gradient ascent can indeed find global optima (rather than Nash) when the BN's DAG is dense.

Formally, the local policies in the BN policy are parameterized in the tabular softmax manner from the global state and parent actions, i.e., we have, for each agent $i$, its policy parameter 
\begin{align*}
\theta^i=\left\{\theta_{s,a^{\mathcal{P}^i},a^i}^i\in\mathbb{R}:s\in \mathcal{S}, a^{\mathcal{P}^i}\in\times_{j\in\mathcal{P}^i}\mathcal{A}^j,a^i\in \mathcal{A}^i\right\}
\end{align*} 
and induced softmax local policy 
\begin{align}
\label{eq:tabular softmax local policy}
\pi^i_{\theta^i}\left(a^i|s,a^{\mathcal{P}^i}\right)\propto\exp\left(\theta^i_{s,a^{\mathcal{P}^i},a^i}\right)
\end{align}
with the BN policy parameterized as $\pi_\theta = (\pi^1_{\theta^1},\cdots,\pi^N_{\theta^N})$.

In Lemma \ref{lemma:state-based tabular softmax multi-agent policy gradient}, we derive the policy gradient form for the BN policy as parameterized in Equation \eqref{eq:tabular softmax local policy}, which will used to establish our convergence results in this section.

It will be also convenient to introduce a few shorthands before stating Lemma \ref{lemma:state-based tabular softmax multi-agent policy gradient}.
Consider a subset $\mathcal{M}\subseteq\mathcal{N}$ of all agents and its complement $-\mathcal{M}$, such that a joint action can be decomposed as $a=(a^{\mathcal{M}},a^{-\mathcal{M}})$.
Let 
\begin{align*}
    \pi^{\mathcal{M}}(a^{\mathcal{M}}|s,a^{-\mathcal{M}}):= 
    \frac{\pi(a^{\mathcal{M}},a^{-\mathcal{M}}|s)}{\sum_{\bar{a}^{\mathcal{M}}}\pi(\bar{a}^{\mathcal{M}},a^{-\mathcal{M}}|s)}
\end{align*}
be the conditional for $a^{\mathcal{M}}$ under $\pi$.
Let 
\begin{align*}
Q_\pi(s,a^{\mathcal{M}}) := \E_{a^{-\mathcal{M}}\sim\pi^{-\mathcal{M}}(\cdot|s,a^{\mathcal{M}})}
\left[Q_\pi(s,a^{\mathcal{M}},a^{-\mathcal{M}})\right].
\end{align*}
Let $\mathcal{P}^i_+ := \mathcal{P}^i \cup \{i\}$ denote the set of agent $i$ and its parents.
We will also abbreviate $V_{\pi_\theta}$, $Q_{\pi_\theta}$ as $V_{\theta}$, $Q_{\theta}$, respectively.

\begin{lemma}[Tabular softmax BN policy gradient form, proof in Appendix \ref{BN_PG}]
\label{lemma:state-based tabular softmax multi-agent policy gradient}
For the tabular softmax BN policy parameterized as in Equation \eqref{eq:tabular softmax local policy}, we have:
\begin{align}
\frac{\partial V_\theta(\mu)}{\partial \theta^i_{s,a^{\mathcal{P}^i},a^i}} =\frac{1}{1-\gamma}d_\mu^{\pi_\theta}(s,a^{\mathcal{P}^i})\pi_{\theta^{i}}^i(a^i|s,a^{\mathcal{P}^i})A^{i}_\theta(s,a^{\mathcal{P}^i},a^i)\nonumber
\end{align}
where
$d_\mu^{\pi_\theta}(s,a^{\mathcal{P}^i}) := d_\mu^{\pi_\theta}(s)\textstyle\sum_{a^{-\mathcal{P}^i}}\pi_{\theta}(a^{-\mathcal{P}^i},a^{\mathcal{P}^i}|s)$,
$A_\theta^{i}(s,a^{\mathcal{P}^i},a^i) :=Q_\theta (s,a^{\mathcal{P}^i_+})-Q_\theta(s,a^{\mathcal{P}^i})$.
\end{lemma}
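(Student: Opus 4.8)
The plan is to start from the standard policy gradient theorem and then exploit the Bayesian-network factorization together with the softmax structure. First I would invoke the REINFORCE-style identity
\[
\frac{\partial V_\theta(\mu)}{\partial\theta^i_{s,a^{\mathcal{P}^i},a^i}} = \frac{1}{1-\gamma}\sum_{s'}d_\mu^{\pi_\theta}(s')\sum_{a'}\frac{\partial \pi_\theta(a'|s')}{\partial\theta^i_{s,a^{\mathcal{P}^i},a^i}}Q_\theta(s',a'),
\]
which holds for any differentiable joint policy. Since $\theta^i_{s,a^{\mathcal{P}^i},a^i}$ enters $\pi_\theta(a'|s')=\prod_{j}\pi^j_{\theta^j}((a')^j|s',(a')^{\mathcal{P}^j})$ only through agent $i$'s local factor, the log-derivative trick combined with the elementary softmax identity yields
\[
\frac{\partial \pi_\theta(a'|s')}{\partial\theta^i_{s,a^{\mathcal{P}^i},a^i}} = \pi_\theta(a'|s')\,\mathbbm{1}[s'=s]\,\mathbbm{1}[(a')^{\mathcal{P}^i}=a^{\mathcal{P}^i}]\left(\mathbbm{1}[(a')^i=a^i]-\pi^i_{\theta^i}(a^i|s,a^{\mathcal{P}^i})\right).
\]
The two indicators collapse the sum over $s'$ to $s$ and restrict the sum over $a'$ to joint actions whose parent block equals $a^{\mathcal{P}^i}$.

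Next I would split the expression along the two summands in the parenthesis, producing a ``positive'' term carrying the extra indicator $\mathbbm{1}[(a')^i=a^i]$ and a ``baseline'' term carrying the factor $\pi^i_{\theta^i}(a^i|s,a^{\mathcal{P}^i})$. The crux is to identify the two inner sums over the remaining actions with the marginalized action values $Q_\theta(s,a^{\mathcal{P}^i_+})$ and $Q_\theta(s,a^{\mathcal{P}^i})$. Applying the definitions of the conditional $\pi^{-\mathcal{M}}$ and of $Q_\pi(s,a^{\mathcal{M}})$ with $\mathcal{M}=\mathcal{P}^i_+$ and $\mathcal{M}=\mathcal{P}^i$ respectively, each inner sum rewrites as $\pi(a^{\mathcal{M}}|s)\,Q_\theta(s,a^{\mathcal{M}})$, because the normalizing denominator appearing in $\pi^{-\mathcal{M}}$ is exactly the marginal $\pi(a^{\mathcal{M}}|s)$. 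This turns the positive term into $\tfrac{1}{1-\gamma}d_\mu^{\pi_\theta}(s)\,\pi(a^{\mathcal{P}^i_+}|s)\,Q_\theta(s,a^{\mathcal{P}^i_+})$ and the baseline term into $-\tfrac{1}{1-\gamma}d_\mu^{\pi_\theta}(s)\,\pi^i_{\theta^i}(a^i|s,a^{\mathcal{P}^i})\,\pi(a^{\mathcal{P}^i}|s)\,Q_\theta(s,a^{\mathcal{P}^i})$.

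To reconcile the two prefactors I would invoke the key Bayesian-network identity that the conditional of a node given its parents, computed from the full joint, coincides with its local factor, i.e. $\pi(a^i|s,a^{\mathcal{P}^i})=\pi^i_{\theta^i}(a^i|s,a^{\mathcal{P}^i})$. This follows from a topological-ordering argument: summing the factorization over all agents that come after $i$ (hence lie outside $\mathcal{P}^i_+$) cascades through their conditional factors to $1$, which gives $\pi(a^{\mathcal{P}^i_+}|s)=\pi(a^{\mathcal{P}^i}|s)\,\pi^i_{\theta^i}(a^i|s,a^{\mathcal{P}^i})$. Substituting this into the positive term and recognizing $d_\mu^{\pi_\theta}(s)\,\pi(a^{\mathcal{P}^i}|s)=d_\mu^{\pi_\theta}(s,a^{\mathcal{P}^i})$ by definition, both terms share the common prefactor $\tfrac{1}{1-\gamma}d_\mu^{\pi_\theta}(s,a^{\mathcal{P}^i})\,\pi^i_{\theta^i}(a^i|s,a^{\mathcal{P}^i})$, and their difference collapses into $A^i_\theta(s,a^{\mathcal{P}^i},a^i)=Q_\theta(s,a^{\mathcal{P}^i_+})-Q_\theta(s,a^{\mathcal{P}^i})$, which is the claim.

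I expect the main obstacle to be the bookkeeping in the crux step: carefully tracking which agents are marginalized when reducing each inner sum to $Q_\theta(s,a^{\mathcal{P}^i_+})$ versus $Q_\theta(s,a^{\mathcal{P}^i})$, and rigorously justifying the identity $\pi(a^i|s,a^{\mathcal{P}^i})=\pi^i_{\theta^i}(a^i|s,a^{\mathcal{P}^i})$. This is precisely where the acyclicity of the DAG is used, and where the derivation genuinely departs from the familiar product-policy case.
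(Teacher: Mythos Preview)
Your proposal is correct and follows essentially the same route as the paper's proof: both start from the policy gradient theorem, apply the tabular-softmax log-derivative to produce the indicators $\mathbbm{1}[s'=s]\mathbbm{1}[(a')^{\mathcal{P}^i}=a^{\mathcal{P}^i}](\mathbbm{1}[(a')^i=a^i]-\pi^i_{\theta^i}(a^i|s,a^{\mathcal{P}^i}))$, split into two terms, and then identify the resulting inner expectations with $Q_\theta(s,a^{\mathcal{P}^i_+})$ and $Q_\theta(s,a^{\mathcal{P}^i})$. The only cosmetic differences are that the paper starts with the advantage $A^{\pi_\theta}(\bar s,\bar a)$ rather than $Q$ (the $V$-baseline cancels between the two terms anyway), and the paper leaves the Bayesian-network identity $\pi(a^{\mathcal{P}^i_+}|s)=\pi(a^{\mathcal{P}^i}|s)\,\pi^i_{\theta^i}(a^i|s,a^{\mathcal{P}^i})$ implicit in its factorization of the joint expectation, whereas you state and argue it explicitly; your topological-ordering sketch is the right idea, though to fully marginalize down to $\pi(a^{\mathcal{P}^i_+}|s)$ you must also sum over the non-parent ancestors in $\{1,\ldots,i-1\}\setminus\mathcal{P}^i$, not only the agents after $i$.
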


The policy gradient form in Lemma \ref{lemma:state-based tabular softmax multi-agent policy gradient} generalizes its counterpart for single-agent policies \cite{agarwal2021theory} and for multi-agent product policies \cite{zhang2022effect,chen2022convergence} under the tabular softmax policy parameterization, which enables us to extend the convergence results to the BN joint policies. 

Below we state the assumptions that have been used \cite{zhang2022effect,chen2022convergence}, to generate the convergence results for product policies, i.e., $\mathcal{G}=(\mathcal{N},\emptyset)$.
\begin{assumption}
\label{assumption:discounted state visitation distribution}
For any $\pi$ and any state $s$ of the Markov game, $d^\pi_\mu(s) > 0$.
\end{assumption}

\begin{assumption} [Reward function is bounded]
\label{assumption:Reward function is bounded}
The reward function $r$ is bounded in the range $[r_{\rm min},r_{\rm max}]$, such that the value function $V$ is bounded as  $V_{\rm min}\leq V_\pi(s)\leq V_{\rm max}~\forall s, \pi$.
\end{assumption}

\begin{assumption} 
\label{assumption:PG}
Following the policy gradient dynamics \eqref{eq:PG}, the policy of every agent $i$ converges asymptotically, i.e., $\pi_{\theta^i_t}^i\to\pi_{\theta^i_*}^i$ as $t\to\infty,~\forall i$.
\end{assumption}

Assumption 1 and assumption 2 are standard assumptions used in \cite{agarwal2021theory,zhang2022effect,chen2022convergence}, which ensures the sufficient coverage of all states and the boundness of the reward function, respectively. Assumption 3 is a stronger assumption used in \cite{zhang2022effect,chen2022convergence}. A sufficient condition for assumption \ref{assumption:PG} by \cite{fox2022independent} is that the fixed point of the equation in Lemma \ref{lemma:state-based tabular softmax multi-agent policy gradient} are isolated. The purpose of assumption \ref{assumption:PG} is to establish the convergence of  $A^{i}_\theta(s,a^{\mathcal{P}^i_+})$ if $d_\mu^{\pi_\theta}(s,a^{\mathcal{P}^i})$ is positive. Otherwise, it can be the case that both $\pi_{\theta^{i}}^i(a^i|s,a^{\mathcal{P}^i})$ and $A^{i}_\theta(s,a^{\mathcal{P}^i_+})$ are divergent when the gradient converges to zero.

We next present our convergence results for the standard policy gradient dynamics in Sections \ref{sec:Asymptotic convergence of the policy gradient dynamics}, where Assumptions \ref{assumption:discounted state visitation distribution} - \ref{assumption:PG} hold, with proofs in the appendix \ref{Proof of Theorem}.

\subsection{Asymptotic Convergence of the Tabular Softmax BN Policy Gradient Dynamics}
\label{sec:Asymptotic convergence of the policy gradient dynamics}
In Theorem \ref{theorem:Asymptotic convergence to Nash with gradient ascent}, we establish, under the tabular softmax BN policy parameterization, the asymptotic convergence to a Nash policy in a MPG of the standard policy gradient dynamics:
\begin{align}\label{eq:PG}
    \theta^i_{t+1} = \theta^i_{t} + \eta\nabla_{\theta^i} V_{\theta_{t}}(\mu) 
\end{align}
where $\eta$ is the fixed stepsize and the update is performed by every agent $i\in\mathcal{N}$. 

For each agent $i$, parent actions $a^{\mathcal{P}^i}$, and local action $a^i$, Equation \eqref{eq:PG} becomes 
\begin{align}\label{eq:BayesianPG}
    \theta^{i,t+1}_{s,a^{\mathcal{P}^i},a^i} &= \theta^{i,t}_{s,a^{\mathcal{P}^i},a^i} + \eta\nabla_{\theta^i_{s,a^{\mathcal{P}^i},a^i}} V^i_{\theta_{t}}(\mu)
\end{align}

\begin{theorem}[Asymptotic convergence of BN policy gradient, proof in Appendix \ref{proof:Nash}]
\label{theorem:Asymptotic convergence to Nash with gradient ascent}
Under Assumptions \ref{assumption:discounted state visitation distribution} - \ref{assumption:PG},
suppose every agent $i$ follows the policy gradient dynamics \eqref{eq:PG}, which results in the update dynamics \eqref{eq:BayesianPG} for each each agent $i$, parent actions $a^{\mathcal{P}^i}$, and local action $a^i$, with $\eta\leq \frac{(1-\gamma)^3}{8N(r_{\rm max}-r_{\rm min})}$, then the converged BN policy 
$(\pi_{\theta^1_*}^1, \cdots, \pi_{\theta^N_*}^N,\mathcal{G})$ is a Nash policy.
\end{theorem}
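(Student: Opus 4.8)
The plan is to mirror the single-agent softmax analysis of \cite{agarwal2021theory} and its product-policy multi-agent extensions \cite{zhang2022effect,chen2022convergence}, carrying each step through the BN factorization. First I would establish an ascent lemma: show that $V_\theta(\mu)$, viewed as a function of the concatenated parameter $\theta=(\theta^1,\dots,\theta^N)$, is $\beta$-smooth with $\beta=\frac{8N(r_{\rm max}-r_{\rm min})}{(1-\gamma)^3}$, where the factors $N$ and $(r_{\rm max}-r_{\rm min})$ arise from summing the per-agent softmax smoothness bound over all $N$ agents and rescaling the reward range of Assumption~\ref{assumption:Reward function is bounded}. With the stated $\eta\le 1/\beta$, the standard ascent inequality gives $V_{\theta_{t+1}}(\mu)\ge V_{\theta_t}(\mu)+\tfrac{\eta}{2}\|\nabla_\theta V_{\theta_t}(\mu)\|^2$; since $V$ is bounded above, the value sequence converges and $\sum_t\|\nabla_\theta V_{\theta_t}(\mu)\|^2<\infty$, so $\|\nabla_\theta V_{\theta_t}(\mu)\|\to 0$.

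Next I would pass to the limit. By Assumption~\ref{assumption:PG} the local policies converge, $\pi^i_{\theta^i_t}\to\pi^i_{\theta^i_*}=:\pi^i_*$, and since $d^\pi_\mu$, $Q_\pi$, and hence $A^i_\pi$ depend continuously on the policy at every $(s,a^{\mathcal{P}^i})$ with positive visitation, the gradient formula of Lemma~\ref{lemma:state-based tabular softmax multi-agent policy gradient} can be evaluated at the limit policy $\pi_*=\pi_{\theta_*}$. Combined with $\|\nabla V\|\to 0$, this forces, for every agent $i$ and every $(s,a^{\mathcal{P}^i})$ with $d^{\pi_*}_\mu(s,a^{\mathcal{P}^i})>0$ (positive whenever the parent-action marginal is, by Assumption~\ref{assumption:discounted state visitation distribution}), the stationarity condition $\pi^i_*(a^i|s,a^{\mathcal{P}^i})\,A^i_*(s,a^{\mathcal{P}^i},a^i)=0$; in particular $A^i_*(s,a^{\mathcal{P}^i},a^i)=0$ for every $a^i$ in the support of $\pi^i_*(\cdot|s,a^{\mathcal{P}^i})$.

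To convert stationarity into the Nash property of Definition~\ref{Nash policy} I would use a performance-difference identity specialized to a unilateral deviation $\bar\pi^i$ of a single agent. A key structural point is that, since $\mathcal{G}$ is a DAG, the parent actions $a^{\mathcal{P}^i}$ are generated by agents that do not depend on $a^i$, so the conditional law of $a^{\mathcal{P}^i}$ given $s$ is identical under $\pi_*$ and under $(\bar\pi^i,\pi^{-i}_*)$; marginalizing the joint advantage over the unaffected agents then collapses it exactly to the BN advantage $A^i_*$ of Lemma~\ref{lemma:state-based tabular softmax multi-agent policy gradient}, yielding
\[V_{\bar\pi^i,\pi^{-i}_*}(\mu)-V_{\pi_*}(\mu)=\frac{1}{1-\gamma}\,\E_{s\sim d^{\bar\pi^i,\pi^{-i}_*}_\mu}\,\E_{a^{\mathcal{P}^i}}\Big[\textstyle\sum_{a^i}\bar\pi^i(a^i|s,a^{\mathcal{P}^i})\,A^i_*(s,a^{\mathcal{P}^i},a^i)\Big].\]
Every $(s,a^{\mathcal{P}^i})$ occurring here has positive visitation (Assumption~\ref{assumption:discounted state visitation distribution} gives $d^{\bar\pi^i,\pi^{-i}_*}_\mu(s)>0$, and the parent marginal is the invariant one), so it remains only to show $A^i_*(s,a^{\mathcal{P}^i},a^i)\le 0$ for \emph{all} $a^i$, after which the right-hand side is $\le 0$ for every $\bar\pi^i$ and the Nash inequality follows.

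The main obstacle is exactly this upgrade from ``$A^i_*=0$ on the support'' to ``$A^i_*\le 0$ everywhere,'' i.e.\ ruling out an off-support action $a^i$ (with $\pi^i_*(a^i|s,a^{\mathcal{P}^i})=0$) carrying strictly positive advantage. This cannot follow from stationarity alone, since such configurations are themselves stationary points of the softmax gradient; it must instead exploit that $\pi_*$ is the limit of the ascent trajectory \eqref{eq:BayesianPG}. I would argue by contradiction along the dynamics: if $A^i_t(s,a^{\mathcal{P}^i},a^i)\to c>0$ while $\pi^i_t(a^i|\cdot)\to 0$, then \eqref{eq:BayesianPG} keeps driving the logit $\theta^i_{s,a^{\mathcal{P}^i},a^i}$ upward relative to the surviving support logits (whose advantages tend to $0$), so the ratio $\pi^i_t(a^i|\cdot)/\pi^i_t(b^i|\cdot)$ cannot vanish, contradicting $\pi^i_t(a^i|\cdot)\to 0$. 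Making this rate comparison rigorous is the delicate part, and it is precisely here that Assumption~\ref{assumption:PG} is needed: it guarantees the advantages converge on positively-visited $(s,a^{\mathcal{P}^i})$ so that the sign argument is well posed, while zero-visitation parent configurations, where $\pi^i$ and $A^i$ could both diverge, never enter the performance-difference sum and can safely be ignored.
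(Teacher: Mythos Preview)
Your plan matches the paper's proof in overall architecture: establish $\beta$-smoothness with the same constant, deduce that $\|\nabla V_{\theta_t}\|\to 0$, pass to the limit using Assumption~\ref{assumption:PG}, then apply a performance-difference identity for a unilateral deviation in which the parent-action marginal is unchanged (since $\mathcal{P}^i$ sits upstream of $i$ in the DAG) so that the joint advantage collapses to the BN advantage $A^i_*$ of Lemma~\ref{lemma:state-based tabular softmax multi-agent policy gradient}. The reduction to showing $A^i_*(s,a^{\mathcal{P}^i},a^i)\le 0$ for all $a^i$ at every positively-visited $(s,a^{\mathcal{P}^i})$, and your observation that zero-visitation parent configurations drop out of the sum, are exactly what the paper uses.

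The gap is in your sketch for ruling out a positive-advantage off-support action. Your argument compares the logit increment of the suspect action $a^i$ (increment $\propto \pi^i_t(a^i)\,A^i_t(a^i)$ with $A^i_t(a^i)\to c>0$ but $\pi^i_t(a^i)\to 0$) against that of a surviving support action $b^i\in I_0$ (increment $\propto \pi^i_t(b^i)\,A^i_t(b^i)$ with $\pi^i_t(b^i)$ bounded away from $0$ but $A^i_t(b^i)\to 0$). Both increments tend to $0$, and nothing in your setup pins down which vanishes faster, so the claim that $\theta^i_t(a^i)-\theta^i_t(b^i)$ stays bounded below does not follow directly. The paper does not resolve this by a head-to-head rate comparison. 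Instead it partitions $I_0^{s,a^{\mathcal{P}^i},i}$ into the actions that \emph{always} dominate $\pi^i_t(a^i_+)$ after some time (call this $B_0$) and those that at some point fall below (which then stay below by a monotonicity lemma and hence have probability tending to $0$); it shows $\sum_{a\in B_0}\theta^i_t(a)\to\infty$, and then uses the identity $\sum_{a}\pi^i_t(a)\,A^i_t(a)=0$ to prove that the aggregate gradient $\sum_{a\in B_0}\partial V_t/\partial\theta^i_{s,a^{\mathcal{P}^i},a}$ is eventually strictly negative, contradicting the divergence of $\sum_{a\in B_0}\theta^i_t(a)$. So your proposal is on the right track, but the mechanism you describe for the ``delicate part'' should be replaced by this partition-and-aggregate-gradient argument rather than a direct per-action rate comparison.
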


The main trick of our proof for Theorem \ref{theorem:Asymptotic convergence to Nash with gradient ascent} is to view the parent actions $a^{\mathcal{P}^i}$ as part of the state, i.e., $d_\mu^{\pi_\theta}(s,a^{\mathcal{P}^i})$ becomes the new state visitation measure for the augmented state $(s, a^{\mathcal{P}^i})$. After this transformation, the update dynamics in \ref{lemma:state-based tabular softmax multi-agent policy gradient} is resemble to the ones for the product policy,i.e., $\mathcal{G}:=(\mathcal{N}, \emptyset)$, and thus straightforwardly generalize their results for the product joint policy to the BN policy. However, the problem with this formulation of new state $(s, a^{\mathcal{P}^i})$ is that $d_\mu^{\pi_\theta}(s,a^{\mathcal{P}^i}) = d_\mu^{\pi_\theta}(s)\textstyle\sum_{a^{-\mathcal{P}^i}}\pi_{\theta}(a^{-\mathcal{P}^i},a^{\mathcal{P}^i}|s)$ can be zero even if the state visitation measure $d_\mu^{\pi_\theta}(s)$ is strictly positive. This is the main reason we cannot establish results stronger than the ones obtained in  \cite{zhang2022effect}, even for the fully connected Bayesian network with $N(N-1)/2$ edges which intuitively behave similar to the single-agent setting  \cite{agarwal2021theory} and should therefore result in the optimal policy than only a Nash policy.  

\begin{assumption} 
\label{assumption:augumented_state_visitation}
Any augmented state $(s, a^{\mathcal{P}^i})$ has positive visitation measure, i.e., $d_\mu^{\pi_\theta}(s,a^{\mathcal{P}^i})>0$. 
\end{assumption}

\begin{definition}[Fully-correlated BN policy]
\label{definition:epsilon-Nash policy}
A BN policy $(\pi,(\mathcal{N},\mathcal{E}))$ is fully-correlated if $\lvert\mathcal{E}\rvert=N(N-1)/2$, the maximum number of edges in a DAG.
\end{definition}

\begin{corollary}[Asymptotic convergence of BN policy gradient to optimal fully-correlated BN joint policy, proof in Appendix \ref{proof to corollary 1}]
\label{theorem:Asymptotic convergence to optimal with gradient ascent}
Under Assumptions \ref{assumption:discounted state visitation distribution} - \ref{assumption:PG} 
and additional Assumption \ref{assumption:augumented_state_visitation} that assumes positive visitation measure for any augmented state, suppose every agent $i\in\mathcal{N}$ follows the policy gradient dynamics \eqref{eq:PG}, which results in the update dynamics \eqref{eq:BayesianPG} for each each agent $i$, parent actions $a^{\mathcal{P}^i}$, and local action $a^i$, with $\eta\leq \frac{(1-\gamma)^3}{8N(r_{\rm max}-r_{\rm min})}$, then the converged fully-correlated BN policy 
$(\pi_{\theta^1_*}^1, \cdots, \pi_{\theta^N_*}^N,\mathcal{G})$ is an optimal policy.
\end{corollary}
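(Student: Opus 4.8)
The plan is to exploit the special nested structure of a fully-correlated DAG together with the extra positivity granted by Assumption \ref{assumption:augumented_state_visitation}, so that the per-agent stationarity conditions telescope into the global Bellman optimality equation. First I would record the combinatorial fact that a DAG attaining the maximal $N(N-1)/2$ edges is a complete DAG, i.e.\ a transitive tournament; hence, after relabeling the agents along a topological order, we may assume $\mathcal{P}^i=\{1,\dots,i-1\}$ and therefore $\mathcal{P}^i_+=\{1,\dots,i\}=\mathcal{P}^{i+1}$ for every $i$. This nesting is the crux: the augmented actions $a^{\mathcal{P}^i_+}$ of agent $i$ coincide exactly with the parent actions $a^{\mathcal{P}^{i+1}}$ of agent $i+1$, so the conditional $Q$-values form a single chain running from $V_\theta(s)$ (the case $\mathcal{P}^1=\emptyset$) up to $Q_\theta(s,a)$ (all actions fixed). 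Throughout I write $a^{1:i}:=(a^1,\dots,a^i)$.

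Next I would use Assumption \ref{assumption:PG} to pass to the limit point $\theta_*$, where the gradient in Lemma \ref{lemma:state-based tabular softmax multi-agent policy gradient} vanishes, so that for every $i,s,a^{\mathcal{P}^i},a^i$,
\[ d_\mu^{\pi_\theta}(s,a^{\mathcal{P}^i})\,\pi^i_{\theta^i}(a^i|s,a^{\mathcal{P}^i})\,A^i_\theta(s,a^{\mathcal{P}^i},a^i)=0 . \]
This is where Assumption \ref{assumption:augumented_state_visitation} enters: because $d_\mu^{\pi_\theta}(s,a^{\mathcal{P}^i})>0$ for \emph{every} augmented state, not merely those on the visited support, the same asymptotic softmax argument behind Theorem \ref{theorem:Asymptotic convergence to Nash with gradient ascent} now applies uniformly and yields the pointwise inequality $A^i_\theta(s,a^{\mathcal{P}^i},a^i)\le 0$ for all $a^i$ and all $(s,a^{\mathcal{P}^i})$, rather than only in the averaged sense that defines a Nash policy.

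Finally I would telescope the advantages along the chain. For an arbitrary joint action $a=(a^1,\dots,a^N)$,
\[ Q_\theta(s,a)-V_\theta(s)=\sum_{i=1}^{N}\bigl(Q_\theta(s,a^{1:i})-Q_\theta(s,a^{1:i-1})\bigr)=\sum_{i=1}^{N}A^i_\theta(s,a^{\mathcal{P}^i},a^i)\le 0 , \]
so $Q_\theta(s,a)\le V_\theta(s)$ for every $a$ and $s$. Combining this with
\[ V_\theta(s)=\E_{a\sim\pi_\theta}\!\left[Q_\theta(s,a)\right]\le\max_{a}Q_\theta(s,a)\le V_\theta(s) \]
forces $V_\theta(s)=\max_{a}Q_\theta(s,a)$ for all $s$, which is exactly the Bellman optimality equation; hence $\pi_\theta$ is an optimal joint policy, as claimed.

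The step I expect to be the main obstacle is the second one: upgrading the vanishing-gradient identity to the pointwise bound $A^i_\theta\le 0$ at the (possibly boundary) limit point. Since softmax policies are strictly positive only for finite parameters, one cannot simply cancel $\pi^i_{\theta^i}$ in the limit; instead one must reuse the asymptotic analysis supporting Theorem \ref{theorem:Asymptotic convergence to Nash with gradient ascent} (convergence of the advantages along the dynamics and the fact that no deviating action can retain a strictly positive advantage in the limit). The role of Assumption \ref{assumption:augumented_state_visitation} is precisely to make that conclusion hold at \emph{every} augmented state, which is exactly what closes the Nash-versus-optimal gap flagged in the discussion following Theorem \ref{theorem:Asymptotic convergence to Nash with gradient ascent}; once the bound holds everywhere, the telescoping and the squeeze are routine.
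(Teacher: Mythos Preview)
Your proposal is correct and follows essentially the same route as the paper's proof: both exploit the nesting $\mathcal{P}^i_+=\mathcal{P}^{i+1}$ of the complete DAG, invoke Assumption \ref{assumption:augumented_state_visitation} so that the asymptotic analysis behind Theorem \ref{theorem:Asymptotic convergence to Nash with gradient ascent} (the paper's Lemma establishing $I_+^{s,a^{\mathcal{P}^i},i}=\emptyset$) yields $A^i_{\theta_*}\le 0$ at \emph{every} augmented state, and then telescope $Q_{\theta_*}(s,a)\le V_{\theta_*}(s)$ to conclude optimality. The only cosmetic difference is that you write the telescoping as a single sum of advantages, whereas the paper alternates the inequalities $Q^{\pi_{\theta_*},i}(s,a^{\mathcal{P}^i},a^i)\le Q^{\pi_{\theta_*},i}(s,a^{\mathcal{P}^i})$ with the identities $Q^{\pi_{\theta_*},i}(s,a^{\mathcal{P}^i})=Q^{\pi_{\theta_*},i-1}(s,a^{\mathcal{P}^{i-1}},a^{i-1})$; these are the same chain.
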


\begin{figure}[t]
\begin{center}
\centerline{\includegraphics[width=\columnwidth]{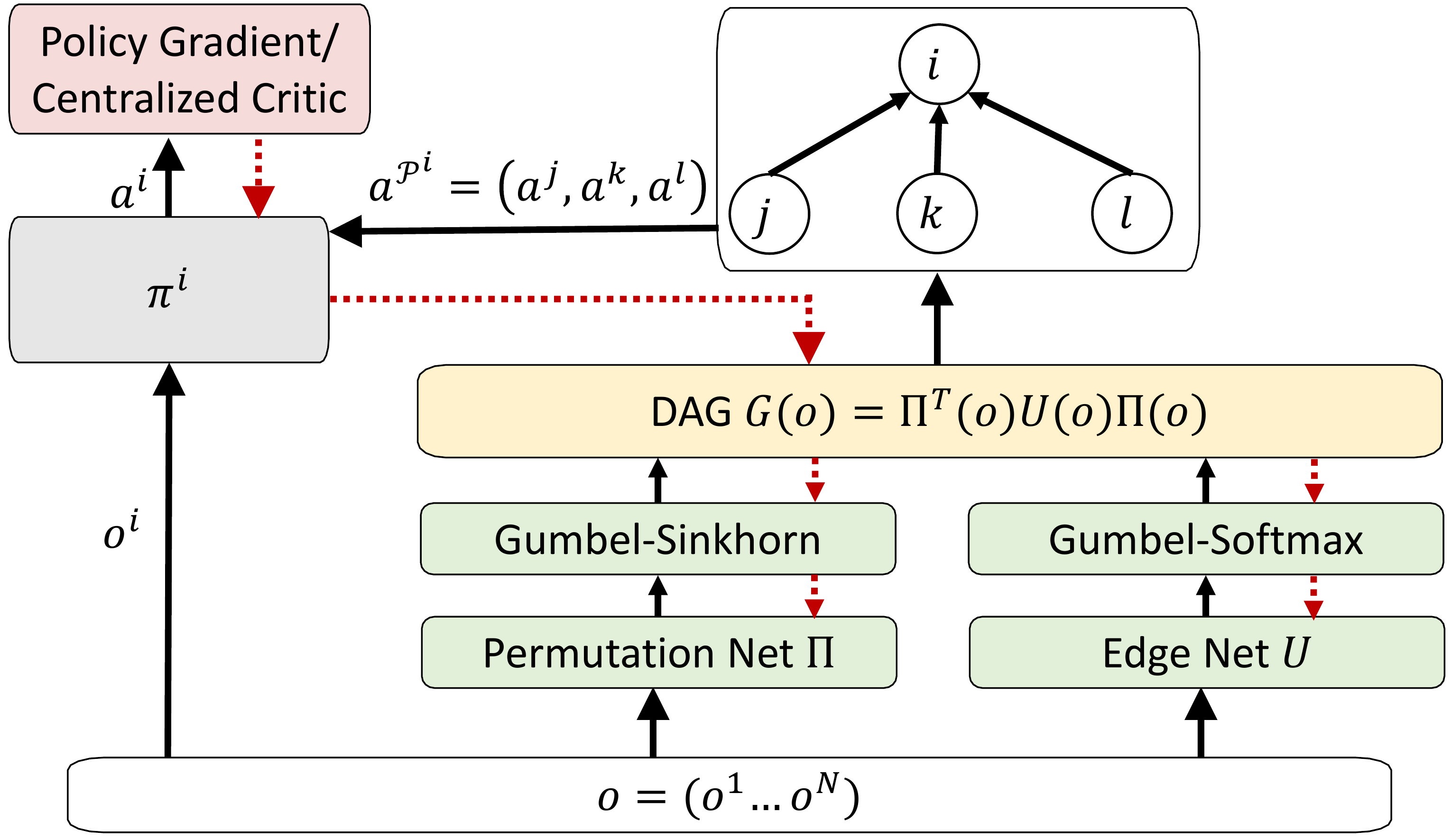}}
\caption{Our architecture for BN joint policy which includes each agent $i$'s policy $\pi^i$ and a differentiable DAG learner. DAG $G(o)=\Pi^T(o)U(o)\Pi(o)$ is generated by sending the joint local observation $o$ to Permutation Net $\Pi$ and Edge Net $U$.
Based on $G(o)$, agent $i$ requests actions $a^{\mathcal{P}^i}=(a^j,a^k,a^l)$ from its parents $(j,k,l)$, which, together with local observation $o^i$, are taken as input into agent $i$'s local policy $\pi^i$ to output $a^i$. During training, the gradient (shown in the red dotted lines) flows from $\pi^i$ to $G$, then to $\Pi$ and $U$. }
\label{fig:architecture_topology}
\end{center}
\vskip -0.3in
\end{figure}

\section{Practical Algorithm}
\label{sec:Practical Algorithm}
The convergence guarantee in Theorem \ref{theorem:Asymptotic convergence to Nash with gradient ascent} relies on global observability and the availability of the oracle value function, which is hard to apply in more complicated scenarios. In this section, we relax those assumptions and propose an end-to-end training framework which can augment any multi-agent actor-critic methods with a differentiable Bayesian network determining action dependencies among agents' local policies. 
Figure \ref{fig:architecture_topology} presents an overview of the our proposed neural architecture, consisting of the differentiable Bayesian network and the actor-critic networks as its main components that we describe below. 

\subsection{Differentiable Bayesian Network}
\label{sec:Differentiable Bayesian network}
The graph model $G$ takes the joint partial observation $o=\{o^i\}_{i\in\mathcal{N}}$ as input, and outputs a DAG $G(o)$ represented by an adjacency matrix, i.e., $G(o)[j,i]=1$ if and only if $j\in\mathcal{P}^i$. By using the same decomposition of DAG into the multiplication of permutation matrix and upper triangular matrix  \cite{charpentier2022differentiable} described in section \ref{sec:Releated Work}, $G$ consists of two sub-modules Permutation Net $\Pi$ and Edge Net $U$ which both takes the joint partial observation $o$ as input and output the logits $l_\Pi$ and $l_U$ for the permutation matrix and upper triangular matrix, respectively. We use the reparameterization trick Straight-Through
Gumbel-Softmax  \cite{jang2016categorical} and Gumbel-Sinkhorn  \cite{mena2018learning} to differentiably transform $l_\Pi$ and $l_U$ into the corresponding permutation matrix $\Pi(o)$ and upper triangular matrix $U(o)$. The resulting DAG $G(o)=\Pi^T(o)U(o)\Pi(o)$, where $\Pi(o)$ determines the topological ordering of the agents and $U(o)$ determines the structure of the outputted DAG.

\subsection{Actor and Critic Networks}
Our communication network is compatible with any multi-agent actor-critic architecture.
Our experiments mainly explore discrete actors which sample actions conditioning on local observation $o^i$ and parent actions $a^{\mathcal{P}^i}$, i.e., $a^i \sim \pi^i(\cdot|o^i,a^{\mathcal{P}^i})$. The critic takes the joint local observation or the environment provided global state as input. Both actor and critics are implemented by deep neural networks with details in the appendix \ref{table:Implementation details}.

\subsection{Training}
Critic $Q$ is trained to minimize TD error 
$\mathcal{L}_{\rm TD} =\mathbb{E}_{o_t,a_t,r_t,o_{t+1}}[(Q(o_t,a_t)-y_t)^2]
$,
where 
$o_t:=(o^1_t,...,o^N_t)$,
$a_t:=(a^1_t,...,a^N_t)$, 
and $y_t := r_t + \gamma Q(o_{t+1},a_{t+1})$ is the TD target. Actor $\pi^i$ can be updated by any multi-agent policy gradient algorithm, such as MAPPO $
    \mathcal{L}^i_{\rm actor} = \mathbb{E}_{o_t,a_t}[\text{log}\pi^i(a^i|o^i,a^{\mathcal{P}^i})A(o_t,a_t)]
$. Due to the differentiability enabled by Gumbel-Softmax and Gumbel-Sinkhorn, the gradient can flow from $\pi^i$ to DAG $G$, then to its sub-modules Permutation Net $\Pi$ and Edge Net $U$. The DAG Density of $G$ is defined as $\rho(G):=\frac{2}{|N(N-1)}\sum_{i,j\in\mathcal{N}} G[j,i]$, and is regularized by the term $\alpha|\rho(G)-\eta|$. This places a restriction on the sparsity of the learned DAG by rate $\eta$.

\section{Experiments}
\label{sec:Experiments}
Theorem \ref{theorem:Asymptotic convergence to Nash with gradient ascent} only guarantees that the policy gradient ascent converges to Nash, but does not guarantee the solution quality of the convergent. Our experiments, in the tabular softmax Bayesian setting, aim to see how well different (fixed) DAG topologies of the BN policy perform empirically and the reasons behind it. Then, in the sample-based setting, we want to see 1) How well our algorithm proposed in Section \ref{sec:Practical Algorithm} performed against baselines and ablations? 2) What are the potential meaning of the DAG learned by the context-aware differentiable DAG learner?
\subsection{Environments}
Our environments include
(1) Coordination Game, a small-size domain where we can afford computing exact policy gradient under tabular parameterization,
(2) Aloha, a domain where action correlations are intuitively helpful,
and (3) StarCraft II Micromanagement (SMAC), a common cooperative MARL benchmark that is more complicated.

\textbf{Coordination Game.} 
We use the version in \cite{chen2022convergence} with $N=2,3,5$ agents. The state space and action space are $\mathcal{S} = \mathcal{S}^1\times \cdots \times\mathcal{S}^N, \mathcal{A} = \mathcal{A}^1\times \cdots \times\mathcal{A}^N$, respectively, where $\forall i\leq N, \mathcal{S}^i \in\{0,1\},\mathcal{A}^i \in\{0,1\}$. It is a cooperative setting with the same reward for all the agents, which favors more agents in the same local state. The transition function for each agent $i$'s local state only depends on the local action: $P(s^i=0|a^i=0)=1-\epsilon$, $P(s^i=0|a^i=1)=\epsilon$, where $\epsilon=0.1$. The performance of the learned joint policy is measured by {\em price of anarchy} (POA) \cite{roughgarden2015intrinsic}, $\frac{V_\pi(\mu)}{\max_{\bar{\pi}} V_{\bar{\pi}}(\mu)}$, which is bounded in the range $[0,1]$. The convergence rate is captured by {\em Nash-Gap}, defined as $
    \mbox{Nash-gap}(\pi) := \max_{i}\left(\max_{\bar{\pi}^{i}} V_{\bar{\pi}^{i},\pi^{-i}}(\mu) - V_{\pi}(\mu)\right) 
$, where Nash policy has a Nash-Gap of zero.

\textbf{Aloha.} 
We use the version in  \cite{wang2022contextaware} with 10 agents (islands). 10 islands are stored in a $2\times5$ array, each of which has a backlog of messages to send. At each timestep, agents can either choose to send or not send. The goal is to send as many messages as possible without colliding with the ones sent by the neighboring islands. At each timestep, with a probability of 0.6, a new message can be generated for each agent. For each successfully sent message without collision, all agents receive a 0.1 reward, and a -10 reward if with collision. 

\textbf{StarCraft II Micromanagement (SMAC).} 
SMAC \cite{samvelyan2019starcraft} has become one of the most popular MARL benchmarks. We choose the \textit{Super Hard} scenarios 6h\_vs\_8z and MMM2 to evaluate our proposed algorithm, which has 6 agents and 10 agents, respectively. 
\subsection{Baselines}
As baselines to compare against our context-aware DAG topology learning to bring in correlations between local policies, we consider the following DAGs that are fixed during training (i.e., no context-awareness). 
The \textbf{Fully-correlated} baseline has DAG $(\mathcal{N},\{(j,i)|i>j\})$, which have the maximum number of $(N(N-1)/2)$ edges for any DAG. \textbf{Uncorrelated} has DAG $(\mathcal{N},\emptyset)$, i.e., product policy. \textbf{Line-correlated} has DAG $(\mathcal{N},\{(j,i)|i=j+1\})$.
The DAGs of all baselines have a topological ordering of $(1,2, \cdots,  N)$, i.e., $\Pi$ defined in Section \ref{sec:Differentiable Bayesian network} is fixed as the identity matrix. 

\begin{figure}[t]
\begin{center}
\centerline{\includegraphics[width=\columnwidth]{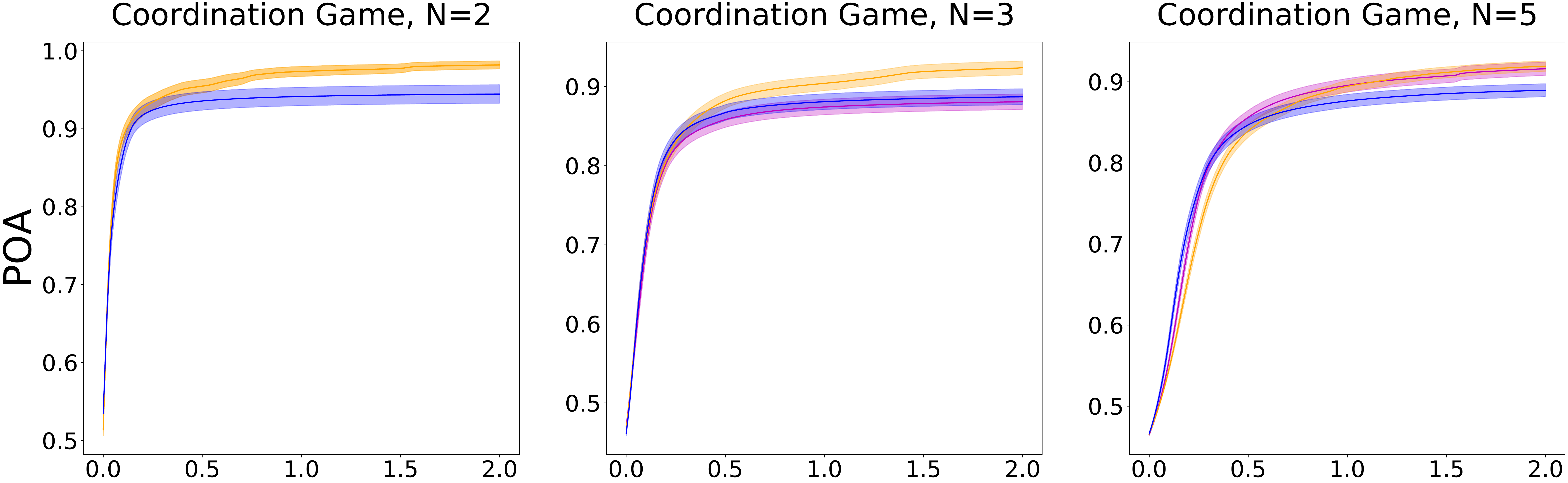}}
\centerline{\includegraphics[width=\columnwidth]{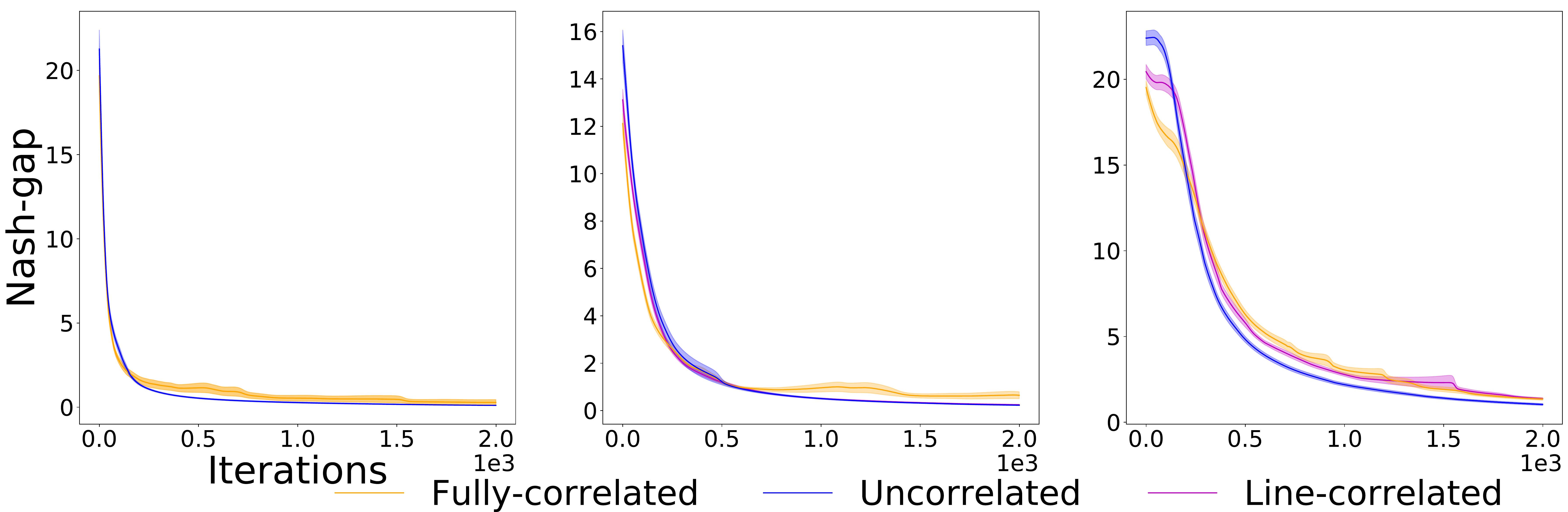}}
\caption{POA (top) and Nash-gap (bottom) under the tabular softmax BN policy gradient dynamics with various BN DAG topologies (means and standard errors over 50 random seeds). Note that with $N=2$ agents, Line-correlated and Fully-correlated are the same and thus have overlapping curves.}
\label{fig:Tabular-Byesian}
\end{center}
\vskip -0.3in
\end{figure}

\begin{figure*}[ht]
\begin{center}
\centerline{\includegraphics[width=\textwidth]{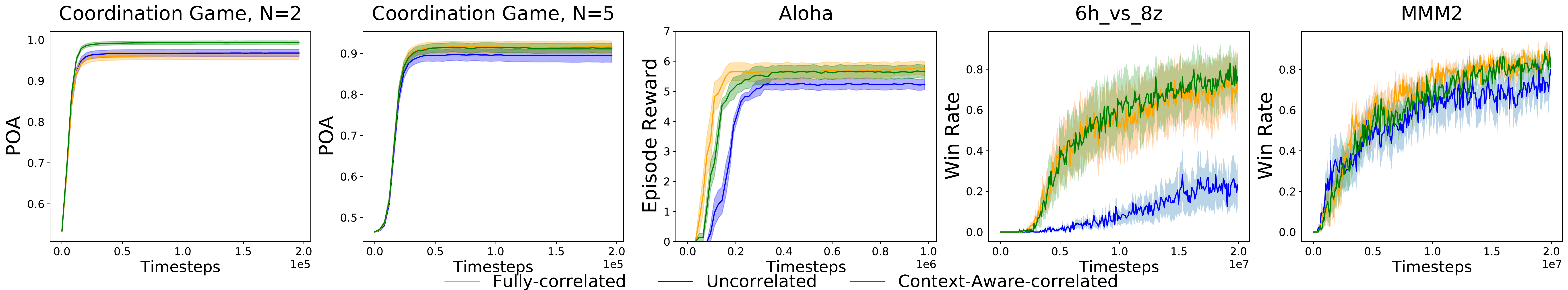}}
\vskip 0.1in
\centerline{\includegraphics[width=\textwidth]{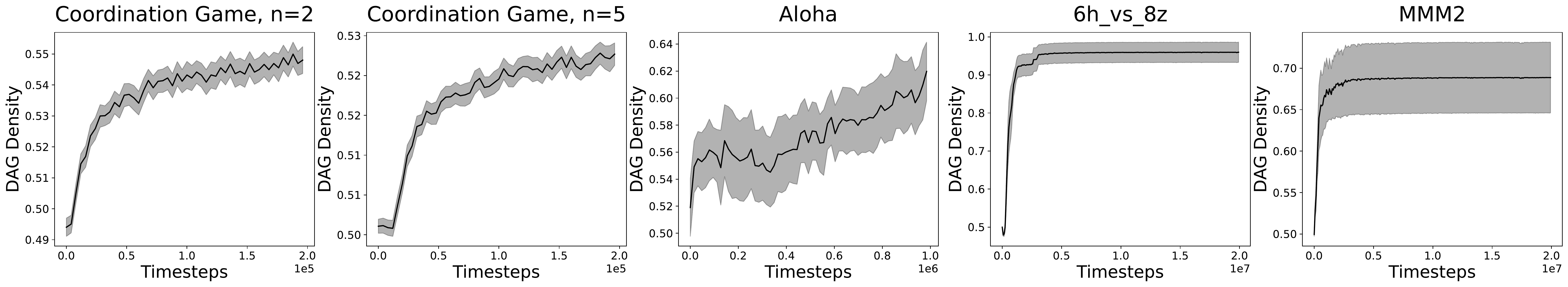}}
\caption{
\textit{Top:}
Performance of the learned context-aware-correlated against the Fully-correlated and Uncorrelated baselines. (means and standard errors over 50 random seeds for Coordination Game, 10 random seeds for Aloha, and 5 random seeds for 6h\_vs\_8z and MMM2.)
\textit{Bottom:}
Changes in DAG Density of the learned context-aware BN policy during training with no regularization.
}
\label{fig:Performance-DAG Density}
\end{center}
\vskip -0.3in
\end{figure*}

\subsection{Results of Fixed DAG Topologies with Tabular Exact Policy Gradients}
\label{sec:The effectiveness of different fixed topology in the tabular setting}
Figure \ref{fig:Tabular-Byesian} presents the POA and the Nash-gap of the algorithms under the tabular softmax parameterization with different DAG topologies. The results demonstrate that the Nash-gap indeed decreases and converges close to zero as proved in Theorem \ref{theorem:Asymptotic convergence to Nash with gradient ascent}. Fully-correlated consistently outperforms Line-correlated and Uncorrelated but does not converge to an optimal policy with POA of $1$, because Assumption \ref{assumption:augumented_state_visitation} is violated, i.e., some $(s, a^{\mathcal{P}^i})$ has visitation probability converges to zero. On the other hand, the convergence rate of Fully-correlated is the slowest and one possible reason is that it has the most number of parameters. Line-correlated has a similar performance to Fully-correlated in scenarios with $N=2,5$ agents, but it has poor performance in the scenario with 3 agents. This illustrates the fact that fixed DAG topology is not desirable in all scenarios and can degenerate to the performance of Uncorrelated.  


\subsection{Results of Context-Aware DAG Topology Learning with Multi-Agent Actor-Critic Methods}
In this section, we run experiments to compare our context-aware DAG topology against the baselines in Coordination Game, where we assume global observability, and in Aloha and SMAC, where we assume partial observability. We relax the requirement of only sharing local action to also include local observations when finding beneficial, based on the context-aware DAG. Specifically, based on the context-aware DAG, the experiments in Aloha share both local actions and observations, whereas the ones in SMAC only share local actions. We implement the algorithms based on MAPPO without recurrency, i.e., the model only incorporates information from the current timestep instead of from the whole trajectory, in both actor and critic. To have the uniform dimensionality required by the MLP-based actor, we handle the actions of the agents not selected by the DAG as the parents by padding dummy vectors of zeros.

\subsubsection{Coordination Game} We run the experiments in the Coordination Game with $N=2,5$ under full observability, and no regularization (i.e., $\alpha=0$), plotted in Figure \ref{fig:Performance-DAG Density}(top). Remarkably, the result in $N=2$ shows that our context-aware DAG learning outperforms the Fully-correlated. One possible explanation is that the dynamic graph leads to sufficient exploration of the augmented state defined in \ref{assumption:augumented_state_visitation}, and thus results in better performance. The context-aware DAG topology performs similarly to Fully-correlated in $N=5$, and both outperform Uncorrelated. As shown in Figure \ref{fig:Performance-DAG Density}(bottom), the density of the unregularized learned context-aware DAG is increasing in both $N=2$ and $N=5$ scenarios, from 50\% to 55\% and 50\% to 53\%, respectively. 

\subsubsection{Coordination Game} We run the experiments in the Coordination Game with $N=2,5$ under full observability, and no regularization (i.e., $\alpha=0$), plotted in Figure \ref{fig:Performance-DAG Density}(top). Remarkably, the result in $N=2$ shows that our context-aware DAG learning outperforms the Fully-correlated. One possible explanation is that the dynamic graph leads to sufficient exploration of the augmented state defined in \ref{assumption:augumented_state_visitation}, and thus results in better performance. The context-aware DAG topology performs similarly to Fully-correlated in $N=5$, and both outperform Uncorrelated. As shown in Figure \ref{fig:Performance-DAG Density}(bottom), the density of the unregularized learned context-aware DAG is increasing in both $N=2$ and $N=5$ scenarios, from 50\% to 55\% and 50\% to 53\%, respectively.

\subsubsection{Aloha}
We run the experiments in the Aloha with $N=10$ under partial observability (each agent observes the backlog of its own messages), and no sparsity regularization (i.e., $\alpha=0$).
The results in Figure \ref{fig:Performance-DAG Density}(top) show that our context-aware DAG learning performs comparably to Fully-correlated, and both outperform Uncorrelated. Note that the initial policy at timestep $0$ with a random initialization will generate collisions resulting in large negative rewards. The policy will soon learn to avoid collisions, and we only show the performance when the policy can generate positive rewards. As shown in Figure \ref{fig:Performance-DAG Density}(bottom), the density of the unregularized learned context-aware DAG is also increasing from 50\% to 62\% which is larger than the ones learned in the Coordination Game. This suggests that the action dependencies in Aloha may be more important than the ones in the Coordination Game.

\textbf{Analysis: Learned DAG topologies.}
\begin{figure}[b]
\begin{center}
\centerline{\includegraphics[width=.97\columnwidth]{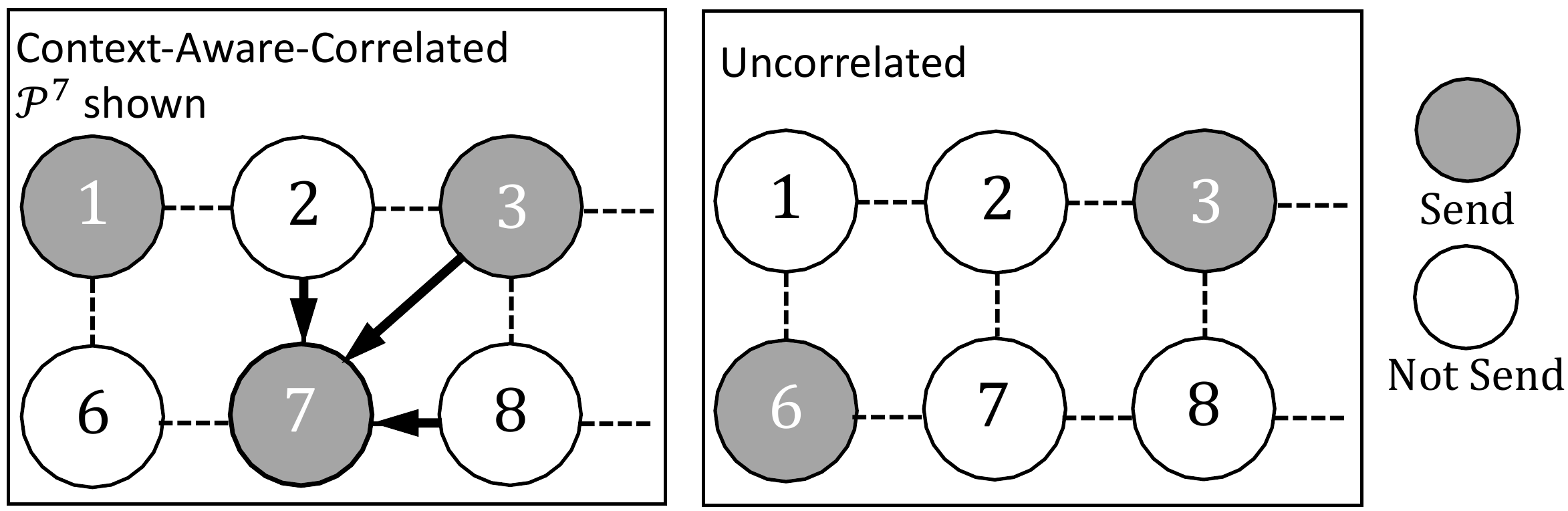}}
\caption{Learned DAG topology in Aloha. Only $\mathcal{P}^7$ is shown.}
\label{Aloha-Visualization}
\end{center}
\vskip -0.3in
\end{figure}
In the timestep shown in Figure \ref{Aloha-Visualization}, each agent has a backlog of one message to send. For Context-Aware-Correlated, guided by the learned topology, agent 7 obtains the extra parent action (and observation) dependencies from nearby agents 2 and agent 8 , which do not send, and far-away agent 3 which sends but causes no collision. Thus, agent 7 is therefore more confident to send its message. For Uncorrelated, agents need to be more careful to avoid collisions. Both agents 2 and 7 choose not to send in this case to make it safer for agent 3 and 6 to send. This results in one less message sent for the shown agents.

\subsubsection{SMAC}
We run the experiments in the \textit{Super Hard} SMAC scenario 6h\_vs\_8z and MMM2 under partial observability with no last actions stored, plotted in Figure \ref{fig:Performance-DAG Density}(top). 6h\_vs\_8z and MMM2 are noisier than the Coordination Game and Aloha, and we find no benefit of using permutation matrix $\Pi$ to change the topological ordering. Therefore, we use a fixed topological ordering where $\Pi$ is the identity matrix. The action dependencies in both scenarios are crucial, as we can see in Figure \ref{fig:Performance-DAG Density}(bottom) that the unregularized context-aware graph degenerates to an almost Full-Dependency graph in 6h\_vs\_8z and densely correlated graph with around 70\% DAG density in MMM2, respectively. Therefore, we regularize it to control the DAG density with an annealing strategy, which gradually decreases threshold $\eta$ and increases regularization weight $\alpha$. Specifically, in the first $a$\% training steps, the sparsity threshold $\eta$ is set to 1, which encourages the agents to learn that the action dependencies are useful. Then, from $a$\% total training steps to $b$\% total training steps, we decrease sparsity threshold $\eta$ from 1 to 0 uniformly in $l_\eta$ times. From $b$\% total training steps to $c$\% total training steps, we uniformly increase in $l_\alpha$ times the regularization weight $\alpha$ from 0.1 to 1 in 6h\_vs\_8z and 0.05 to 0.5 in MMM2. 

As shown in Figure \ref{Annealing}, from 0\% to a\% total training steps, the performance is similar to the Fully-correlated baseline in both scenarios, with DAG density quickly becoming close to 1. From a\% to b\%, as we decrease sparsity threshold $\eta$, the performance fluctuates but still be much better than Uncorrelated. From b\% to c\%, we increase the regularization weight $\alpha$. For 6h\_vs\_8z, the performance decreases quickly close to Uncorrelated, but then recovers quickly to be better than Uncorrelated. For MMM2, the transition is more smooth and the performance consistently beat the Uncorrelated baseline. This multi-phase regularization strategy results in purely uncorrelated policies as shown in Figure \ref{fig:Performance-DAG Density}(bottom), but achieves better performance than Uncorrelated, which is trained with purely uncorrelated policies during the whole training phase as shown in Figure \ref{Annealing}.

\begin{figure}[t]
\begin{center}
\centerline{\includegraphics[width=\columnwidth]{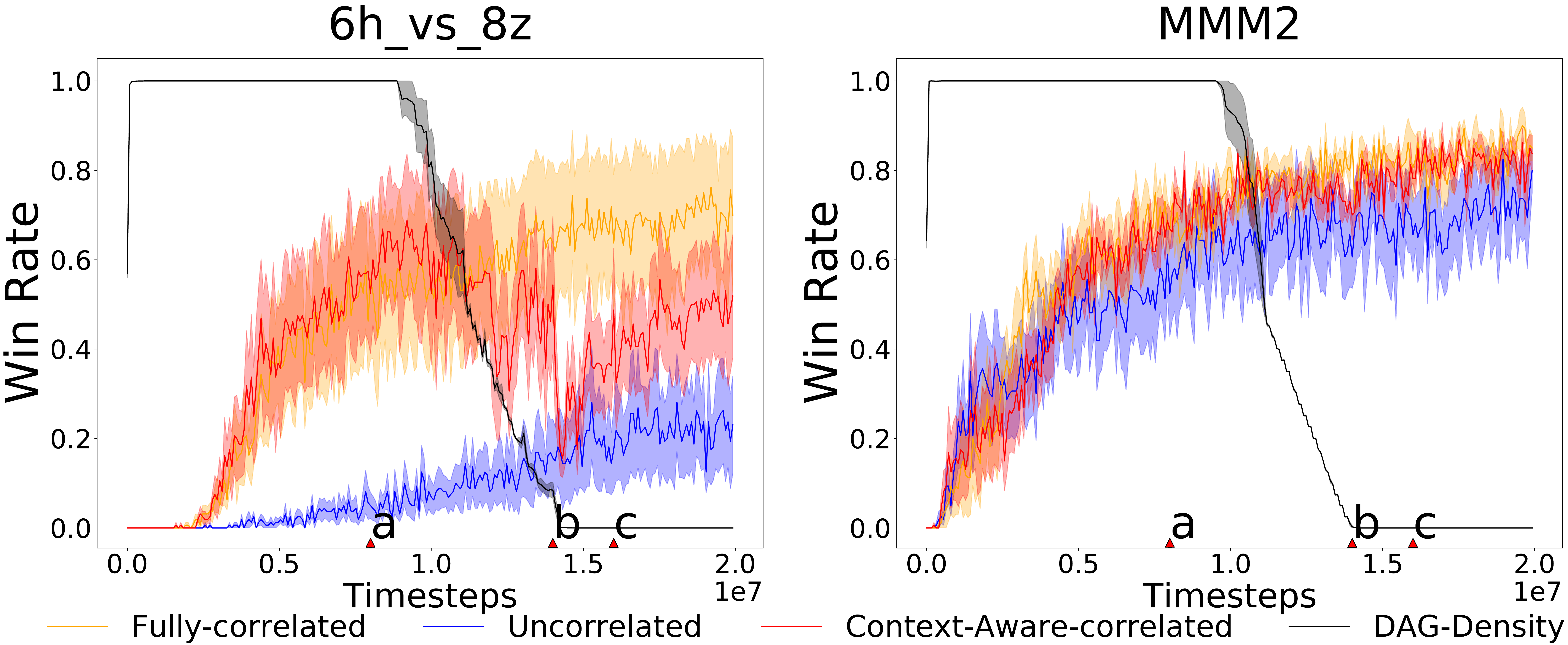}}
\caption{The performance of the sparsity regularized context-aware-correlated (with density annealing) in 6h\_vs\_8z and MMM2, against the Fully-correlated and Uncorrelated baselines. The black lines show the Changes in DAG Density of sparsity regularized context-aware-correlated (with density annealing) during training. }
\label{Annealing}
\end{center}
\vskip -0.3in
\end{figure}

\textbf{Analysis: Visibility.}
The dependency on the allies which are not visible is meaningless. Since action dependencies in both 6h\_vs\_8z and MMM2 are important, one simple strategy that an agent can learn to maintain a good performance while decreasing the DAG density is to only output dependency on the visible agents. This is indeed the case, with the percentage of the visible agents that an agent wants to depend on almost consistently increasing in Figure \ref{visiblity}. 
\begin{figure}[hb]
\begin{center}
\centerline{\includegraphics[width=\columnwidth]{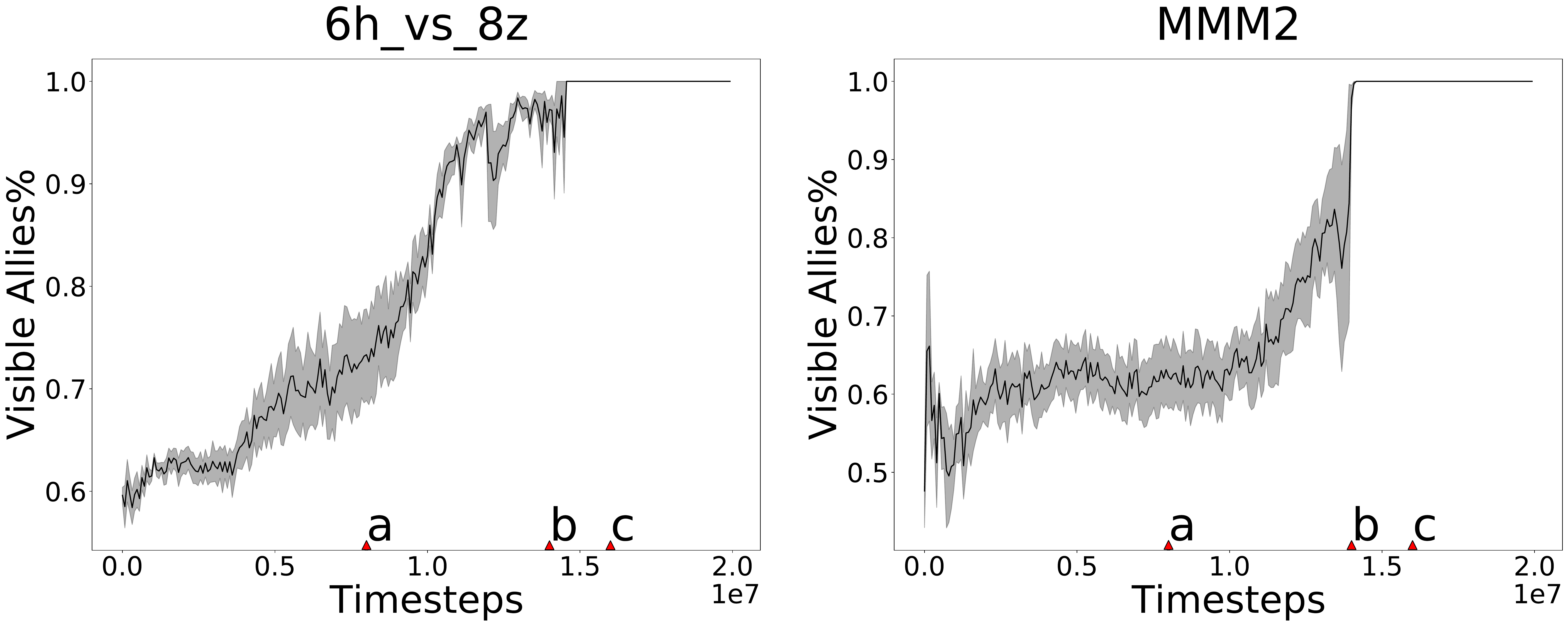}}
\caption{Visibility of allies during training
(context-aware-correlated with density annealing) in 6h\_vs\_8z and MMM2.}
\label{visiblity}
\end{center}
\vskip -0.3in
\end{figure}

\textbf{Analysis: Average health.}
As shown in Figure \ref{health}, for 6h\_vs\_8z, agents 5 and 6 tend to depend on the actions of agents with a relatively low health bar, while agent 2 tends to depend on the actions of agents with relatively high health. This may be due to that we fixed the topological ordering of $(1, \cdots, 6)$, so agents 5 and 6 can potentially have more dependencies, and it learns to depend on the actions of agents with low health. On the other hand, agent $2$ can only depend on the action of agent $1$ which may not always have low health. For MMM2, health does not differentiate agents' selections of parent actions until in the middle of a\% to b\%, where the increase of the regularization causes all agents to depend on actions of agents with relatively high health, with agent 7 to the extreme. 

\textbf{Analysis: Average distance.}
As shown in Figure \ref{distance}, in both scenarios, agent 6 tends to depend on the actions of agents in relatively long distances. In 6h\_vs\_8z, agent 2 tends to depend on the actions of agents with relatively short distances, while distance is relatively irrevelant for agents except agent 6 for the selection of parent actions. This also may be due to that we fixed the topological ordering of agent $(1, \cdots, N)$. Agent 6 can potentially have more dependencies, so it learns to depend on the actions of agents that are far away. On the other hand, agent $2$ can only depend on the action of agent $1$ which may be nearby sometimes. For MMM2, agent 8 consistently depends on the actions of agents from relatively longer distances, whereas agent 5 behaves the opposite.

\begin{figure}[t]
\begin{center}
\centerline{\includegraphics[width=\columnwidth]{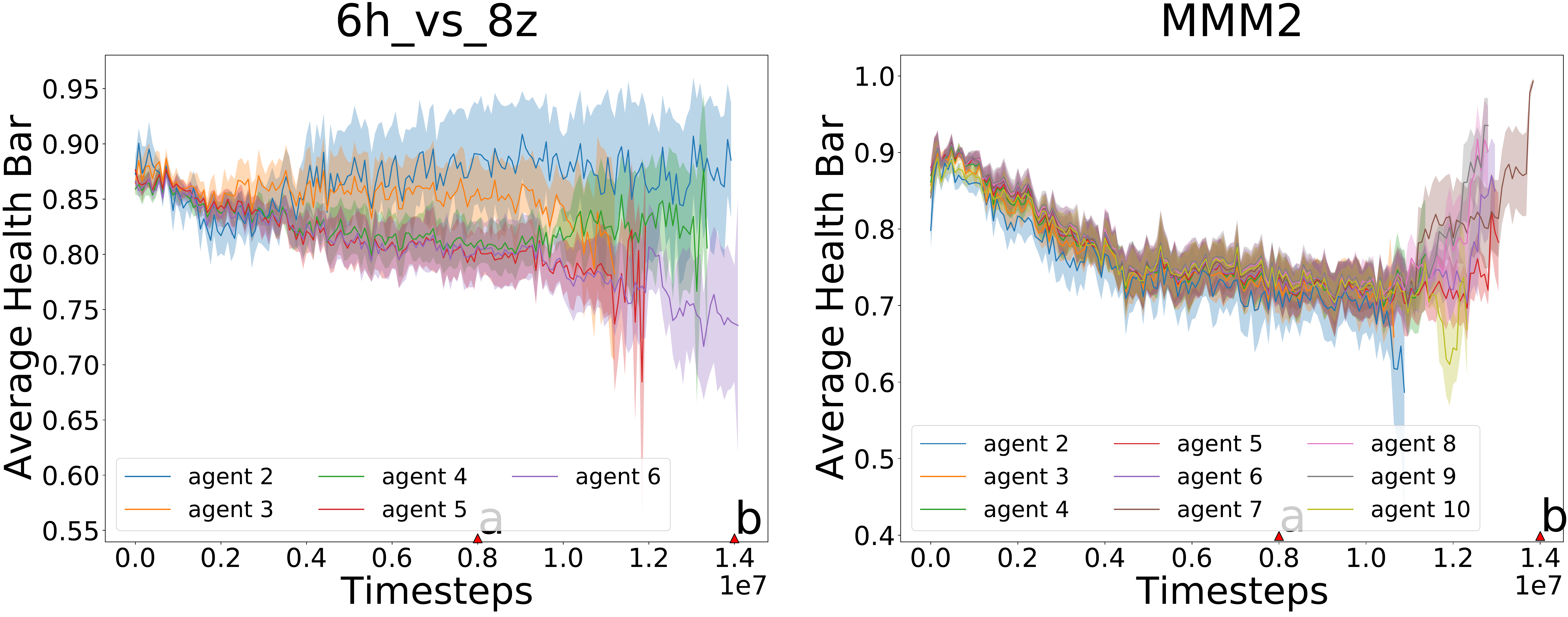}}
\caption{
Average health during training
(context-aware-correlated with sparsity annealing) in 6h\_vs\_8z and MMM2.
}
\label{health}
\end{center}
\vskip -0.3in
\end{figure}

\begin{figure}[ht]
\begin{center}
\centerline{\includegraphics[width=\columnwidth]{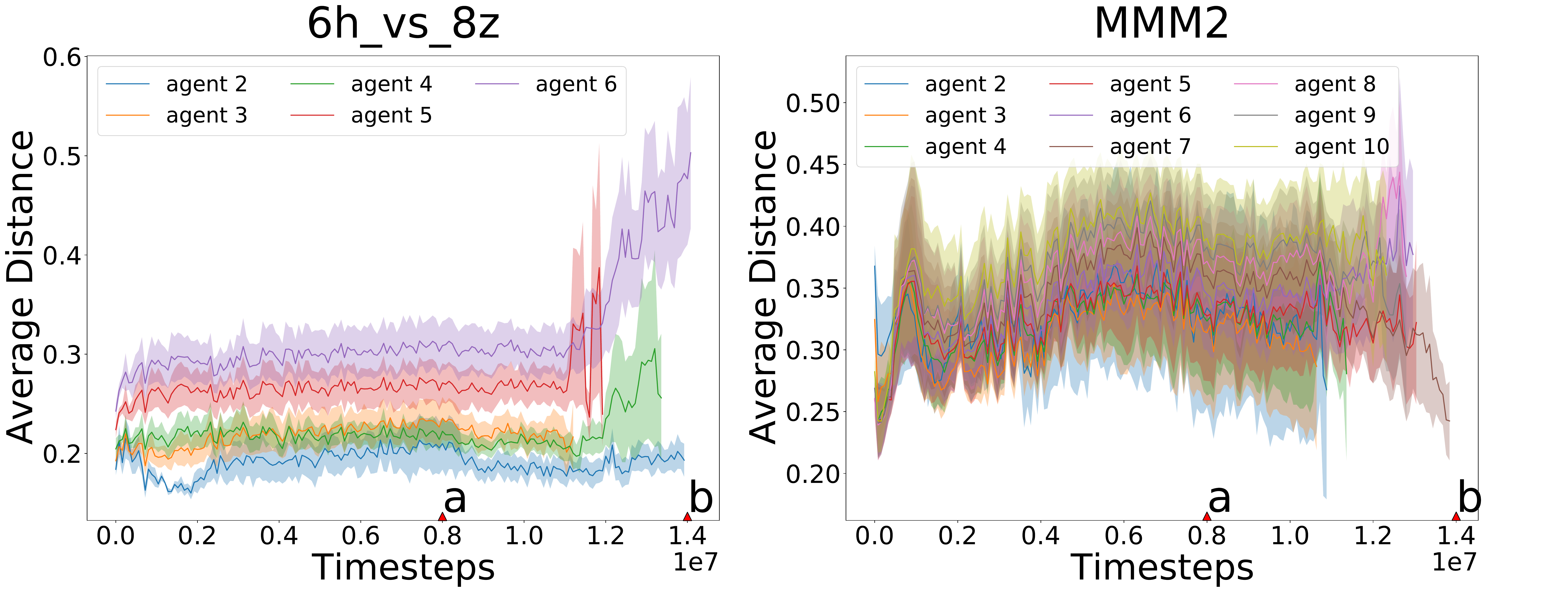}}
\caption{
Average parent distance during training
(context-aware-correlated with sparsity annealing) in 6h\_vs\_8z and MMM2.
}
\label{distance}
\end{center}
\vskip -0.3in
\end{figure}

\textbf{Analysis: Emergence of multi-modality for BN policy} Previous works \cite{baker2019emergent,lowe2019pitfalls,tang2021discovering} show that the emergence of diverse behaviors is prevalent in many multi-agent reinforcement learning problems. A recent work \cite{fu2022revisiting} shows the benefit of learning a multi-modal policy. Here we analyze the emergence of multi-modality for BN policy learned with the annealing strategy. To quantify multi-modality, we measure the KL Divergence between the distribution of the BN policy and the distribution of the same BN policy with empty DAG. The result in Figure \ref{multi-modality} shows that in 6h\_vs\_8z,
agent 6 with most possible parent actions has the largest multi-modality, whereas agent 2 with least possible parent actions except agent 1 has the smallest multi-modality. In MMM2, agent 8 emerges with the largest multi-modality, whereas agent 2 with least possible parent actions except agent 1 has the smallest multi-modality. It is also shown in both scenarios that increasing DAG density regularization also decreases multi-modality, where the purely decentralized one has zero multi-modality.

\begin{figure}[H]
\begin{center}
\centerline{\includegraphics[width=\columnwidth]{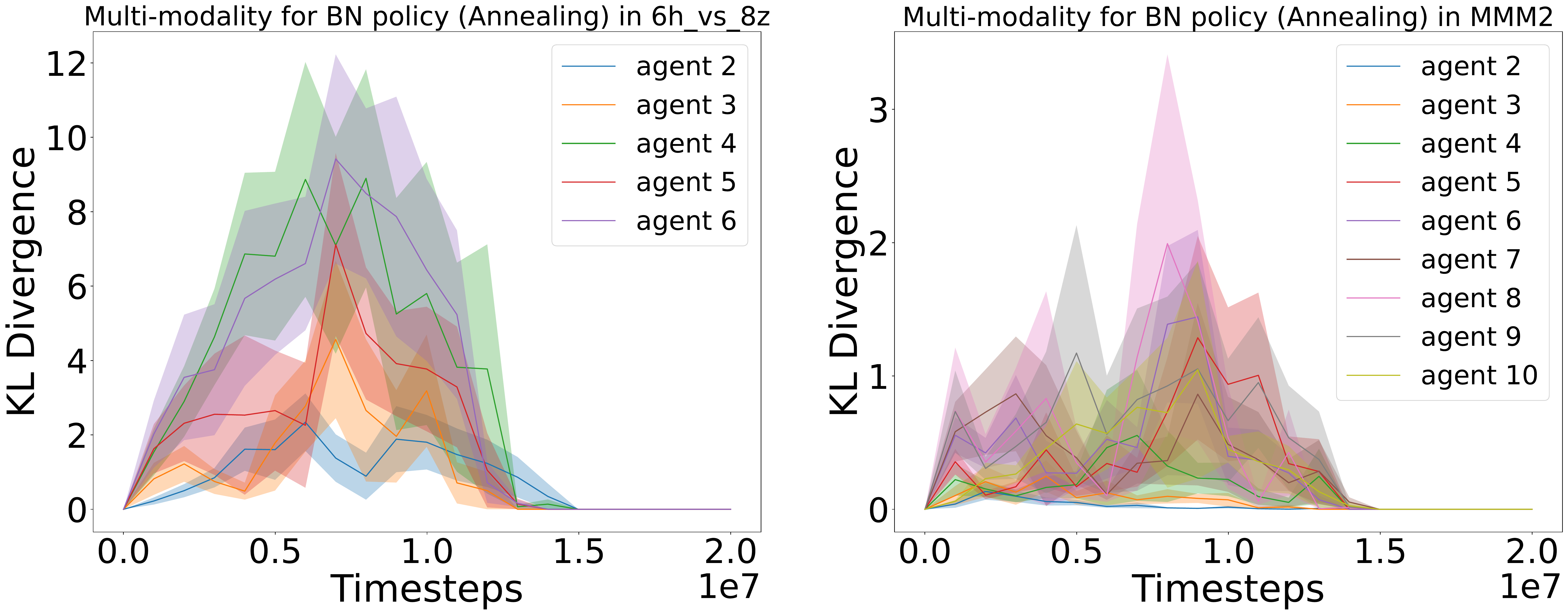}}
\caption{
Emergence of multi-modality for BN policy (annealing) in SMAC.
}
\label{multi-modality}
\end{center}
\vskip -0.3in
\end{figure}

\section{Conclusion}
\label{sec:Conclusion}
In this paper, we have motivated action correlations for cooperative MARL and proposed the notion of BN joint policy to introduce correlations.
We have then derived the BN policy gradient formula and proved the convergence to Nash policy asymptotically under the tabular softmax BN policy parameterization. Further, we have proposed a practical algorithm to adapt any multi-agent actor-critic method to realize the BN joint policy and empirically demonstrated the benefits of the proposed method. 

\section*{Acknowledgments}
Dingyang Chen acknowledges funding support from NSF award IIS-2154904.
Qi Zhang acknowledges funding support from NSF award IIS-2154904 and NSF CAREER award 2237963. Any opinions, findings, conclusions, or recommendations expressed here are those of the authors and do not necessarily reflect the views of the sponsors.

\bibliography{example_paper}

\begin{thebibliography}{26}
\providecommand{\natexlab}[1]{#1}
\providecommand{\url}[1]{\texttt{#1}}
\expandafter\ifx\csname urlstyle\endcsname\relax
  \providecommand{\doi}[1]{doi: #1}\else
  \providecommand{\doi}{doi: \begingroup \urlstyle{rm}\Url}\fi

\bibitem[Agarwal et~al.(2021)Agarwal, Kakade, Lee, and
  Mahajan]{agarwal2021theory}
Agarwal, A., Kakade, S.~M., Lee, J.~D., and Mahajan, G.
\newblock On the theory of policy gradient methods: Optimality, approximation,
  and distribution shift.
\newblock \emph{J. Mach. Learn. Res.}, 22\penalty0 (98):\penalty0 1--76, 2021.

\bibitem[Baker et~al.(2019)Baker, Kanitscheider, Markov, Wu, Powell, McGrew,
  and Mordatch]{baker2019emergent}
Baker, B., Kanitscheider, I., Markov, T., Wu, Y., Powell, G., McGrew, B., and
  Mordatch, I.
\newblock Emergent tool use from multi-agent autocurricula.
\newblock \emph{arXiv preprint arXiv:1909.07528}, 2019.

\bibitem[B{\"o}hmer et~al.(2020)B{\"o}hmer, Kurin, and
  Whiteson]{bohmer2020deep}
B{\"o}hmer, W., Kurin, V., and Whiteson, S.
\newblock Deep coordination graphs.
\newblock In \emph{International Conference on Machine Learning}, pp.\
  980--991. PMLR, 2020.

\bibitem[Boutilier(1999)]{Boutilier1999SequentialOA}
Boutilier, C.
\newblock Sequential optimality and coordination in multiagent systems.
\newblock In \emph{International Joint Conference on Artificial Intelligence},
  1999.

\bibitem[Callaway \& Hiskens(2010)Callaway and Hiskens]{callaway2010achieving}
Callaway, D.~S. and Hiskens, I.~A.
\newblock Achieving controllability of electric loads.
\newblock \emph{Proceedings of the IEEE}, 99\penalty0 (1):\penalty0 184--199,
  2010.

\bibitem[Charpentier et~al.(2022)Charpentier, Kibler, and
  G{\"u}nnemann]{charpentier2022differentiable}
Charpentier, B., Kibler, S., and G{\"u}nnemann, S.
\newblock Differentiable dag sampling.
\newblock \emph{arXiv preprint arXiv:2203.08509}, 2022.

\bibitem[Chen et~al.(2022)Chen, Zhang, and Doan]{chen2022convergence}
Chen, D., Zhang, Q., and Doan, T.~T.
\newblock Convergence and price of anarchy guarantees of the softmax policy
  gradient in markov potential games.
\newblock \emph{arXiv preprint arXiv:2206.07642}, 2022.

\bibitem[Chu et~al.(2019)Chu, Wang, Codec{\`a}, and Li]{chu2019multi}
Chu, T., Wang, J., Codec{\`a}, L., and Li, Z.
\newblock Multi-agent deep reinforcement learning for large-scale traffic
  signal control.
\newblock \emph{IEEE Transactions on Intelligent Transportation Systems},
  21\penalty0 (3):\penalty0 1086--1095, 2019.

\bibitem[Corke et~al.(2005)Corke, Peterson, and Rus]{corke2005networked}
Corke, P., Peterson, R., and Rus, D.
\newblock Networked robots: Flying robot navigation using a sensor net.
\newblock In \emph{Robotics research. The eleventh international symposium},
  pp.\  234--243. Springer, 2005.

\bibitem[Fox et~al.(2022)Fox, Mcaleer, Overman, and
  Panageas]{fox2022independent}
Fox, R., Mcaleer, S.~M., Overman, W., and Panageas, I.
\newblock Independent natural policy gradient always converges in markov
  potential games.
\newblock In \emph{International Conference on Artificial Intelligence and
  Statistics}, pp.\  4414--4425. PMLR, 2022.

\bibitem[Fu et~al.(2022)Fu, Yu, Xu, Yang, and Wu]{fu2022revisiting}
Fu, W., Yu, C., Xu, Z., Yang, J., and Wu, Y.
\newblock Revisiting some common practices in cooperative multi-agent
  reinforcement learning.
\newblock \emph{arXiv preprint arXiv:2206.07505}, 2022.

\bibitem[Jang et~al.(2016)Jang, Gu, and Poole]{jang2016categorical}
Jang, E., Gu, S., and Poole, B.
\newblock Categorical reparameterization with gumbel-softmax.
\newblock \emph{arXiv preprint arXiv:1611.01144}, 2016.

\bibitem[Leonardos et~al.(2021)Leonardos, Overman, Panageas, and
  Piliouras]{leonardos2021global}
Leonardos, S., Overman, W., Panageas, I., and Piliouras, G.
\newblock Global convergence of multi-agent policy gradient in markov potential
  games.
\newblock \emph{arXiv preprint arXiv:2106.01969}, 2021.

\bibitem[Lowe et~al.(2017)Lowe, Wu, Tamar, Harb, Pieter~Abbeel, and
  Mordatch]{lowe2017multi}
Lowe, R., Wu, Y.~I., Tamar, A., Harb, J., Pieter~Abbeel, O., and Mordatch, I.
\newblock Multi-agent actor-critic for mixed cooperative-competitive
  environments.
\newblock \emph{Advances in neural information processing systems}, 30, 2017.

\bibitem[Lowe et~al.(2019)Lowe, Foerster, Boureau, Pineau, and
  Dauphin]{lowe2019pitfalls}
Lowe, R., Foerster, J., Boureau, Y.-L., Pineau, J., and Dauphin, Y.
\newblock On the pitfalls of measuring emergent communication.
\newblock \emph{arXiv preprint arXiv:1903.05168}, 2019.

\bibitem[Mena et~al.(2018)Mena, Belanger, Linderman, and
  Snoek]{mena2018learning}
Mena, G., Belanger, D., Linderman, S., and Snoek, J.
\newblock Learning latent permutations with gumbel-sinkhorn networks.
\newblock \emph{arXiv preprint arXiv:1802.08665}, 2018.

\bibitem[Peshkin et~al.(2001)Peshkin, Kim, Meuleau, and
  Kaelbling]{peshkin2001learning}
Peshkin, L., Kim, K.-E., Meuleau, N., and Kaelbling, L.~P.
\newblock Learning to cooperate via policy search.
\newblock \emph{arXiv preprint cs/0105032}, 2001.

\bibitem[Rashid et~al.(2018)Rashid, Samvelyan, Schroeder, Farquhar, Foerster,
  and Whiteson]{rashid2018qmix}
Rashid, T., Samvelyan, M., Schroeder, C., Farquhar, G., Foerster, J., and
  Whiteson, S.
\newblock Qmix: Monotonic value function factorisation for deep multi-agent
  reinforcement learning.
\newblock In \emph{International conference on machine learning}, pp.\
  4295--4304. PMLR, 2018.

\bibitem[Rogers et~al.(2009)Rogers, Farinelli, Stranders, and
  Jennings]{Rogers2009BoundedAD}
Rogers, A., Farinelli, A., Stranders, R., and Jennings, N.~R.
\newblock Bounded approximate decentralised coordination via the max-sum
  algorithm.
\newblock \emph{Artif. Intell.}, 175:\penalty0 730--759, 2009.

\bibitem[Roughgarden(2015)]{roughgarden2015intrinsic}
Roughgarden, T.
\newblock Intrinsic robustness of the price of anarchy.
\newblock \emph{Journal of the ACM (JACM)}, 62\penalty0 (5):\penalty0 1--42,
  2015.

\bibitem[Ruan et~al.(2022)Ruan, Du, Xiong, Xing, Li, Meng, Zhang, Wang, and
  Xu]{ruan2022gcs}
Ruan, J., Du, Y., Xiong, X., Xing, D., Li, X., Meng, L., Zhang, H., Wang, J.,
  and Xu, B.
\newblock Gcs: Graph-based coordination strategy for multi-agent reinforcement
  learning.
\newblock \emph{arXiv preprint arXiv:2201.06257}, 2022.

\bibitem[Samvelyan et~al.(2019)Samvelyan, Rashid, De~Witt, Farquhar, Nardelli,
  Rudner, Hung, Torr, Foerster, and Whiteson]{samvelyan2019starcraft}
Samvelyan, M., Rashid, T., De~Witt, C.~S., Farquhar, G., Nardelli, N., Rudner,
  T.~G., Hung, C.-M., Torr, P.~H., Foerster, J., and Whiteson, S.
\newblock The starcraft multi-agent challenge.
\newblock \emph{arXiv preprint arXiv:1902.04043}, 2019.

\bibitem[Tang et~al.(2021)Tang, Yu, Chen, Xu, Wang, Fang, Du, Wang, and
  Wu]{tang2021discovering}
Tang, Z., Yu, C., Chen, B., Xu, H., Wang, X., Fang, F., Du, S., Wang, Y., and
  Wu, Y.
\newblock Discovering diverse multi-agent strategic behavior via reward
  randomization.
\newblock \emph{arXiv preprint arXiv:2103.04564}, 2021.

\bibitem[Wang et~al.(2022)Wang, Zeng, Dong, Yang, Yu, and
  Zhang]{wang2022contextaware}
Wang, T., Zeng, L., Dong, W., Yang, Q., Yu, Y., and Zhang, C.
\newblock Context-aware sparse deep coordination graphs.
\newblock In \emph{International Conference on Learning Representations}, 2022.
\newblock URL \url{https://openreview.net/forum?id=wQfgfb8VKTn}.

\bibitem[Yu et~al.(2021)Yu, Velu, Vinitsky, Gao, Wang, Bayen, and
  Wu]{yu2021surprising}
Yu, C., Velu, A., Vinitsky, E., Gao, J., Wang, Y., Bayen, A., and Wu, Y.
\newblock The surprising effectiveness of ppo in cooperative multi-agent games,
  2021.

\bibitem[Zhang et~al.(2022)Zhang, Mei, Dai, Schuurmans, and
  Li]{zhang2022effect}
Zhang, R., Mei, J., Dai, B., Schuurmans, D., and Li, N.
\newblock On the effect of log-barrier regularization in decentralized softmax
  gradient play in multiagent systems.
\newblock \emph{arXiv preprint arXiv:2202.00872}, 2022.

\end{thebibliography}
\bibliographystyle{icml2023}

\newpage
\appendix
\onecolumn
\section{Proof of Theorem \ref{theorem:Asymptotic convergence to Nash with gradient ascent} and Corollary \ref{theorem:Asymptotic convergence to optimal with gradient ascent}}
\label{Proof of Theorem}

\begin{lemma}
\label{helper lemma for smoothness of V}
\begin{equation}\label{eqn:bound on Q}|Q^{\pi_{\Tilde{\theta}}}(s,a)-Q^{\pi_{{\theta}}}(s,a)|\leq \frac{r_{\rm max}-r_{\rm min}}{(1-\gamma)^2}\max_s\|\pi_{\Tilde{\theta_s}}-\pi_{{\theta_s}}\|_1\end{equation}
\begin{equation}\label{eqn:bound on EQ}|\overline{A}^{\pi_{\Tilde{\theta}},i}(s,a)-\overline{A}^{\pi_{{\theta}},i}(s,a)|\leq \frac{2(r_{\rm max}-r_{\rm min})}{(1-\gamma)^2}\max_s\|\pi_{\Tilde{\theta_s}}-\pi_{{\theta_s}}\|_1\end{equation}
where $\overline{A}^{\pi_{\theta,i}}(s,a)=Q^{\pi_{{\theta}}}(s,a)-\E_{\bar{a}^{i}\sim\pi_{{\theta^i}}^i(\cdot|s,a^{\mathcal{P}^i})}Q^{\pi_{{\theta}}}(s,\bar{a}^{i},a^{-i})$.
\end{lemma}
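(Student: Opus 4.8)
The plan is to prove both inequalities by reducing changes in the action-value function to changes in the policy, using the value difference (performance difference) lemma of \cite{agarwal2021theory} together with the boundedness of $Q$ implied by Assumption \ref{assumption:Reward function is bounded}.

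First I would establish \eqref{eqn:bound on Q}. Since the action $a$ is held fixed, the Bellman equation gives $Q^{\pi_{\Tilde{\theta}}}(s,a)-Q^{\pi_{\theta}}(s,a)=\gamma\,\E_{s'\sim P(\cdot|s,a)}\left[V^{\pi_{\Tilde{\theta}}}(s')-V^{\pi_{\theta}}(s')\right]$, so it suffices to bound the value gap $\abs{V^{\pi_{\Tilde{\theta}}}(s')-V^{\pi_{\theta}}(s')}$ uniformly in $s'$. Applying the value difference lemma, this gap equals $\frac{1}{1-\gamma}\E_{\hat s\sim \bar d^{\pi_{\Tilde{\theta}}}_{s'}}\left[\sum_a (\pi_{\Tilde{\theta}}(a|\hat s)-\pi_{\theta}(a|\hat s))\,Q^{\pi_{\theta}}(\hat s,a)\right]$, where $\bar d$ is the normalized discounted visitation. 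Because $\sum_a(\pi_{\Tilde{\theta}}-\pi_{\theta})=0$, I can center $Q^{\pi_{\theta}}$ by subtracting its midrange value, after which Hölder's inequality and the bound $Q_{\max}-Q_{\min}\le (r_{\max}-r_{\min})/(1-\gamma)$ give $\abs{\sum_a(\pi_{\Tilde{\theta}}-\pi_{\theta})Q^{\pi_{\theta}}}\le \frac{r_{\max}-r_{\min}}{2(1-\gamma)}\norm{\pi_{\Tilde{\theta}_{\hat s}}-\pi_{\theta_{\hat s}}}_1$. Taking the maximum over $\hat s$ and collecting the factors of $(1-\gamma)$ and $\gamma\le 1$ yields \eqref{eqn:bound on Q}, in fact with room to spare since the true constant carries an extra factor $\gamma/2$.

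Second, for \eqref{eqn:bound on EQ} I would first rewrite the centered quantity as $\overline{A}^{\pi_{\theta},i}(s,a)=\E_{\bar a^i\sim\pi^i_{\theta^i}(\cdot|s,a^{\mathcal{P}^i})}\left[Q^{\pi_{\theta}}(s,a)-Q^{\pi_{\theta}}(s,\bar a^i,a^{-i})\right]$ and then decompose $\overline{A}^{\pi_{\Tilde{\theta}},i}-\overline{A}^{\pi_{\theta},i}$ into (i) a term where only the $Q$'s change while the averaging law $\pi^i_{\Tilde{\theta}^i}$ is fixed, and (ii) a term where only agent $i$'s local conditional policy changes while $Q^{\pi_{\theta}}$ is fixed. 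Term (i) is an expectation of a difference of two $Q$-gaps, each controlled by \eqref{eqn:bound on Q}, contributing at most $\gamma\frac{r_{\max}-r_{\min}}{(1-\gamma)^2}\max_s\norm{\pi_{\Tilde{\theta}_s}-\pi_{\theta_s}}_1$. Term (ii) is again a policy difference weighted by a quantity whose range is at most $(r_{\max}-r_{\min})/(1-\gamma)$, so the same centering trick bounds it by $\frac{1}{2}\frac{r_{\max}-r_{\min}}{(1-\gamma)}\norm{\pi^i_{\Tilde{\theta}^i}-\pi^i_{\theta^i}}_1$. Summing the two contributions and using $\gamma\le 1$ together with $1/(1-\gamma)\ge 1$ produces the factor $2$ in \eqref{eqn:bound on EQ}.

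I expect the main obstacle to be term (ii): it is controlled by the $\ell_1$ distance between agent $i$'s local conditional policies $\pi^i_{\cdot}(\cdot|s,a^{\mathcal{P}^i})$, whereas the right-hand side of \eqref{eqn:bound on EQ} is stated in terms of the joint-policy distance $\max_s\norm{\pi_{\Tilde{\theta}_s}-\pi_{\theta_s}}_1$. Relating the two is immediate for product policies, where each local policy is a marginal of the joint and marginalization is an $\ell_1$ contraction; under the augmented-state viewpoint used for Theorem \ref{theorem:Asymptotic convergence to Nash with gradient ascent} (treating $(s,a^{\mathcal{P}^i})$ as the state), the BN local policy plays exactly the role of such a component, so the same contraction applies and closes the bound. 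Making this reduction precise, and confirming that conditioning on parent actions does not inflate the constant beyond $2$, is the only delicate point; the remainder is the routine value-difference-lemma calculation.
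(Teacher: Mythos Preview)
Your treatment of \eqref{eqn:bound on Q} matches the standard argument the paper simply cites from \cite{zhang2022effect}. The divergence is in \eqref{eqn:bound on EQ}.

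The paper does not use your (i)/(ii) decomposition. It applies the triangle inequality to the two summands in the definition of $\overline{A}$, bounding the first by $|Q^{\pi_{\Tilde\theta}}(s,a)-Q^{\pi_\theta}(s,a)|$ and the second,
\[
\Bigl|\E_{\bar a^i\sim\pi^i_{\Tilde\theta^i}(\cdot|s,a^{\mathcal{P}^i})}Q^{\pi_{\Tilde\theta}}(s,\bar a^i,a^{-i})-\E_{\bar a^i\sim\pi^i_{\theta^i}(\cdot|s,a^{\mathcal{P}^i})}Q^{\pi_\theta}(s,\bar a^i,a^{-i})\Bigr|,
\]
directly by $\max_a|Q^{\pi_{\Tilde\theta}}(s,a)-Q^{\pi_\theta}(s,a)|$. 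Two applications of \eqref{eqn:bound on Q} then give the factor $2$. The local conditional distance $\|\pi^i_{\Tilde\theta^i}(\cdot|s,a^{\mathcal{P}^i})-\pi^i_{\theta^i}(\cdot|s,a^{\mathcal{P}^i})\|_1$ never appears in the paper's argument.

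Your route, by contrast, isolates exactly that quantity in term (ii), and the way you propose to close it has a genuine gap. The map $\pi_\theta(\cdot|s)\mapsto\pi^i_{\theta^i}(\cdot|s,a^{\mathcal{P}^i})$ is \emph{conditioning}, not marginalisation, so the $\ell_1$ contraction you invoke does not apply. Concretely, for a two-agent chain with agent $1$ the parent of agent $2$, let $\pi^1(a^1{=}1|s)$ be of order $\epsilon$ under both parameters while $\pi^2_{\Tilde\theta^2}(\cdot|s,a^1{=}1)$ and $\pi^2_{\theta^2}(\cdot|s,a^1{=}1)$ differ maximally: the joint $\ell_1$ distance is $O(\epsilon)$ but the conditional distance at $a^1{=}1$ is $2$, so your term (ii) is $\Omega(1)$ and is not controlled by $\max_s\|\pi_{\Tilde\theta_s}-\pi_{\theta_s}\|_1$. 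The augmented-state viewpoint does not rescue this either, because the right-hand side of \eqref{eqn:bound on EQ} is stated for the joint policy over the original state $s$, not over $(s,a^{\mathcal{P}^i})$.

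If you want to align with the paper, drop the (i)/(ii) split and bound the expected-$Q$ difference directly by $\max_a|Q^{\pi_{\Tilde\theta}}-Q^{\pi_\theta}|$, then invoke \eqref{eqn:bound on Q} twice. You may notice that this displayed step, as written in the paper, is itself not obviously justified when the two averaging laws $\pi^i_{\Tilde\theta^i}$ and $\pi^i_{\theta^i}$ differ; but that is the argument the paper gives, and it sidesteps the conditional-versus-joint obstacle that your decomposition runs into.
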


\begin{proof}
Equation (\ref{eqn:bound on Q}) is proved in lemma 32 in \cite{zhang2022effect}.
\\For Equation (\ref{eqn:bound on EQ}), 
$$|A^{\pi_{\Tilde{\theta}},i}(s,a)-A^{\pi_{{\theta}},i}(s,a)|=|Q^{\pi_{\Tilde{\theta}}}(s,a)-\E_{\bar{a}^{i}\sim\pi_{\Tilde{\theta}^i}^i(\cdot|s,a^{\mathcal{P}^i})}Q^{\pi_{\Tilde{\theta}}}(s,\bar{a}^{i},a^{-i})-(Q^{\pi_{{\theta}}}(s,a)-\E_{\bar{a}^{i}\sim\pi_{{\theta^i}}^i(\cdot|s,a^{\mathcal{P}^i})}Q^{\pi_{{\theta}}}(s,\bar{a}^{i},a^{-i}))|$$
$$\leq |Q^{\pi_{\Tilde{\theta}}}(s,a)-Q^{\pi_{{\theta}}}(s,a)|+|\E_{\bar{a}^{i}\sim\pi_{\Tilde{\theta}^i}^i(\cdot|s,a^{\mathcal{P}^i})}Q^{\pi_{\Tilde{\theta}}}(s,\bar{a}^{i},a^{-i})-\E_{\bar{a}^{i}\sim\pi_{{\theta^i}}^i(\cdot|s,a^{\mathcal{P}^i})}Q^{\pi_{{\theta}}}(s,\bar{a}^{i},a^{-i})|$$
$$\leq |Q^{\pi_{\Tilde{\theta}}}(s,a)-Q^{\pi_{{\theta}}}(s,a)|+\max_a|Q^{\pi_{\Tilde{\theta}}}(s,a)-Q^{\pi_{{\theta}}}(s,a)|$$
By Equation (\ref{eqn:bound on Q}),
$$\leq \frac{r_{\rm max}-r_{\rm min}}{(1-\gamma)^2}\max_s\|\pi_{\Tilde{\theta_s}}-\pi_{{\theta_s}}\|_1+\frac{r_{\rm max}-r_{\rm min}}{(1-\gamma)^2}\max_s\|\pi_{\Tilde{\theta_s}}-\pi_{{\theta_s}}\|_1=\frac{2(r_{\rm max}-r_{\rm min})}{(1-\gamma)^2}\max_s\|\pi_{\Tilde{\theta_s}}-\pi_{{\theta_s}}\|_1$$
\end{proof}

\begin{lemma}
\label{Bound of BN PG}
$\|\nabla_{\theta^i}V(\Tilde{\theta})-\nabla_{\theta^i}V(\theta)\|_1 \leq\frac{8(r_{\rm max}-r_{\rm min})}{(1-\gamma)^3}\sum_{i=1}^N \|{\Tilde{\theta}^i}-{{\theta}^i}\|_2$
\end{lemma}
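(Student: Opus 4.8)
The plan is to bound the $\ell_1$ gradient difference coordinate by coordinate, starting from the closed form of Lemma \ref{lemma:state-based tabular softmax multi-agent policy gradient} and exploiting the paper's augmented-state viewpoint, in which $(s,a^{\mathcal{P}^i})$ plays the role of a state so that the argument parallels the product-policy smoothness proof of \cite{zhang2022effect}. Each coordinate $\frac{\partial V_\theta(\mu)}{\partial\theta^i_{s,a^{\mathcal{P}^i},a^i}}$ is, up to the prefactor $\frac{1}{1-\gamma}$, a product of three factors: the augmented visitation $d_\mu^{\pi_\theta}(s,a^{\mathcal{P}^i})$, the local probability $\pi^i_{\theta^i}(a^i\mid s,a^{\mathcal{P}^i})$, and the advantage $A^i_\theta(s,a^{\mathcal{P}^i},a^i)$. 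Writing $\tilde d,\tilde\pi,\tilde A$ for the quantities evaluated at $\tilde\theta$ and $d,\pi,A$ for those at $\theta$, I would subtract the two products and telescope via $\tilde d\tilde\pi\tilde A-d\pi A=(\tilde d-d)\tilde\pi\tilde A+d(\tilde\pi-\pi)\tilde A+d\pi(\tilde A-A)$, reducing the problem to bounding three sensitivity terms.

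For the key steps, I would bound each factor uniformly — $d_\mu^{\pi_\theta}(s,a^{\mathcal{P}^i})\le\frac{1}{1-\gamma}$, $\pi^i\le 1$, and $|A^i_\theta|\le\frac{r_{\rm max}-r_{\rm min}}{1-\gamma}$ — and each sensitivity term separately. The advantage-difference term $|\tilde A-A|$ is controlled by Lemma \ref{helper lemma for smoothness of V}, specifically \eqref{eqn:bound on EQ}, which is the genuinely new ingredient over the product-policy case and supplies the $\frac{1}{(1-\gamma)^2}$ factor in terms of $\max_s\|\pi_{\tilde\theta_s}-\pi_{\theta_s}\|_1$; the local-policy-difference term $|\tilde\pi-\pi|$ is controlled by Lipschitzness of the softmax map in its logits; and the visitation-difference term $|\tilde d-d|$ by the standard perturbation bound for discounted visitation measures. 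Summing each term over all coordinates $(s,a^{\mathcal{P}^i},a^i)$, using $\sum_{a^i}\pi^i(a^i\mid s,a^{\mathcal{P}^i})=1$ together with $\sum_{s,a^{\mathcal{P}^i}}d_\mu^{\pi_\theta}(s,a^{\mathcal{P}^i})=\sum_s d_\mu^{\pi_\theta}(s)=\frac{1}{1-\gamma}$, collapses the sums and leaves each term as a constant multiple of $\max_s\|\pi_{\tilde\theta_s}-\pi_{\theta_s}\|_1$.

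Finally I would pass from the joint-policy difference to the parameters: since both BN policies factor over agents, a hybrid argument gives $\max_s\|\pi_{\tilde\theta_s}-\pi_{\theta_s}\|_1\le\sum_{j=1}^N\max_s\|\pi^j_{\tilde\theta^j_s}-\pi^j_{\theta^j_s}\|_1$, and softmax Lipschitzness converts each summand into $\|\tilde\theta^j-\theta^j\|_2$, which is exactly how the summation over all agents appears on the right-hand side. The hard part is twofold. First, the cross-agent coupling: even though the augmented-state view makes agent $i$'s gradient look single-agent-like, the augmented visitation $d_\mu^{\pi_\theta}(s,a^{\mathcal{P}^i})=d_\mu^{\pi_\theta}(s)\sum_{a^{-\mathcal{P}^i}}\pi_\theta(a^{-\mathcal{P}^i},a^{\mathcal{P}^i}\mid s)$ depends on \emph{every} agent's local policy through both the transition dynamics and the parent-action marginalization, so its sensitivity must be measured against the full joint-policy perturbation — this is what forces the summation over $j$ rather than a single term. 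Second, obtaining the sharp $(1-\gamma)^{-3}$ scaling and the numerical constant $8$, rather than a looser power that naive worst-case telescoping of the three factors tends to produce, requires careful bookkeeping when combining the visitation mass, the value sensitivity, and the softmax Jacobian, and in particular avoiding spurious action-space dimension factors when passing between the $\ell_1$ and $\ell_2$ norms; this accounting, rather than any single inequality, is the main obstacle.
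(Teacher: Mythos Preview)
Your plan is reasonable and can be pushed through, but it is not the decomposition the paper uses, and one lemma citation does not apply as written. You propose a three-factor telescope on $d_\mu^{\pi_\theta}(s,a^{\mathcal{P}^i})\cdot\pi^i_{\theta^i}(a^i\mid s,a^{\mathcal{P}^i})\cdot A^i_\theta(s,a^{\mathcal{P}^i},a^i)$, summed over $(s,a^{\mathcal{P}^i},a^i)$. The paper instead first expands $A^i$---which hides an expectation over $a^{-\mathcal{P}^i_+}$---and rewrites the $\ell_1$ difference as a sum over the \emph{full} joint action $(s,a)$, each summand having the form $d_\mu^{\pi_\theta}(s)\,\pi_\theta(a\mid s)\,\bar A^{\pi_\theta,i}(s,a)$ with $\bar A$ exactly the quantity appearing in Lemma~\ref{helper lemma for smoothness of V}. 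Only then does it telescope, and only into \emph{two} pieces: the state-action occupancy difference $\sum_{s,a}|d_{\tilde\theta}(s)\pi_{\tilde\theta}(a|s)-d_\theta(s)\pi_\theta(a|s)|$, controlled directly by Corollary~35 of \cite{zhang2022effect}, and the $\bar A$-difference, controlled by \eqref{eqn:bound on EQ}. Each piece contributes $\tfrac{2(r_{\max}-r_{\min})}{(1-\gamma)^2}\max_s\|\pi_{\tilde\theta_s}-\pi_{\theta_s}\|_1$, and the softmax Lipschitz constant $2$ (Corollary~37 of \cite{zhang2022effect}) then turns the resulting $4$ into the final $8$. The payoff of the rewriting is that both ingredients are handled by pre-existing lemmas with no further work. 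In your route, by contrast, \eqref{eqn:bound on EQ} does \emph{not} directly bound $|\tilde A^i-A^i|$, since $\bar A(s,a)$ depends on the full action whereas your $A^i(s,a^{\mathcal{P}^i},a^i)$ has already integrated out $a^{-\mathcal{P}^i_+}$ under $\pi_\theta$; and the ``standard'' visitation-perturbation bound targets $d(s)$ or $d(s)\pi(a\mid s)$, not your augmented $d(s,a^{\mathcal{P}^i})$. Both issues are fixable by short additional arguments, but they are genuine extra steps, and---as you correctly anticipate---recovering the precise constant $8$ from three separately bounded terms is more delicate than in the paper's two-term version.
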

\begin{proof}
$$\|\nabla_{\theta^i}V(\Tilde{\theta})-\nabla_{\theta^i}V(\theta)\|_1$$
$$=\frac{1}{1-\gamma}\sum_{s,a^{\mathcal{P}^i},a^i}|d_\mu^{\pi_{\Tilde{\theta}}}(s,a^{\mathcal{P}^i})\pi_{{\Tilde{\theta}}^{i}}^i(a^i|s,a^{\mathcal{P}^i})A^{\pi_{\Tilde{\theta}},i}(s,a^{\mathcal{P}^i},a^i)-d_\mu^{\pi_\theta}(s,a^{\mathcal{P}^i})\pi_{\theta^{i}}^i(a^i|s,a^{\mathcal{P}^i})A^{\pi_\theta,i}(s,a^{\mathcal{P}^i},a^i)|$$
$$=\frac{1}{1-\gamma}\sum_{s,a^{\mathcal{P}^i},a^i}|d_\mu^{\pi_{\Tilde{\theta}}}(s,a^{\mathcal{P}^i})\pi_{{\Tilde{\theta}}^{i}}^i(a^i|s,a^{\mathcal{P}^i})*$$
$$\sum_{a^{-\mathcal{P}^i_+}}\pi_{{\Tilde{\theta}}}(a^{-\mathcal{P}^i_+}|s,a^{\mathcal{P}^i},a^i)\big(Q^{\pi_{\Tilde{\theta}}}(s,a^{\mathcal{P}^i},a^i,a^{-\mathcal{P}^i_+})-\E_{\bar{a}^{i}\sim\pi_{\Tilde{\theta}^i}^i(\cdot|s,a^{\mathcal{P}^i})}Q^{\pi_{\Tilde{\theta}}}(s,a^{\mathcal{P}^i},\bar{a}^{i},a^{-\mathcal{P}^i_+})\big)-$$
$$d_\mu^{\pi_{{\theta}}}(s,a^{\mathcal{P}^i})\pi_{{{\theta}}^{i}}^i(a^i|s,a^{\mathcal{P}^i})\sum_{a^{-\mathcal{P}^i_+}}\pi_{{{\theta}}}(a^{-\mathcal{P}^i_+}|s,a^{\mathcal{P}^i},a^i)\big(Q^{\pi_{{\theta}}}(s,a^{\mathcal{P}^i},a^i,a^{-\mathcal{P}^i_+})-\E_{\bar{a}^{i}\sim\pi_{{\theta^i}}^i(\cdot|s,a^{\mathcal{P}^i})}Q^{\pi_{{\theta}}}(s,a^{\mathcal{P}^i},\bar{a}^{i},a^{-\mathcal{P}^i_+})\big)|$$
$$\leq\frac{1}{1-\gamma}\sum_{s,a}|d_\mu^{\pi_{\Tilde{\theta}}}(s)\pi_{\Tilde{\theta}}(a|s)\big(Q^{\pi_{\Tilde{\theta}}}(s,a)-\E_{\bar{a}^{i}\sim\pi_{\Tilde{\theta}^i}^i(\cdot|s,a^{\mathcal{P}^i})}Q^{\pi_{\Tilde{\theta}}}(s,\bar{a}^{i},a^{-i})\big)-$$
$$d_\mu^{\pi_{{\theta}}}(s)\pi_{{\theta}}(a|s)\big(Q^{\pi_{{\theta}}}(s,a)-\E_{\bar{a}^{i}\sim\pi_{{\theta^i}}^i(\cdot|s,a^{\mathcal{P}^i})}Q^{\pi_{{\theta}}}(s,\bar{a}^{i},a^{-i})\big)|$$
\\Denote $\overline{A}^{\pi_{\theta,i}}(s,a)=Q^{\pi_{{\theta}}}(s,a)-\E_{\bar{a}^{i}\sim\pi_{{\theta^i}}^i(\cdot|s,a^{\mathcal{P}^i})}Q^{\pi_{{\theta}}}(s,\bar{a}^{i},a^{-i})$,
$$=\frac{1}{1-\gamma}\sum_{s,a}|d_\mu^{\pi_{\Tilde{\theta}}}(s)\pi_{\Tilde{\theta}}(a|s)\overline{A}^{\pi_{\Tilde{\theta},i}}(s,a)-d_\mu^{\pi_{{\theta}}}(s)\pi_{{\theta}}(a|s)\overline{A}^{\pi_{{\theta}},i}(s,a)|$$
$$\leq\frac{1}{1-\gamma}\bigg(\sum_{s,a}|d_\mu^{\pi_{\Tilde{\theta}}}(s)\pi_{\Tilde{\theta}}(a|s)-d_\mu^{\pi_{\Tilde{\theta}}}(s)\pi_{\Tilde{\theta}}(a|s)|\big|\overline{A}^{\pi_{\Tilde{\theta},i}}(s,a)\big|+\sum_{s,a}d_\mu^{\pi_{{\theta}}}(s)\pi_{{\theta}}(a|s)\big|\overline{A}^{\pi_{\Tilde{\theta},i}}(s,a)-\overline{A}^{\pi_{{\theta}},i}(s,a)\big|\bigg)$$
Since $\overline{A}^{\pi_{\theta,i}}(s,a)\leq \frac{2(r_{\rm max}-r_{\rm min})}{1-\gamma}$,
$$\leq\frac{1}{1-\gamma}\bigg(\sum_{s,a}\frac{2(r_{\rm max}-r_{\rm min})}{1-\gamma}|d_\mu^{\pi_{\Tilde{\theta}}}(s)\pi_{\Tilde{\theta}}(a|s)-d_\mu^{\pi_{\Tilde{\theta}}}(s)\pi_{\Tilde{\theta}}(a|s)+\max_{s,a}\big|\overline{A}^{\pi_{\Tilde{\theta},i}}(s,a)-\overline{A}^{\pi_{{\theta}},i}(s,a)\big|\bigg)$$
By Equation (\ref{eqn:bound on EQ}), 
$$\leq\frac{1}{1-\gamma}\bigg(\sum_{s,a}\frac{2(r_{\rm max}-r_{\rm min})}{1-\gamma}|d_\mu^{\pi_{\Tilde{\theta}}}(s)\pi_{\Tilde{\theta}}(a|s)-d_\mu^{\pi_{\Tilde{\theta}}}(s)\pi_{\Tilde{\theta}}(a|s)+\frac{2(r_{\rm max}-r_{\rm min})}{(1-\gamma)^2}\max_s\|\pi_{\Tilde{\theta_s}}-\pi_{{\theta_s}}\|_1\bigg)$$
By corollary (35) in \cite{zhang2022effect}, 
$$\leq\frac{1}{1-\gamma}\bigg(\frac{2(r_{\rm max}-r_{\rm min})}{(1-\gamma)^2}\max_s\|\pi_{\Tilde{\theta_s}}-\pi_{{\theta_s}}\|_1+\frac{2(r_{\rm max}-r_{\rm min})}{(1-\gamma)^2}\max_s\|\pi_{\Tilde{\theta_s}}-\pi_{{\theta_s}}\|_1\bigg)$$
$$=\frac{4(r_{\rm max}-r_{\rm min})}{(1-\gamma)^3}\max_s\|\pi_{\Tilde{\theta_s}}-\pi_{{\theta_s}}\|_1$$

$$\leq\frac{4(r_{\rm max}-r_{\rm min})}{(1-\gamma)^3}\max_s\sum_{i,a^{\mathcal{P}^i}}\|\pi_{\Tilde{\theta}^i_{s,a^{\mathcal{P}^i}}}-\pi_{{\theta}^i_{s,a^{\mathcal{P}^i}}}\|_1$$
By 
corollary (37) in \cite{zhang2022effect}, 
$$\leq\frac{8(r_{\rm max}-r_{\rm min})}{(1-\gamma)^3}\max_s\sum_{i,a^{\mathcal{P}^i}}\|{\Tilde{\theta}^i_{s,a^{\mathcal{P}^i}}}-{{\theta}^i_{s,a^{\mathcal{P}^i}}}\|_2$$
$$\leq\frac{8(r_{\rm max}-r_{\rm min})}{(1-\gamma)^3}\max_s\sum_{i=1}^N\|{\Tilde{\theta}^i_s}-{{\theta}^i_s}\|_2$$
$$\leq\frac{8(r_{\rm max}-r_{\rm min})}{(1-\gamma)^3}\sum_{i=1}^N \|{\Tilde{\theta}^i}-{{\theta}^i}\|_2$$

\end{proof}

\begin{lemma}[Smoothness of $V$ under tabular Baysian softmax]
\label{lemma:smoothness_of_V}
$$\|\nabla_{\theta}V(\Tilde{\theta})-\nabla_{\theta}V(\theta)\|_2\leq \frac{8N(r_{\rm max}-r_{\rm min})}{(1-\gamma)^3}\|\Tilde{\theta}-{\theta}\|_2$$
\end{lemma}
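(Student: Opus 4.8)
The plan is to build the full-gradient smoothness bound directly on top of the per-agent estimate already established in Lemma \ref{Bound of BN PG}, which controls $\|\nabla_{\theta^i}V(\tilde\theta)-\nabla_{\theta^i}V(\theta)\|_1$ by $\frac{8(r_{\rm max}-r_{\rm min})}{(1-\gamma)^3}\sum_{j=1}^N\|\tilde\theta^j-\theta^j\|_2$. Since the full parameter vector $\theta$ is the concatenation of the per-agent blocks $\theta^1,\dots,\theta^N$, the gradient $\nabla_\theta V$ decomposes blockwise, so that $\|\nabla_\theta V(\tilde\theta)-\nabla_\theta V(\theta)\|_2^2=\sum_{i=1}^N\|\nabla_{\theta^i}V(\tilde\theta)-\nabla_{\theta^i}V(\theta)\|_2^2$. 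The first step is therefore to reduce each block to its $\ell_1$ counterpart via the elementary inequality $\|x\|_2\le\|x\|_1$, which lets me invoke Lemma \ref{Bound of BN PG} verbatim for every $i$.

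First I would set $L:=\frac{8(r_{\rm max}-r_{\rm min})}{(1-\gamma)^3}$ and $S:=\sum_{j=1}^N\|\tilde\theta^j-\theta^j\|_2$, so that Lemma \ref{Bound of BN PG} gives $\|\nabla_{\theta^i}V(\tilde\theta)-\nabla_{\theta^i}V(\theta)\|_2\le LS$ uniformly in $i$. Substituting this into the blockwise decomposition yields $\|\nabla_\theta V(\tilde\theta)-\nabla_\theta V(\theta)\|_2^2\le\sum_{i=1}^N (LS)^2=N L^2 S^2$, hence $\|\nabla_\theta V(\tilde\theta)-\nabla_\theta V(\theta)\|_2\le\sqrt{N}\,L\,S$. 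This accounts for one of the two factors of $\sqrt{N}$ that must appear.

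The second step is to convert $S$ back into the Euclidean distance $\|\tilde\theta-\theta\|_2$. Because $\|\tilde\theta-\theta\|_2^2=\sum_{j=1}^N\|\tilde\theta^j-\theta^j\|_2^2$, applying Cauchy--Schwarz to the $N$-term sum gives $S=\sum_{j=1}^N\|\tilde\theta^j-\theta^j\|_2\le\sqrt{N}\,\|\tilde\theta-\theta\|_2$, supplying the second $\sqrt{N}$. Combining the two bounds produces $\|\nabla_\theta V(\tilde\theta)-\nabla_\theta V(\theta)\|_2\le N L\,\|\tilde\theta-\theta\|_2=\frac{8N(r_{\rm max}-r_{\rm min})}{(1-\gamma)^3}\|\tilde\theta-\theta\|_2$, as claimed.

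I do not anticipate a genuine obstacle here, since the heavy lifting --- the Lipschitz control of the advantage-weighted visitation terms --- was already done in Lemma \ref{Bound of BN PG}. The only point requiring care is bookkeeping the factor of $N$: each of the two reductions contributes a single $\sqrt{N}$, and it would be easy to either double-count (by summing the per-agent bound over $i$ before taking square roots) or undercount (by replacing the full sum $S$ with a maximum). Keeping the blockwise $\ell_2^2$ summation cleanly separated from the Cauchy--Schwarz step on $S$ is what guarantees the tight constant $N$ rather than $N^{3/2}$ or $\sqrt{N}$.
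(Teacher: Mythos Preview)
Your proposal is correct and follows essentially the same route as the paper's own proof: blockwise decomposition of $\|\nabla_\theta V(\tilde\theta)-\nabla_\theta V(\theta)\|_2^2$, the reduction $\|\cdot\|_2\le\|\cdot\|_1$ on each block, invocation of Lemma~\ref{Bound of BN PG}, a sum over $i$ yielding one factor of $\sqrt{N}$, and Cauchy--Schwarz on $\sum_j\|\tilde\theta^j-\theta^j\|_2$ yielding the second. The bookkeeping you flag is exactly the point the paper handles, and your constants match.
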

\begin{proof}
$$\|\nabla_{\theta}V(\Tilde{\theta})-\nabla_{\theta}V(\theta)\|_2^2=\sum_{i=1}^N\|\nabla_{\theta^i}V(\Tilde{\theta})-\nabla_{\theta^i}V(\theta)\|_2^2$$
$$\leq\sum_{i=1}^N\|\nabla_{\theta^i}V(\Tilde{\theta})-\nabla_{\theta^i}V(\theta)\|_1^2$$
By lemma (\ref{Bound of BN PG}),
$$\leq\sum_{i=1}^N\bigg(\frac{8(r_{\rm max}-r_{\rm min})}{(1-\gamma)^3}\sum_{j=1}^N \|{\Tilde{\theta}^j}-{{\theta}^j}\|_2\bigg)^2$$
$$=\frac{64N(r_{\rm max}-r_{\rm min})^2}{(1-\gamma)^6}\bigg(\sum_{i=1}^N \|{\Tilde{\theta}^i}-{{\theta}^i}\|_2\bigg)^2$$
$$\leq\frac{64N^2(r_{\rm max}-r_{\rm min})^2}{(1-\gamma)^6}\sum_{i=1}^N \|{\Tilde{\theta}^i}-{{\theta}^i}\|_2^2$$
$$=\frac{64N^2(r_{\rm max}-r_{\rm min})^2}{(1-\gamma)^6}\|{\Tilde{\theta}}-{{\theta}}\|_2^2$$
Therefore,
$$\|\nabla_{\theta}V(\Tilde{\theta})-\nabla_{\theta}V(\theta)\|_2\leq \frac{8N(r_{\rm max}-r_{\rm min})}{(1-\gamma)^3}\|\Tilde{\theta}-{\theta}\|_2$$
\end{proof}
\begin{lemma}
\label{BN_PG}
For a Baysian policy defined by $\mathcal{G}$, $\forall s, a^{\mathcal{P}^i}, a^i$,  
$$\frac{\partial V^{\pi_\theta}(\mu)}{\partial \theta_{s,a^{\mathcal{P}^i}, a^i}^{i}}
=\frac{1}{1-\gamma}d_\mu^{\pi_\theta}(s,a^{\mathcal{P}^i})\pi_{\theta^{i}}^i(a^i|s,a^{\mathcal{P}^i})A^{\pi_\theta,i}(s,a^{\mathcal{P}^i},a^i)$$
,where $d_\mu^{\pi_\theta}(s,a^{\mathcal{P}^i})=d_\mu^{\pi_\theta}(s)\textstyle\sum_{a^{-\mathcal{P}^i}}\pi_{\theta}(a^{-\mathcal{P}^i},a^{\mathcal{P}^i}|s)$,
$A^{\pi_\theta,i}(s,a^{\mathcal{P}^i},a^i) = Q^{\pi_\theta,i}(s,a^{\mathcal{P}^i},a^i)-Q^{\pi_\theta,i}(s,a^{\mathcal{P}^i})$, $Q^{\pi_\theta,i}(s,a^{\mathcal{P}^i},a^i)=\E_{\bar{a}^{-\mathcal{P}^i_+}\sim\pi_{\theta}(\cdot|s,a^{\mathcal{P}^i},a^i)}\Big[Q^{\pi_\theta}(s,a^{\mathcal{P}^i},a^i,\bar{a}^{-\mathcal{P}^i_+})\Big]$, $Q^{\pi_\theta,i}(s,a^{\mathcal{P}^i})= \E_{\bar{a}^{-\mathcal{P}^i}\sim\pi_{\theta}(\cdot|s,a^{\mathcal{P}^i})}\Big[Q^{\pi_\theta}(s,a^{\mathcal{P}^i},\bar{a}^{-\mathcal{P}^i})\Big]$.
\end{lemma}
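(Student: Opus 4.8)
The plan is to specialize the single-agent policy gradient theorem to the BN joint policy and then exploit the Bayesian-network factorization. Concretely, I would start from the form (see \cite{agarwal2021theory})
\begin{align*}
\frac{\partial V^{\pi_\theta}(\mu)}{\partial \theta^i_{s,a^{\mathcal{P}^i},a^i}} = \frac{1}{1-\gamma}\sum_{s'} d_\mu^{\pi_\theta}(s') \sum_{a'} \pi_\theta(a'|s')\,\frac{\partial \log \pi_\theta(a'|s')}{\partial \theta^i_{s,a^{\mathcal{P}^i},a^i}}\,Q^{\pi_\theta}(s',a').
\end{align*}
Since $\pi_\theta(a'|s')=\prod_{j}\pi^j_{\theta^j}(a'^j|s',a'^{\mathcal{P}^j})$ and only the $j=i$ factor depends on $\theta^i$, the log-derivative collapses to $\partial\log\pi^i_{\theta^i}(a'^i|s',a'^{\mathcal{P}^i})/\partial\theta^i_{s,a^{\mathcal{P}^i},a^i}$, which for the softmax parameterization \eqref{eq:tabular softmax local policy} vanishes unless $s'=s$ and $a'^{\mathcal{P}^i}=a^{\mathcal{P}^i}$, and otherwise equals $\mathbbm{1}[a'^i=a^i]-\pi^i_{\theta^i}(a^i|s,a^{\mathcal{P}^i})$.

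Next I would use these vanishing conditions to collapse the outer sum to $s'=s$ and restrict the inner sum to joint actions $a'$ whose parent coordinates equal $a^{\mathcal{P}^i}$, then split the bracket into an indicator term and a baseline term. Summing the indicator term over the remaining coordinates $a'^{-\mathcal{P}^i_+}$ reproduces $\Pr_{\pi_\theta}(a^{\mathcal{P}^i},a^i|s)\,Q^{\pi_\theta,i}(s,a^{\mathcal{P}^i},a^i)$ by the definition of $Q^{\pi_\theta,i}(s,a^{\mathcal{P}^i},a^i)$ as a conditional expectation over $a^{-\mathcal{P}^i_+}$; summing the baseline term over $a'^{-\mathcal{P}^i}$ similarly gives $-\pi^i_{\theta^i}(a^i|s,a^{\mathcal{P}^i})\,\Pr_{\pi_\theta}(a^{\mathcal{P}^i}|s)\,Q^{\pi_\theta,i}(s,a^{\mathcal{P}^i})$.

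The step I expect to be the crux is reconciling these two terms into the advantage. Their difference equals the target expression only after invoking the standard Bayesian-network identity $\Pr_{\pi_\theta}(a^i|s,a^{\mathcal{P}^i})=\pi^i_{\theta^i}(a^i|s,a^{\mathcal{P}^i})$ — a node's conditional given its parents equals its local factor, which follows by marginalizing the remaining coordinates under a topological ordering (the successor factors telescope to one, and conditioning on the parents renders $i$ independent of the non-parent predecessors) — so that $\Pr_{\pi_\theta}(a^{\mathcal{P}^i},a^i|s)=\Pr_{\pi_\theta}(a^{\mathcal{P}^i}|s)\,\pi^i_{\theta^i}(a^i|s,a^{\mathcal{P}^i})$. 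Substituting this together with $d_\mu^{\pi_\theta}(s,a^{\mathcal{P}^i})=d_\mu^{\pi_\theta}(s)\,\Pr_{\pi_\theta}(a^{\mathcal{P}^i}|s)=d_\mu^{\pi_\theta}(s)\sum_{a^{-\mathcal{P}^i}}\pi_\theta(a^{-\mathcal{P}^i},a^{\mathcal{P}^i}|s)$ factors out $\pi^i_{\theta^i}(a^i|s,a^{\mathcal{P}^i})$ and yields the stated gradient. This marginalization identity is precisely what lets the augmented pair $(s,a^{\mathcal{P}^i})$ behave like an ordinary state, so that the product-policy argument of \cite{zhang2022effect,chen2022convergence} carries over verbatim.
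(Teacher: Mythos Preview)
Your proposal is correct and follows essentially the same route as the paper: start from the policy gradient theorem, use the softmax log-derivative to obtain the indicator conditions $\mathbbm{1}[\bar s=s]\mathbbm{1}[\bar a^{\mathcal{P}^i}=a^{\mathcal{P}^i}](\mathbbm{1}[\bar a^i=a^i]-\pi^i_{\theta^i}(a^i|s,a^{\mathcal{P}^i}))$, split into the indicator and baseline terms, and invoke the BN factorization to pull out $\pi^i_{\theta^i}(a^i|s,a^{\mathcal{P}^i})\Pr_{\pi_\theta}(a^{\mathcal{P}^i}|s)$ and collapse the difference into $A^{\pi_\theta,i}$. The only cosmetic difference is that the paper begins with the advantage $A^{\pi_\theta}$ rather than $Q^{\pi_\theta}$ (harmless, since the $V$-baseline vanishes under the score identity) and uses the BN factorization implicitly when decomposing the expectation, whereas you spell out the identity $\Pr_{\pi_\theta}(a^i|s,a^{\mathcal{P}^i})=\pi^i_{\theta^i}(a^i|s,a^{\mathcal{P}^i})$ explicitly.
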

Note that the policy gradient formula in Lemma \ref{BN_PG} is the same as the formula in Lemma \ref{lemma:state-based tabular softmax multi-agent policy gradient}, but with different notations. Here we uses $Q^{\pi_\theta,i}(s,a^{\mathcal{P}^i},a^i)$ instead of $Q^{\pi_\theta}(s,a^{\mathcal{P}^i_+})$ to highlight the local action $a^i$, and it is only used in the proof. They define the same quantity. We also uses $Q^{\pi_\theta,i}(s,a^{\mathcal{P}^i})$ instead of $Q^{\pi_\theta}(s,a^{\mathcal{P}^i})$ for the proof.

\begin{proof}
For agent $i$, 
$$\frac{\partial V^{\pi_\theta}(\mu)}{\partial \theta_{s,a^{\mathcal{P}^i}, a^i}^{i}}=\frac{1}{1-\gamma}\E_{\bar{s}\sim d_\mu^{\pi_\theta}}\E_{\bar{a}\sim\pi_{\theta}(\cdot|\bar{s})}\Big[A^{\pi_\theta}(\bar{s},\bar{a})\frac{\partial \log \pi_{\theta^{i}}^i(\bar{a}^i|s,a^{\mathcal{P}^i})}{\partial \theta_{s,a^{\mathcal{P}^i}, a^i}^{i}}\Big]$$
$$=\frac{1}{1-\gamma}\E_{\bar{s}\sim d_\mu^{\pi_\theta}}\E_{\bar{a}\sim\pi_{\theta}(\cdot|\bar{s})}\Big[A^{\pi_\theta}(\bar{s},\bar{a})\mathbbm{1}[\bar{s}=s]\mathbbm{1}[\bar{a}^{\mathcal{P}^i}=a^{\mathcal{P}^i}](\mathbbm{1}[\bar{a}^i=a^i]-\pi_{\theta^{i}}^i(a^i|s,a^{\mathcal{P}^i}))\Big]$$
$$=\frac{1}{1-\gamma}d_\mu^{\pi_\theta}(s)\pi_{\theta}^{\mathcal{P}^i}(a^{\mathcal{P}^i}|s)\E_{\bar{a}^i\sim\pi_{\theta^{i}}^i(\cdot|s,a^{\mathcal{P}^i}),\bar{a}^{-\mathcal{P}^i_+}\sim\pi_{\theta}(\cdot|s,a^{\mathcal{P}^i},a^i)}$$
$$\Big[A^{\pi_\theta}(s,a^{\mathcal{P}^i},\bar{a}^i,\bar{a}^{-\mathcal{P}^i_+})(\mathbbm{1}[\bar{a}^i=a^i]-\pi_{\theta^{i}}^i(a^i|s,a^{\mathcal{P}^i}))\Big]$$
$$=\frac{1}{1-\gamma}d_\mu^{\pi_\theta}(s,a^{\mathcal{P}^i})\E_{\bar{a}^i\sim\pi_{\theta^{i}}^i(\cdot|s,a^{\mathcal{P}^i}),\bar{a}^{-\mathcal{P}^i_+}\sim\pi_{\theta}(\cdot|s,a^{\mathcal{P}^i},a^i)}$$
$$\Big[A^{\pi_\theta}(s,a^{\mathcal{P}^i},\bar{a}^i,\bar{a}^{-\mathcal{P}^i_+})(\mathbbm{1}[\bar{a}^i=a^i]-\pi_{\theta^{i}}^i(a^i|s,a^{\mathcal{P}^i}))\Big]$$
$$=\frac{1}{1-\gamma}d_\mu^{\pi_\theta}(s,a^{\mathcal{P}^i})$$
$$\bigg(\E_{\bar{a}^i\sim\pi_{\theta^{i}}^i(\cdot|s,a^{\mathcal{P}^i}),\bar{a}^{-\mathcal{P}^i_+}\sim\pi_{\theta}(\cdot|s,a^{\mathcal{P}^i},a^i)}\Big[A^{\pi_\theta}(s,a^{\mathcal{P}^i},\bar{a}^i,\bar{a}^{-\mathcal{P}^i_+})\mathbbm{1}[\bar{a}^i=a^i]\Big]$$
$$-\E_{\bar{a}^{-\mathcal{P}^i}\sim\pi_{\theta}(\cdot|s,a^{\mathcal{P}^i})}\Big[A^{\pi_\theta}(s,a^{\mathcal{P}^i},\bar{a}^{-\mathcal{P}^i})\pi_{\theta^{i}}^i(a^i|s,a^{\mathcal{P}^i})\Big]\bigg)$$

$$=\frac{1}{1-\gamma}d_\mu^{\pi_\theta}(s,a^{\mathcal{P}^i})$$
$$\bigg(\pi_{\theta^{i}}^i(a^i|s,a^{\mathcal{P}^i})\E_{\bar{a}^{-\mathcal{P}^i_+}\sim\pi_{\theta}(\cdot|s,a^{\mathcal{P}^i},a^i)}\Big[A^{\pi_\theta}(s,a^{\mathcal{P}^i},a^i,\bar{a}^{-\mathcal{P}^i_+})\Big]$$
$$-\pi_{\theta^{i}}^i(a^i|s,a^{\mathcal{P}^i})\E_{\bar{a}^{-\mathcal{P}^i}\sim\pi_{\theta}(\cdot|s,a^{\mathcal{P}^i})}\Big[A^{\pi_\theta}(s,a^{\mathcal{P}^i},\bar{a}^{-\mathcal{P}^i})\Big]\bigg)$$

$$=\frac{1}{1-\gamma}d_\mu^{\pi_\theta}(s,a^{\mathcal{P}^i})\pi_{\theta^{i}}^i(a^i|s,a^{\mathcal{P}^i})$$
$$\bigg(\E_{\bar{a}^{-\mathcal{P}^i_+}\sim\pi_{\theta}(\cdot|s,a^{\mathcal{P}^i},a^i)}\Big[Q^{\pi_\theta}(s,a^{\mathcal{P}^i},a^i,\bar{a}^{-\mathcal{P}^i_+})\Big]-\E_{\bar{a}^{-\mathcal{P}^i}\sim\pi_{\theta}(\cdot|s,a^{\mathcal{P}^i})}\Big[Q^{\pi_\theta}(s,a^{\mathcal{P}^i},\bar{a}^{-\mathcal{P}^i})\Big]\bigg)$$
$$=\frac{1}{1-\gamma}d_\mu^{\pi_\theta}(s,a^{\mathcal{P}^i})\pi_{\theta^{i}}^i(a^i|s,a^{\mathcal{P}^i})\Big[Q^{\pi_\theta,i}(s,a^{\mathcal{P}^i},a^i)-Q^{\pi_\theta,i}(s,a^{\mathcal{P}^i})\Big]$$
$$=\frac{1}{1-\gamma}d_\mu^{\pi_\theta}(s,a^{\mathcal{P}^i})\pi_{\theta^{i}}^i(a^i|s,a^{\mathcal{P}^i})A^{\pi_\theta,i}(s,a^{\mathcal{P}^i},a^i)$$

\end{proof}

\begin{lemma}
\label{lemma_4}
For all agents $i$ with a round of update  $$\theta^{t+1,i}=\theta^{t,i}+\eta\nabla 
V^i_{\theta^{t,i}}(\mu)$$ with learning rates $\eta\leq \frac{(1-\gamma)^3}{8N(r_{\rm max}-r_{\rm min})}$, we have $$V^{t+1}(s)\geq V^{t}(s);Q^{t+1}(s,a)\geq Q^{t}(s,a).$$
\end{lemma}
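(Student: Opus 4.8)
The plan is to reduce the per-state value monotonicity to a one-step policy-improvement check at each individual state, and then to verify that check using the smoothness already developed. A direct appeal to Lemma \ref{lemma:smoothness_of_V} only yields monotonicity of the scalar $V^t(\mu)$ along the update (by the standard ascent inequality $V^{t+1}(\mu)\ge V^t(\mu)+\eta(1-\tfrac{\eta\beta}{2})\|\nabla_{\theta} V^t(\mu)\|_2^2$ with $\beta=\frac{8N(r_{\rm max}-r_{\rm min})}{(1-\gamma)^3}$ and $\eta\le 1/\beta$), which is too weak: the claim is monotonicity of $V^t(s)$ \emph{simultaneously} for every $s$. The obstruction to a naive argument is that $V^t(\delta_s)$ depends on all parameters, so the fixed update direction $\nabla_{\theta} V^t(\mu)$ is not aligned with $\nabla_{\theta} V^t(\delta_s)$.

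First I would invoke the standard performance difference lemma with $\pi'=\pi^{t+1}$ and $\pi=\pi^t$:
$$V^{t+1}(s)-V^t(s)=\tfrac{1}{1-\gamma}\,\E_{s'\sim d^{\pi^{t+1}}_s}\,\E_{a\sim\pi^{t+1}(\cdot|s')}\big[A^{\pi^t}(s',a)\big].$$
Since $d^{\pi^{t+1}}_s(s')\ge 0$, it suffices to prove the one-step improvement $\E_{a\sim\pi^{t+1}(\cdot|s')}[Q^{\pi^t}(s',a)]\ge V^{t}(s')$ at every state $s'$; equivalently, writing $\ell_{s'}(\theta):=\E_{a\sim\pi_\theta(\cdot|s')}[Q^{\pi^t}(s',a)]$ for the one-step surrogate with $Q^{\pi^t}$ held fixed, I must show $\ell_{s'}(\theta^{t+1})\ge\ell_{s'}(\theta^t)$.

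The key structural observation is that $\ell_{s'}$ depends only on the block of parameters governing $\pi(\cdot|s')$, and — by the same computation that produces Lemma \ref{BN_PG} but without the visitation factor — its gradient on that block is $\pi^{\mathcal{P}^i}(a^{\mathcal{P}^i}|s')\pi^i(a^i|s',a^{\mathcal{P}^i})A^{\pi^t,i}(s',a^{\mathcal{P}^i},a^i)$. Comparing with Lemma \ref{BN_PG} and using $d^{\pi^t}_\mu(s',a^{\mathcal{P}^i})=d^{\pi^t}_\mu(s')\pi^{\mathcal{P}^i}(a^{\mathcal{P}^i}|s')$, the restriction of $\nabla_{\theta} V^t(\mu)$ to this block is exactly the uniform positive multiple $\frac{d^{\pi^t}_\mu(s')}{1-\gamma}\,\nabla\ell_{s'}(\theta^t)$. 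Hence the update, restricted to the state-$s'$ block, is precisely a gradient-ascent step on $\ell_{s'}$ with effective step size $\tilde\eta_{s'}=\frac{\eta\,d^{\pi^t}_\mu(s')}{1-\gamma}\le\frac{\eta}{(1-\gamma)^2}$, and the parameters at other states are irrelevant to $\ell_{s'}$. I would then establish that $\ell_{s'}$ is $\beta_1$-smooth in its block with $\beta_1=O\!\big(\tfrac{N(r_{\rm max}-r_{\rm min})}{1-\gamma}\big)$ — the one-step, fixed-$Q$ analogue of Lemmas \ref{helper lemma for smoothness of V}–\ref{lemma:smoothness_of_V}, with the horizon factor $(1-\gamma)^{-2}$ removed and the range of $Q^{\pi^t}$ bounded by $\frac{r_{\rm max}-r_{\rm min}}{1-\gamma}$. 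The ascent inequality then gives $\ell_{s'}(\theta^{t+1})\ge\ell_{s'}(\theta^t)+\tilde\eta_{s'}(1-\tfrac{\beta_1\tilde\eta_{s'}}{2})\|\nabla\ell_{s'}(\theta^t)\|_2^2\ge\ell_{s'}(\theta^t)$, where the stated bound $\eta\le\frac{(1-\gamma)^3}{8N(r_{\rm max}-r_{\rm min})}$ is exactly what guarantees $\tilde\eta_{s'}\le 1/\beta_1$.

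Combining, $V^{t+1}(s)\ge V^t(s)$ for all $s$. The action-value claim follows immediately from the Bellman relation $Q^{t}(s,a)=r(s,a)+\gamma\sum_{s'}P(s'|s,a)V^{t}(s')$ and monotonicity of $V$: $Q^{t+1}(s,a)-Q^{t}(s,a)=\gamma\sum_{s'}P(s'|s,a)\big(V^{t+1}(s')-V^{t}(s')\big)\ge 0$. The main obstacle is the multi-agent one-step smoothness step: because all $N$ agents update their local softmax policies simultaneously, the Hessian of $\ell_{s'}$ carries cross-agent terms, and controlling these (which produces the factor $N$, and hence the $1/N$ in the step-size threshold) is the technical heart; once the uniform-scaling alignment above is in hand, the performance-difference reduction and the $Q$-monotonicity are routine.
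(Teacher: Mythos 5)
Your proposal is correct, and it is in fact more complete than the paper's own proof, which consists of a single sentence: ``Since $V(\theta)$ is $\frac{8N(r_{\rm max}-r_{\rm min})}{(1-\gamma)^3}$-smooth, with learning rate $\eta\leq \frac{(1-\gamma)^3}{8N(r_{\rm max}-r_{\rm min})}$, $V$ is monotonic increasing and therefore $Q$ is also monotonic increasing.'' As you observe, read literally this only delivers monotonicity of the scalar $V^t(\mu)$ via the ascent inequality, whereas the lemma asserts $V^{t+1}(s)\geq V^t(s)$ for \emph{every} $s$ --- a strictly stronger statement that the subsequent lemmas genuinely use (Lemma \ref{lemma_5} invokes per-state and per-action monotone convergence of $V^t(s)$ and $Q^t(s,a)$). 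Your route --- performance difference lemma to reduce to the one-step check $\E_{a\sim\pi^{t+1}(\cdot|s')}[A^{\pi^t}(s',a)]\geq 0$, then the observation that the restriction of the update to the state-$s'$ parameter block is gradient ascent on the frozen-$Q$ surrogate $\ell_{s'}$ with effective stepsize $\eta\, d_\mu^{\pi^t}(s')/(1-\gamma)$ --- is exactly the standard argument from \cite{agarwal2021theory} (their monotone-improvement lemma for softmax policy gradient) that the paper implicitly inherits via \cite{zhang2022effect}, here correctly adapted to the BN parameterization: your block-gradient computation for $\ell_{s'}$ matches the form in Lemma \ref{BN_PG} with the factor $d_\mu^{\pi_\theta}(s')/(1-\gamma)$ stripped off, and the common scalar across all blocks $(i,a^{\mathcal{P}^i},a^i)$ at state $s'$ is what makes the restricted update a genuine ascent step. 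Two minor cautions: the alignment between the update direction and $\nabla\ell_{s'}$ holds at $\theta^t$ (which is all the ascent inequality needs, together with smoothness of $\ell_{s'}$ along the segment), so you should phrase it that way rather than as an identity in $\theta$; and your claim that the stated threshold is ``exactly'' $1/\beta_1$ is optimistic --- you still owe the one-step smoothness constant $\beta_1$ for the $N$-agent BN surrogate, though the scaling $O(N(r_{\rm max}-r_{\rm min})/(1-\gamma))$ you predict is right and the paper's threshold has enough slack to absorb reasonable constants. Net: where the paper asserts, you prove, and your writeup would be a strict improvement on the appendix as it stands.
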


\begin{proof}
Since $V(\theta)$ is $\frac{8N(r_{\rm max}-r_{\rm min})}{(1-\gamma)^3}$-smooth, we know that with learning rate $\eta\leq \frac{(1-\gamma)^3}{8N(r_{\rm max}-r_{\rm min})}$, $V$ is monotonic increasing and therefore $Q$ is also monotonic increasing.  
\end{proof}

\begin{lemma}
\label{lemma_5}
For all states s and actions a, there exists values $V^\infty(s) \text{ and }Q^{\infty}(s,a)$ such that as $t\rightarrow \infty, V^t(s)\rightarrow V^\infty(s),Q^t(s,a)\rightarrow Q^\infty(s,a)$.
For all agents $i$, states $s$, parent actions $a^{\mathcal{P}^i}$, local action $a^i$, there exists values $Q^{\infty,i}(s,a^{\mathcal{P}^i},a^i)\text{ and }Q^{\infty,i}(s,a^{\mathcal{P}^i})$ such that as $t\rightarrow \infty, Q^{t,i}(s,a^{\mathcal{P}^i},a^i)\rightarrow Q^{\infty,i}(s,a^{\mathcal{P}^i},a^i),Q^{t,i}(s,a^{\mathcal{P}^i})\rightarrow Q^{\infty,i}(s,a^{\mathcal{P}^i})$.
Define $$\Delta^i = \min_{\{s,a^{\mathcal{P}^i},a^i|A^{\infty,i}(s,a^{\mathcal{P}^i},a^i)\neq 0\}}|A^{\infty,i}(s,a^{\mathcal{P}^i},a^i)|.$$
$$\Delta = \min_{i}\Delta^i.$$
Further, there exists a $T_0$ such that for all $t>T_0$, agents $i$, states $s$, parent actions $a^{\mathcal{P}^i}$, local action $a^i$, $$Q^{\infty,i}(s,a^{\mathcal{P}^i},a^i)-\frac{\Delta}{4}\leq Q^{t,i}(s,a^{\mathcal{P}^i},a^i)\leq Q^{\infty,i}(s,a^{\mathcal{P}^i},a^i)+\frac{\Delta}{4}$$
\end{lemma}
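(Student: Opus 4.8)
The plan is to treat the four assertions in the order stated, exploiting throughout that states, agents, and (joint) actions range over \emph{finite} sets, so that pointwise convergence is automatically uniform over the relevant index sets.

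First I would establish convergence of $V^t$ and $Q^t$ by the Monotone Convergence Theorem. By Lemma~\ref{lemma_4}, under the stepsize restriction $\eta\leq\frac{(1-\gamma)^3}{8N(r_{\rm max}-r_{\rm min})}$ the sequences $t\mapsto V^t(s)$ and $t\mapsto Q^t(s,a)$ are nondecreasing for each fixed $(s,a)$; by Assumption~\ref{assumption:Reward function is bounded} they are bounded above by $V_{\rm max}$. A bounded monotone real sequence converges, so the limits $V^\infty(s):=\lim_t V^t(s)$ and $Q^\infty(s,a):=\lim_t Q^t(s,a)$ exist, which proves the first claim.

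Next I would deduce convergence of the conditional action values $Q^{t,i}(s,a^{\mathcal{P}^i},a^i)$ and $Q^{t,i}(s,a^{\mathcal{P}^i})$. Recalling their definitions from Lemma~\ref{BN_PG}, each is a finite sum $\sum_{\bar a}\pi_{\theta_t}(\bar a\mid s,\cdots)\,Q^t(s,\cdots,\bar a)$ whose weights are conditionals of the softmax BN policy. Because the softmax parameterization assigns strictly positive probability to every joint action, these conditionals are continuous functions of the parameters with denominators bounded away from zero; hence Assumption~\ref{assumption:PG}, which gives $\pi^i_{\theta^i_t}\to\pi^i_{\theta^i_*}$ for every $i$, forces the weights to converge. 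Combining this with $Q^t\to Q^\infty$ from the first step, and using that a finite sum of products of convergent sequences converges, yields the limits $Q^{\infty,i}(s,a^{\mathcal{P}^i},a^i)$ and $Q^{\infty,i}(s,a^{\mathcal{P}^i})$, and consequently $A^{\infty,i}:=Q^{\infty,i}(s,a^{\mathcal{P}^i},a^i)-Q^{\infty,i}(s,a^{\mathcal{P}^i})$. With these limits in hand, $\Delta$ is well defined: for each $i$ the set $\{(s,a^{\mathcal{P}^i},a^i):A^{\infty,i}\neq0\}$ is finite, so $\Delta^i$ is a minimum of finitely many strictly positive numbers, whence $\Delta=\min_i\Delta^i>0$ (if every limiting advantage vanishes the final inequality is vacuous). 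Finally, to extract the uniform time $T_0$ I would invoke the convergence $Q^{t,i}\to Q^{\infty,i}$ just established: for each of the finitely many tuples $(i,s,a^{\mathcal{P}^i},a^i)$ there is a threshold past which $|Q^{t,i}(s,a^{\mathcal{P}^i},a^i)-Q^{\infty,i}(s,a^{\mathcal{P}^i},a^i)|\leq\Delta/4$, and taking $T_0$ to be the maximum of these finitely many thresholds yields the stated two-sided bound for all $t>T_0$ simultaneously.

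The main obstacle is the second step: unlike $V^t$ and $Q^t$, the conditional values $Q^{t,i}$ need \emph{not} be monotone, because the policy weights drift with $t$, so the Monotone Convergence Theorem does not apply directly to them. The argument must instead route convergence through the continuity of the softmax conditionals combined with the policy convergence granted by Assumption~\ref{assumption:PG}, and it is exactly here that full support of the softmax policies (keeping every conditioning denominator bounded away from $0$) is needed to guarantee that the conditional distributions actually converge rather than merely stay bounded.
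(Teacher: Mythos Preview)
Your proposal is correct and follows essentially the same approach as the paper: monotone convergence (via Lemma~\ref{lemma_4} and boundedness) for $V^t,Q^t$, then Assumption~\ref{assumption:PG} to push convergence through to the conditional quantities $Q^{t,i}$, and finally finiteness of the index set to extract a uniform $T_0$. You actually provide more detail than the paper does in the second step---the paper simply asserts ``Since the Bayesian policy is assumed to converge, we have that both $\{Q^{t,i}(s,a^{\mathcal{P}^i},a^i)\}$ and $\{Q^{t,i}(s,a^{\mathcal{P}^i})\}$ are convergent''---whereas you unpack why convergence of local policies propagates through the conditional weights; your closing remark that monotonicity fails for $Q^{t,i}$ and that Assumption~\ref{assumption:PG} is the essential ingredient here is exactly the point.
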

\begin{proof}
$\{V^t(s)\}$ is bounded and monotonically increasing, therefore $V^t(s)\rightarrow V^\infty(s)$. Similarly, we know $Q^t(s,a)\rightarrow Q^\infty(s,a)$. 
Since the Bayesian policy is assumed to converge, we have that both $\{Q^{t,i}(s,a^{\mathcal{P}^i},a^i)\}$ and $\{Q^{t,i}(s,a^{\mathcal{P}^i})\}$ are convergent. 
For all agents $i$, states $s$, parent actions $a^{\mathcal{P}^i}$, categorize the local action $a^i$ into three groups:
$$I_0^{s,a^{\mathcal{P}^i},i}=\Bigg\{a^i|Q^{\infty,i}(s,a^{\mathcal{P}^i},a^i)=Q^{\infty,i}(s,a^{\mathcal{P}^i})\Bigg\}$$
$$I_+^{s,a^{\mathcal{P}^i},i}=\Bigg\{a^i|Q^{\infty,i}(s,a^{\mathcal{P}^i},a^i)>Q^{\infty,i}(s,a^{\mathcal{P}^i})\Bigg\}$$
$$I_-^{s,a^{\mathcal{P}^i},i}=\Bigg\{a^i|Q^{\infty,i}(s,a^{\mathcal{P}^i},a^i)<Q^{\infty,i}(s,a^{\mathcal{P}^i})\Bigg\}$$
Since $Q^{t,i}(s,a^{\mathcal{P}^i},a^i)\rightarrow Q^{\infty,i}(s,a^{\mathcal{P}^i},a^i)$ as $t\rightarrow \infty$, there exists a $T_0$ such that for all $t>T_0$, agents $i$, states $s$, parent actions $a^{\mathcal{P}^i}$, $$Q^{\infty,i}(s,a^{\mathcal{P}^i},a^i)-\frac{\Delta}{4}\leq Q^{t,i}(s,a^{\mathcal{P}^i},a^i)\leq Q^{\infty,i}(s,a^{\mathcal{P}^i},a^i)+\frac{\Delta}{4}$$
\end{proof}

\begin{lemma}
\label{lemma_6}
$\exists T_1$ such that $\forall t>T_1,i,s,a^{\mathcal{P}^i},a^i$, we have $$A^{t,i}(s,a^{\mathcal{P}^i},a^i)<-\frac{\Delta}{4}\text{ for }a^i\in I_-^{s,a^{\mathcal{P}^i},i};A^{t,i}(s,a^{\mathcal{P}^i},a^i)>\frac{\Delta}{4}\text{ for }a^i\in I_+^{s,a^{\mathcal{P}^i},i}$$
\end{lemma}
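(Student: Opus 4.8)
The plan is to derive the two advantage bounds directly from the $Q$-value convergence already secured in Lemma \ref{lemma_5}, combined with the defining property of $\Delta$. The crucial observation is that for any local action $a^i \in I_+^{s,a^{\mathcal{P}^i},i}$ or $a^i \in I_-^{s,a^{\mathcal{P}^i},i}$, the limiting advantage $A^{\infty,i}(s,a^{\mathcal{P}^i},a^i) = Q^{\infty,i}(s,a^{\mathcal{P}^i},a^i) - Q^{\infty,i}(s,a^{\mathcal{P}^i})$ is by construction nonzero, so it enters the minimization defining $\Delta$. Hence $A^{\infty,i}(s,a^{\mathcal{P}^i},a^i) \geq \Delta$ on $I_+^{s,a^{\mathcal{P}^i},i}$ and $A^{\infty,i}(s,a^{\mathcal{P}^i},a^i) \leq -\Delta$ on $I_-^{s,a^{\mathcal{P}^i},i}$. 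The goal then reduces to showing that the time-$t$ advantage $A^{t,i}$ stays within $\Delta/2$ of its limit for all sufficiently large $t$, which suffices to push it past the threshold $\pm \Delta/4$.

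First I would invoke from Lemma \ref{lemma_5} the convergences $Q^{t,i}(s,a^{\mathcal{P}^i},a^i) \to Q^{\infty,i}(s,a^{\mathcal{P}^i},a^i)$ and $Q^{t,i}(s,a^{\mathcal{P}^i}) \to Q^{\infty,i}(s,a^{\mathcal{P}^i})$. Lemma \ref{lemma_5} already supplies a $T_0$ with $|Q^{t,i}(s,a^{\mathcal{P}^i},a^i) - Q^{\infty,i}(s,a^{\mathcal{P}^i},a^i)| \leq \Delta/4$ for $t > T_0$; by the identical argument applied to the baseline sequence, and using finiteness of the index set over $i$, $s$, $a^{\mathcal{P}^i}$ to obtain a single uniform threshold, I would pick $T_1 \geq T_0$ so that additionally $|Q^{t,i}(s,a^{\mathcal{P}^i}) - Q^{\infty,i}(s,a^{\mathcal{P}^i})| \leq \Delta/4$ for all $t > T_1$. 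I would then combine the two bounds by expanding $A^{t,i}$ and $A^{\infty,i}$ into their $Q$-components: for $a^i \in I_+^{s,a^{\mathcal{P}^i},i}$,
\begin{align*}
A^{t,i}(s,a^{\mathcal{P}^i},a^i) &= Q^{t,i}(s,a^{\mathcal{P}^i},a^i) - Q^{t,i}(s,a^{\mathcal{P}^i}) \\
&\geq A^{\infty,i}(s,a^{\mathcal{P}^i},a^i) - \frac{\Delta}{2} \geq \Delta - \frac{\Delta}{2} = \frac{\Delta}{2} > \frac{\Delta}{4},
\end{align*}
and symmetrically, for $a^i \in I_-^{s,a^{\mathcal{P}^i},i}$, one gets $A^{t,i}(s,a^{\mathcal{P}^i},a^i) \leq A^{\infty,i}(s,a^{\mathcal{P}^i},a^i) + \frac{\Delta}{2} \leq -\frac{\Delta}{2} < -\frac{\Delta}{4}$. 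Since the indexing over agents, states, parent actions, and local actions is finite, the same $T_1$ works uniformly, which closes the argument.

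I expect no substantive obstacle here, as the argument is essentially a bookkeeping combination of the convergence already proved. The only point genuinely requiring care is that Lemma \ref{lemma_5} states the $\Delta/4$-closeness only for the augmented state-action term $Q^{t,i}(s,a^{\mathcal{P}^i},a^i)$, so I must separately establish the analogous closeness for the baseline term $Q^{t,i}(s,a^{\mathcal{P}^i})$ and then take the larger of the two time thresholds. Because all index sets are finite, uniformity over $(i,s,a^{\mathcal{P}^i},a^i)$ is automatic, so no additional uniform-convergence or continuity argument is needed beyond the pointwise convergence of finitely many sequences.
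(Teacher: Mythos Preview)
Your proposal is correct and mirrors the paper's own proof essentially line for line: both choose $T_1\geq T_0$ so that the baseline $Q^{t,i}(s,a^{\mathcal{P}^i})$ is within $\Delta/4$ of its limit, then combine this with the $\Delta/4$-closeness of $Q^{t,i}(s,a^{\mathcal{P}^i},a^i)$ from Lemma~\ref{lemma_5} and the fact that $|A^{\infty,i}|\geq\Delta$ on $I_\pm^{s,a^{\mathcal{P}^i},i}$. The only cosmetic difference is that you package the two $\Delta/4$ errors into a single $\Delta/2$ deviation of $A^{t,i}$ from $A^{\infty,i}$, yielding the slightly sharper intermediate bound $\pm\Delta/2$, whereas the paper writes the chain of inequalities out explicitly to arrive at the stated $\pm\Delta/4$.
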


\begin{proof}
Since $\forall s, i, a^{\mathcal{P}^i}, Q^{t,i}(s,a^{\mathcal{P}^i})\rightarrow Q^{\infty,i}(s,a^{\mathcal{P}^i})$, we have that there exists $T_1>T_0$ such that for all $t>T_1$, $$Q^{\infty,i}(s,a^{\mathcal{P}^i})-\frac{\Delta}{4}\leq Q^{t,i}(s,a^{\mathcal{P}^i})\leq Q^{\infty,i}(s,a^{\mathcal{P}^i})+\frac{\Delta}{4}$$
For $a^i\in I_-^{s,a^{\mathcal{P}^i},i},$  
\begin{equation} 
\begin{split}
A^{t,i}(s,a^{\mathcal{P}^i},a^i) & = Q^{t,i}(s,a^{\mathcal{P}^i},a^i)-Q^{t,i}(s,a^{\mathcal{P}^i}) \\
 & \leq Q^{\infty,i}(s,a^{\mathcal{P}^i},a^i)+\frac{\Delta}{4}-Q^{t,i}(s,a^{\mathcal{P}^i})
 \\
 & \leq Q^{\infty,i}(s,a^{\mathcal{P}^i},a^i)+\frac{\Delta}{4}-Q^{\infty,i}(s,a^{\mathcal{P}^i})+\frac{\Delta}{4}
 \\
 & \leq -\Delta+\frac{\Delta}{4}+\frac{\Delta}{4}
 \\
 & < -\frac{\Delta}{4}
\end{split}
\end{equation}

For $a^i\in I_+^{s,a^{\mathcal{P}^i},i},$  
\begin{equation}
\begin{split}
A^{t,i}(s,a^{\mathcal{P}^i},a^i) & = Q^{t,i}(s,a^{\mathcal{P}^i},a^i)-Q^{t,i}(s,a^{\mathcal{P}^i}) \\
 & \geq Q^{\infty,i}(s,a^{\mathcal{P}^i},a^i)-\frac{\Delta}{4}-Q^{t,i}(s,a^{\mathcal{P}^i})
 \\
 & \geq Q^{\infty,i}(s,a^{\mathcal{P}^i},a^i)-\frac{\Delta}{4}-Q^{\infty,i}(s,a^{\mathcal{P}^i})-\frac{\Delta}{4}
 \\
 & \geq \Delta-\frac{\Delta}{4}-\frac{\Delta}{4}
 \\
 & > \frac{\Delta}{4}
\end{split}
\end{equation}
\end{proof}

\begin{lemma}
\label{lemma_7}
$\frac{\partial V^{t}(\mu)}{\partial \theta_{s,a^{\mathcal{P}^i}, a^i}^{i}}\rightarrow 0 \text{ as }t\rightarrow\infty$ for all agents $i$, states $s$, parent actions $a^{\mathcal{P}^i}$, local action $a^i$. This implies that $\forall i, \forall a^{\mathcal{P}^i}$, if $\lim_{t \to \infty} d_\mu^{\pi^t}(s,a^{\mathcal{P}^i})>0$, then $\forall a^i\in I_-^{s,a^{\mathcal{P}^i},i}\cup I_+^{s,a^{\mathcal{P}^i},i}, \pi^{t,i}(a^i|s,a^{\mathcal{P}^i})\rightarrow 0$ and that $\sum_{a^i\in I_0^{s,a^{\mathcal{P}^i},i}}\pi^{t,i}(a^i|s, a^{\mathcal{P}^i})\rightarrow 1$.
\end{lemma}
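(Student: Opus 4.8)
The plan is to split the argument into two independent parts: first establishing that the full parameter gradient vanishes along the dynamics, and then feeding this into the explicit gradient formula of Lemma \ref{BN_PG} to read off the claimed behaviour of the local policies.

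For the first part I would invoke the standard ascent guarantee for smooth objectives. By Lemma \ref{lemma:smoothness_of_V}, $V$ is $L$-smooth with $L=\frac{8N(r_{\rm max}-r_{\rm min})}{(1-\gamma)^3}$, and the stepsize satisfies $\eta\leq 1/L$. The descent lemma applied to the simultaneous update $\theta_{t+1}=\theta_t+\eta\nabla_\theta V^t(\mu)$ then yields $V^{t+1}(\mu)-V^t(\mu)\geq \frac{\eta}{2}\|\nabla_\theta V^t(\mu)\|_2^2$. Summing over $t$ and using that $\{V^t(\mu)\}$ is nondecreasing (Lemma \ref{lemma_4}) and bounded above (Assumption \ref{assumption:Reward function is bounded}), the left-hand side telescopes to a finite quantity, forcing $\sum_t\|\nabla_\theta V^t(\mu)\|_2^2<\infty$ and hence $\|\nabla_\theta V^t(\mu)\|_2\to 0$. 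In particular every coordinate $\frac{\partial V^t(\mu)}{\partial\theta^i_{s,a^{\mathcal{P}^i},a^i}}\to 0$, which is the first assertion.

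For the second part I would substitute the vanishing gradient into the closed form from Lemma \ref{BN_PG}, namely $\frac{\partial V^t(\mu)}{\partial\theta^i_{s,a^{\mathcal{P}^i},a^i}}=\frac{1}{1-\gamma}\,d_\mu^{\pi^t}(s,a^{\mathcal{P}^i})\,\pi^{t,i}(a^i|s,a^{\mathcal{P}^i})\,A^{t,i}(s,a^{\mathcal{P}^i},a^i)$. Under the hypothesis $\lim_t d_\mu^{\pi^t}(s,a^{\mathcal{P}^i})>0$, the visitation factor is bounded below by a positive constant for all large $t$, so the product $\pi^{t,i}(a^i|s,a^{\mathcal{P}^i})A^{t,i}(s,a^{\mathcal{P}^i},a^i)\to 0$. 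For $a^i\in I_-^{s,a^{\mathcal{P}^i},i}\cup I_+^{s,a^{\mathcal{P}^i},i}$, Lemma \ref{lemma_6} gives the uniform lower bound $|A^{t,i}(s,a^{\mathcal{P}^i},a^i)|>\Delta/4$ for all $t>T_1$; dividing the vanishing product by this advantage yields $|\pi^{t,i}(a^i|s,a^{\mathcal{P}^i})|<\frac{4}{\Delta}|\pi^{t,i}A^{t,i}|\to 0$. Finally, since $\{I_0,I_+,I_-\}$ partition $\mathcal{A}^i$ and the local probabilities sum to one, $\sum_{a^i\in I_0^{s,a^{\mathcal{P}^i},i}}\pi^{t,i}(a^i|s,a^{\mathcal{P}^i})=1-\sum_{a^i\in I_+\cup I_-}\pi^{t,i}(a^i|s,a^{\mathcal{P}^i})\to 1$.

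The main subtlety, and the only place the hypothesis is genuinely used, is guaranteeing that the visitation factor $d_\mu^{\pi^t}(s,a^{\mathcal{P}^i})$ stays uniformly bounded away from zero so that dividing it out is legitimate; this is exactly what $\lim_t d_\mu^{\pi^t}(s,a^{\mathcal{P}^i})>0$ supplies, and the limit is well-defined because the policy converges (Assumption \ref{assumption:PG}) and the visitation measure depends continuously on the policy. Everything else is bookkeeping: the advantage bound of Lemma \ref{lemma_6} is uniform in $t$ past $T_1$, so the ratio estimate is clean and needs no further work.
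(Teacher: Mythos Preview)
Your proposal is correct and follows essentially the same route as the paper: invoke smoothness of $V$ (Lemma \ref{lemma:smoothness_of_V}) together with the stepsize bound to conclude the gradient vanishes, then plug into the explicit gradient formula (Lemma \ref{BN_PG}) and use the advantage lower bound from Lemma \ref{lemma_6} plus the positive-visitation hypothesis to force $\pi^{t,i}\to 0$ on $I_+\cup I_-$. Your write-up is in fact more explicit than the paper's (which simply asserts the gradient vanishes from smoothness without spelling out the descent-lemma telescoping), and your remark that Assumption \ref{assumption:PG} guarantees the visitation limit is well-defined is a nice touch the paper leaves implicit.
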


\begin{proof}
Since $V^{\pi_\theta}(\mu)$ is $\frac{8N(r_{\rm max}-r_{\rm min})}{(1-\gamma)^3}$-smooth, we know that with learning rate $\eta < \frac{(1-\gamma)^3}{8N(r_{\rm max}-r_{\rm min})}$, $\frac{\partial V^{t}(\mu)}{\partial \theta_{s,a^{\mathcal{P}^i}, a^i}^{i}}\rightarrow 0$ for all $i,s,a^{\mathcal{P}^i},a^i$. From lemma \ref{BN_PG} we have $$\frac{\partial V^{t}(\mu)}{\partial \theta_{s,a^{\mathcal{P}^i}, a^i}^{i}}=\frac{1}{1-\gamma}d_\mu^{\pi^t}(s,a^{\mathcal{P}^i})\pi^{t,i}(a^i|s,a^{\mathcal{P}^i})A^{t,i}(s,a^{\mathcal{P}^i},a^i)$$
\\Since from lemma \ref{lemma_6}, we know that $|A^{\pi_\theta}(s,a^{\mathcal{P}^i},a^i)|>\frac{\Delta}{4}$ for all $t>T_1$, for all $a^i\in I_-^{s,a^{\mathcal{P}^i},i}\cup I_+^{s,a^{\mathcal{P}^i},i}$
, which together with the assumption that $ \lim_{t \to \infty} d_\mu^{\pi^t}(s,a^{\mathcal{P}^i})>0$ proves $\pi^{t,i}(a^i|s,a^{\mathcal{P}^i})\rightarrow 0$. Then we also know for all $\sum_{a^i\in I_0^{s,a^{\mathcal{P}^i},i}}\pi^{t,i}(a^i|s,a^{\mathcal{P}^i})\rightarrow 1$. 
\end{proof}

From Lemma \ref{lemma_8} to Lemma \ref{lemma_14}, we prove the properties under the condition that $\lim_{t \to \infty} d_\mu^{\pi^t}(s,a^{\mathcal{P}^i})>0$, so that with Lemma \ref{lemma_7}, we know that $\pi^{t,i}(a^i|s,a^{\mathcal{P}^i})\rightarrow 0$. 

\begin{lemma}
\label{lemma_8}
For $t\geq T_1$, $\forall i, \forall a^{\mathcal{P}^i}$, if $\lim_{t \to \infty} d_\mu^{\pi^t}(s,a^{\mathcal{P}^i})>0$, then $\theta_{s,a^{\mathcal{P}^i}, a^i}^{i}$ is strictly decreasing $\forall a^i\in I_-^{s,a^{\mathcal{P}^i},i}$ and $\theta_{s,a^{\mathcal{P}^i}, a^i}^{i}$ is strictly increasing $\forall a^i\in I_+^{s,a^{\mathcal{P}^i},i}$.
\end{lemma}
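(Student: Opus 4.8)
The plan is to read off the sign of the per-step increment $\theta^{i,t+1}_{s,a^{\mathcal{P}^i},a^i}-\theta^{i,t}_{s,a^{\mathcal{P}^i},a^i}$ directly from the gradient formula, since the update \eqref{eq:BayesianPG} moves each coordinate by exactly $\eta$ times its own partial derivative. Substituting the expression from Lemma \ref{BN_PG}, this increment equals
$$\frac{\eta}{1-\gamma}\,d_\mu^{\pi^t}(s,a^{\mathcal{P}^i})\,\pi^{t,i}(a^i|s,a^{\mathcal{P}^i})\,A^{t,i}(s,a^{\mathcal{P}^i},a^i),$$
so establishing strict monotonicity reduces entirely to determining the sign of the product of these four factors.

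Next I would check the sign of each factor in turn. The prefactor $\eta/(1-\gamma)$ is strictly positive. The policy factor $\pi^{t,i}(a^i|s,a^{\mathcal{P}^i})$ is strictly positive at every finite $t$, because the softmax parameterization \eqref{eq:tabular softmax local policy} assigns positive probability to every local action. For the augmented visitation factor, I would expand $d_\mu^{\pi^t}(s,a^{\mathcal{P}^i})=d_\mu^{\pi^t}(s)\sum_{a^{-\mathcal{P}^i}}\pi_{\theta^t}(a^{-\mathcal{P}^i},a^{\mathcal{P}^i}|s)$, where Assumption \ref{assumption:discounted state visitation distribution} gives $d_\mu^{\pi^t}(s)>0$ and the full support of the softmax makes the inner sum strictly positive; hence $d_\mu^{\pi^t}(s,a^{\mathcal{P}^i})>0$ at every finite $t$ (in fact this is automatic and does not even require the hypothesis that the limit is positive).

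The sign is therefore governed entirely by the advantage $A^{t,i}(s,a^{\mathcal{P}^i},a^i)$, which is precisely what Lemma \ref{lemma_6} controls for $t>T_1$: it is bounded above by $-\Delta/4<0$ for every $a^i\in I_-^{s,a^{\mathcal{P}^i},i}$ and bounded below by $\Delta/4>0$ for every $a^i\in I_+^{s,a^{\mathcal{P}^i},i}$. Multiplying the signs, the increment is strictly negative on $I_-^{s,a^{\mathcal{P}^i},i}$ and strictly positive on $I_+^{s,a^{\mathcal{P}^i},i}$, which is exactly the claimed strict decrease and strict increase of $\theta^{i}_{s,a^{\mathcal{P}^i},a^i}$.

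I do not anticipate a genuine obstacle, as the statement is a direct sign-chasing consequence of the gradient formula together with Lemma \ref{lemma_6}. The only point requiring a little care is justifying strict positivity of $d_\mu^{\pi^t}(s,a^{\mathcal{P}^i})$ and of $\pi^{t,i}$ at each finite step rather than merely in the limit, since that is what upgrades the conclusion from monotone to \emph{strictly} monotone; both facts follow from the full support of the softmax parameterization combined with Assumption \ref{assumption:discounted state visitation distribution}.
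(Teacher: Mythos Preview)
Your proposal is correct and follows essentially the same approach as the paper: invoke the gradient formula of Lemma~\ref{BN_PG}, use Lemma~\ref{lemma_6} to pin down the sign of the advantage for $t>T_1$, and observe that the remaining factors are strictly positive, so the coordinate update has the claimed sign. Your write-up is in fact slightly more careful than the paper's, since you explicitly justify $d_\mu^{\pi^t}(s,a^{\mathcal{P}^i})>0$ at every finite $t$ via Assumption~\ref{assumption:discounted state visitation distribution} together with the full support of the softmax, and you correctly note that this particular step does not rely on the limiting hypothesis.
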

\begin{proof}
From \ref{BN_PG} we have $$\frac{\partial V^{t}(\mu)}{\partial \theta_{s,a^{\mathcal{P}^i}, a^i}^{i}}=\frac{1}{1-\gamma}d_\mu^{\pi^t}(s,a^{\mathcal{P}^i})\pi^{t,i}(a^i|s,a^{\mathcal{P}^i})A^{t,i}(s,a^{\mathcal{P}^i},a^i)$$
From lemma \ref{lemma_6}, we know for all $t>T_1,a^i\in I_-^{s,a^{\mathcal{P}^i},i}, A^{t,i}(s,a^{\mathcal{P}^i},a^i) < -\frac{\Delta}{4};$ For all $a^i\in I_+^{s,a^{\mathcal{P}^i},i}, A^{t,i}(s,a^{\mathcal{P}^i},a^i) > \frac{\Delta}{4}.$ This implies that after iteration $T_1$, $\frac{\partial V^{t}(\mu)}{\partial \theta_{s,a^{\mathcal{P}^i},a^i}^{i}}<0 \forall a^i\in I_-^{s,a^{\mathcal{P}^i},i};\frac{\partial V^{t}(\mu)}{\partial \theta_{s,a^{\mathcal{P}^i},a^i}^{i}}>0 \forall a^i\in I_+^{s,a^{\mathcal{P}^i},i}.\longrightarrow$ After iteration $T_1$, $\theta_{s,a^{\mathcal{P}^i},a^i}^{i}$ is strictly decreasing $\forall a^i\in I_-^{s,a^{\mathcal{P}^i},i}$ and $\theta_{s,a^{\mathcal{P}^i},a^i}^{i}$ is strictly increasing $\forall a^i\in I_+^{s,a^{\mathcal{P}^i},i}$.
\end{proof}

\begin{lemma}
\label{lemma_9}
For all $i,s,a^{\mathcal{P}^i},a^i$, if $\lim_{t \to \infty} d_\mu^{\pi^t}(s,a^{\mathcal{P}^i})>0$ and $I_+^{s,a^{\mathcal{P}^i},i}\neq \emptyset$, then we have:
$$\max_{a^i\in I_0^{s,a^{\mathcal{P}^i},i}}\theta_{s,a^{\mathcal{P}^i},a^i}^{t,i}\rightarrow\infty, \text{ min}_{a^i\in \mathbb{A}^i}\theta_{s,a^{\mathcal{P}^i},a^i}^{t,i}\rightarrow-\infty$$
\end{lemma}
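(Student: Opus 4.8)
The plan is to prove Lemma \ref{lemma_9} by tracking how the softmax parameters $\theta^{t,i}_{s,a^{\mathcal{P}^i},a^i}$ evolve under the gradient dynamics, exploiting the three established facts: the policy gradient formula from Lemma \ref{BN_PG}, the strict monotonicity of parameters in $I_+$ and $I_-$ after time $T_1$ from Lemma \ref{lemma_8}, and the fact from Lemma \ref{lemma_7} that $\pi^{t,i}(a^i|s,a^{\mathcal{P}^i})\to 0$ for every $a^i\in I_+^{s,a^{\mathcal{P}^i},i}\cup I_-^{s,a^{\mathcal{P}^i},i}$ (valid precisely because we have assumed $\lim_{t\to\infty} d_\mu^{\pi^t}(s,a^{\mathcal{P}^i})>0$). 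The key observation is that in the softmax, the probability of an action depends only on the \emph{differences} of the logits; so the fact that $I_+$-probabilities vanish while $I_0$-probabilities sum to $1$ forces the logits of $I_+$ (and symmetrically $I_-$) to diverge relative to those of $I_0$.

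First I would pin down the relationship between probabilities and logits. For the softmax local policy \eqref{eq:tabular softmax local policy}, for any two actions $a^i$ and $b^i$ (sharing the same $s,a^{\mathcal{P}^i}$) we have $\pi^{t,i}(a^i|s,a^{\mathcal{P}^i})/\pi^{t,i}(b^i|s,a^{\mathcal{P}^i})=\exp(\theta^{t,i}_{s,a^{\mathcal{P}^i},a^i}-\theta^{t,i}_{s,a^{\mathcal{P}^i},b^i})$. Since $I_+^{s,a^{\mathcal{P}^i},i}\neq\emptyset$, pick some $a^i_+\in I_+$. By Lemma \ref{lemma_7}, $\pi^{t,i}(a^i_+|s,a^{\mathcal{P}^i})\to 0$, while $\sum_{b^i\in I_0}\pi^{t,i}(b^i|s,a^{\mathcal{P}^i})\to 1$, so some fixed $b^i_0\in I_0$ has probability bounded below along a subsequence. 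The ratio therefore tends to $0$, forcing $\theta^{t,i}_{s,a^{\mathcal{P}^i},a^i_+}-\theta^{t,i}_{s,a^{\mathcal{P}^i},b^i_0}\to-\infty$. Next I would use Lemma \ref{lemma_8}: after $T_1$, the $I_+$-parameter $\theta^{t,i}_{s,a^{\mathcal{P}^i},a^i_+}$ is strictly \emph{increasing}, hence bounded below by its value at $T_1$; combined with the ratio going to $-\infty$ this forces $\theta^{t,i}_{s,a^{\mathcal{P}^i},b^i_0}\to+\infty$, and since $b^i_0\in I_0$ this yields $\max_{a^i\in I_0}\theta^{t,i}_{s,a^{\mathcal{P}^i},a^i}\to\infty$.

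For the second claim, $\min_{a^i\in\mathbb{A}^i}\theta^{t,i}_{s,a^{\mathcal{P}^i},a^i}\to-\infty$, I would argue that the logits cannot all diverge to $+\infty$ together while the probabilities (a normalized distribution) behave as required. Concretely, having shown some $I_0$-logit $\to+\infty$, I would show some logit $\to-\infty$ by considering an action in $I_+\cup I_-$ whose probability vanishes: since its probability relative to the diverging $b^i_0$ tends to $0$ and $\theta^{t,i}_{s,a^{\mathcal{P}^i},b^i_0}\to+\infty$, if that action's logit stayed bounded below the probability would not vanish fast enough is not quite the right phrasing—rather, its probability ratio against $b^i_0$ is $\exp(\theta_{\text{action}}-\theta_{b^i_0})\to 0$, so $\theta_{\text{action}}-\theta_{b^i_0}\to-\infty$; since this must go to $-\infty$ and we cannot conclude boundedness of $\theta_{\text{action}}$ directly, I would instead invoke Lemma \ref{lemma_8} on an $I_-$ action (strictly decreasing) if $I_-\neq\emptyset$, giving $\theta^{t,i}_{s,a^{\mathcal{P}^i},a^i}\to-\infty$ directly; if $I_-=\emptyset$, then the vanishing of the $I_+$ probabilities together with $\theta_{b^i_0}\to+\infty$ and the ratio relation forces the $I_+$ logits (which are increasing but more slowly, i.e. $\theta_{a^i_+}-\theta_{b^i_0}\to-\infty$ with $\theta_{a^i_+}$ increasing) to be the argument—here I must be careful, since an increasing sequence cannot tend to $-\infty$.

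The main obstacle, and the place I would spend the most care, is exactly this last point: reconciling that an $I_+$ logit is \emph{increasing} (Lemma \ref{lemma_8}) yet its probability \emph{vanishes} (Lemma \ref{lemma_7}). The resolution is that the normalizing denominator $\sum_{c^i}\exp(\theta^{t,i}_{s,a^{\mathcal{P}^i},c^i})$ blows up because $\theta_{b^i_0}\to+\infty$, so an $I_+$ logit that grows slowly than $\theta_{b^i_0}$ still has vanishing probability. This shows the $\min$ over $\mathbb{A}^i$ must go to $-\infty$ by a different route: since the max $I_0$ logit $\to+\infty$ and the differences force a spread, I would establish that if $I_-\neq\emptyset$ the minimum is achieved there and diverges downward by Lemma \ref{lemma_8}; the case analysis on whether $I_-$ is empty is the delicate bookkeeping. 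I would carefully state and handle both cases, using the strict monotonicity directions from Lemma \ref{lemma_8} as the backbone and the softmax ratio identity as the quantitative lever, to conclude that the minimum logit diverges to $-\infty$.
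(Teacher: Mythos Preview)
Your argument for the first claim, $\max_{a^i\in I_0}\theta^{t,i}_{s,a^{\mathcal{P}^i},a^i}\to\infty$, is in the right spirit but has a technical wrinkle: by pigeonhole you only get a \emph{fixed} $b^i_0\in I_0$ whose probability is bounded below along a \emph{subsequence}, so your ratio argument yields $\theta^{t,i}_{b^i_0}\to+\infty$ only along that subsequence, not for the full sequence. The paper avoids this by arguing directly with the normalizing constant: since $\theta^{t,i}_{a^i_+}$ is increasing (Lemma~\ref{lemma_8}) the numerator $\exp(\theta^{t,i}_{a^i_+})$ is bounded below, and $\pi^{t,i}(a^i_+)\to 0$ then forces the full denominator $\sum_{a^i}\exp(\theta^{t,i}_{a^i})\to\infty$; combined with $\sum_{a^i\in I_0}\pi^{t,i}(a^i)\to 1$ this gives $\sum_{a^i\in I_0}\exp(\theta^{t,i}_{a^i})\to\infty$, hence $\max_{a^i\in I_0}\theta^{t,i}_{a^i}\to\infty$ along the full sequence.

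The genuine gap is in the second claim, $\min_{a^i\in\mathbb{A}^i}\theta^{t,i}_{s,a^{\mathcal{P}^i},a^i}\to-\infty$. Your case analysis does not go through. If $I_-\neq\emptyset$, Lemma~\ref{lemma_8} only says an $I_-$ logit is strictly \emph{decreasing} after $T_1$; a strictly decreasing sequence need not diverge to $-\infty$ (indeed, the paper establishes $\theta^{t,i}_{a^i_-}\to-\infty$ only later, in Lemma~\ref{lemma_13}, with substantially more work). And in the case $I_-=\emptyset$ you correctly note that an increasing $I_+$ logit cannot tend to $-\infty$, but then have nothing left. The idea you are missing is a \emph{conservation law}: from the gradient formula in Lemma~\ref{BN_PG},
\[
\sum_{a^i\in\mathcal{A}^i}\frac{\partial V^t(\mu)}{\partial\theta^i_{s,a^{\mathcal{P}^i},a^i}}
=\frac{1}{1-\gamma}\,d_\mu^{\pi^t}(s,a^{\mathcal{P}^i})\sum_{a^i}\pi^{t,i}(a^i\mid s,a^{\mathcal{P}^i})\,A^{t,i}(s,a^{\mathcal{P}^i},a^i)=0,
\]
so $\sum_{a^i}\theta^{t,i}_{s,a^{\mathcal{P}^i},a^i}$ equals its initial value for all $t$. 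Since the maximum diverges to $+\infty$ while the sum stays constant, the minimum must diverge to $-\infty$. This single observation replaces your entire case analysis.
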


\begin{proof}
Since $I_+^{s,a^{\mathcal{P}^i},i}\neq \emptyset$, we have some action $a_+^i\in I_+^{s,a^{\mathcal{P}^i},i}$. From lemma \ref{eq:7}, we know $$\pi^{t,i}(a^i_+|s,a^{\mathcal{P}^i})\rightarrow 0 \text{ as }t\rightarrow \infty$$ 
$$\longrightarrow \frac{\text{exp}(\theta_{s,a^{\mathcal{P}^i},a_+^i}^{t,i})}{\sum_{a^i\in \mathbb{A}^i}\text{exp}(\theta_{s,a^{\mathcal{P}^i},a^i}^{t,i})}\rightarrow 0 \text{ as }t\rightarrow \infty$$
From lemma \ref{lemma_8} we know $\theta_{s,a^{\mathcal{P}^i},a_+^i}^{t,i}$ is monotonically increasing, which implies 
$$\sum_{a^i\in \mathbb{A}^i}\text{exp}(\theta_{s,a^{\mathcal{P}^i},a^i}^{t,i})\rightarrow \infty \text{ as }t\rightarrow \infty$$
From lemma \ref{lemma_7}, we also know $$\sum_{a^i\in I_0^{s,a^{\mathcal{P}^i},i}}\pi^{t,i}(a^i|s,a^{\mathcal{P}^i})\rightarrow 1$$
$$\longrightarrow \frac{\sum_{a^i\in I_0^{s,a^{\mathcal{P}^i},i}}\text{exp}(\theta_{s,a^{\mathcal{P}^i},a^i}^{t,i})}{\sum_{a^i\in \mathbb{A}^i}\text{exp}(\theta_{s,a^{\mathcal{P}^i},a^i}^{t,i})}\rightarrow 1$$
Since denominator does to $\infty$, we know
$$\sum_{a^i\in I_0^{s,a^{\mathcal{P}^i},i}}\text{exp}(\theta_{s,a^{\mathcal{P}^i},a^i}^{t,i})\rightarrow\infty$$
which implies $$\max_{a^i\in I_0^{s,a^{\mathcal{P}^i},i}}\theta_{s,a^{\mathcal{P}^i},a^i}^{t,i}\rightarrow\infty$$
Note this also implies $\max_{a^i\in \mathbb{A}^i}\theta_{s,a^{\mathcal{P}^i},a^i}^{t,i}\rightarrow\infty$. The sum of the gradient is always zero: $\sum_{a^i\in \mathbb{A}^i}\frac{\partial V^{t}(\mu)}{\partial \theta_{s,a^{\mathcal{P}^i},a^i}^{i}}=\frac{1}{1-\gamma}d_\mu^{\pi^t}(s)\pi^{t,\mathcal{P}^i}(a^{\mathcal{P}^i}|s)\sum_{a^i\in \mathbb{A}^i}\pi^{t,i}(a^i|s,a^{\mathcal{P}^i})A^{t,i}(s,a^{\mathcal{P}^i},a^i)=0$. Thus, $\sum_{a^i\in \mathbb{A}^i}\theta_{s,a^{\mathcal{P}^i},a^i}^{t,i}=\sum_{a^i\in \mathbb{A}^i}\theta_{s,a^{\mathcal{P}^i},a^i}^{0,i}$ which is a constant. Since $\max_{a^i\in \mathbb{A}^i}\theta_{s,a^{\mathcal{P}^i},a^i}^{t,i}\rightarrow\infty$, we know $$\min_{a^i\in \mathbb{A}^i}\theta_{s,a^{\mathcal{P}^i},a^i}^{t,i}\rightarrow-\infty$$
\end{proof}

\begin{lemma}
\label{lemma_10}
For some $s,i,a^{\mathcal{P}^i}$, suppose $a_+^i\in I_+^{s,a^{\mathcal{P}^i},i}$. $\forall a\in I_0^{s,a^{\mathcal{P}^i},i},$ if $\exists t\geq T_1$ such that $\pi^{t,i}(a|s,a^{\mathcal{P}^i})\leq \pi^{t,i}(a_+^i|s,a^{\mathcal{P}^i})$, then $\forall \tau\geq t, \pi^{\tau,i}(a|s,a^{\mathcal{P}^i})\leq \pi^{\tau,i}(a_+^i|s,a^{\mathcal{P}^i})$.  

\end{lemma}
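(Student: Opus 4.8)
The plan is to reduce the statement to an invariance of the softmax logits and then prove that invariance by induction. Fix $s,i,a^{\mathcal{P}^i}$ throughout and, for $b\in\{a,a_+^i\}$, abbreviate $p^\tau_b:=\pi^{\tau,i}(b|s,a^{\mathcal{P}^i})$, $A^\tau_b:=A^{\tau,i}(s,a^{\mathcal{P}^i},b)$, and $\vartheta^\tau_b:=\theta^{\tau,i}_{s,a^{\mathcal{P}^i},b}$. Because the local policy in Equation \eqref{eq:tabular softmax local policy} is a softmax over $b$ whose normalizer is independent of $b$, and $\exp$ is increasing, we have $p^\tau_a\le p^\tau_{a_+^i}$ if and only if $\vartheta^\tau_a\le\vartheta^\tau_{a_+^i}$. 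Hence it suffices to show that $\vartheta^\tau_a\le\vartheta^\tau_{a_+^i}$, once it holds at some $\tau=t\ge T_1$, is preserved by the update \eqref{eq:BayesianPG} for all $\tau\ge t$. Writing $c^\tau:=\tfrac{1}{1-\gamma}d_\mu^{\pi^\tau}(s,a^{\mathcal{P}^i})\ge0$ (a scaled visitation measure), Lemma \ref{BN_PG} gives
\begin{align*}
\vartheta^{\tau+1}_{a_+^i}-\vartheta^{\tau+1}_{a}=\big(\vartheta^{\tau}_{a_+^i}-\vartheta^{\tau}_{a}\big)+\eta\,c^\tau\big(p^\tau_{a_+^i}A^\tau_{a_+^i}-p^\tau_{a}A^\tau_{a}\big).
\end{align*}
Since $\eta,c^\tau\ge0$ and the first parenthesis is nonnegative by the inductive hypothesis, the inductive step reduces to proving $p^\tau_{a_+^i}A^\tau_{a_+^i}-p^\tau_{a}A^\tau_{a}\ge0$ whenever $p^\tau_a\le p^\tau_{a_+^i}$.

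I would argue this by splitting on the sign of $A^\tau_a$, using that $a_+^i\in I_+^{s,a^{\mathcal{P}^i},i}$ gives $A^\tau_{a_+^i}>\tfrac{\Delta}{4}>0$ for all $\tau>T_1$ by Lemma \ref{lemma_6}. If $A^\tau_a\le0$, then $-p^\tau_aA^\tau_a\ge0$ and $p^\tau_{a_+^i}A^\tau_{a_+^i}\ge0$, so the difference is nonnegative. If $A^\tau_a>0$, then from $0\le p^\tau_a\le p^\tau_{a_+^i}$ and $0<A^\tau_a\le A^\tau_{a_+^i}$ (the latter inequality secured below) we get $p^\tau_aA^\tau_a\le p^\tau_{a_+^i}A^\tau_a\le p^\tau_{a_+^i}A^\tau_{a_+^i}$, again giving the claim. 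In both cases $\vartheta^{\tau+1}_{a_+^i}-\vartheta^{\tau+1}_a\ge\vartheta^{\tau}_{a_+^i}-\vartheta^{\tau}_a\ge0$, closing the induction; translating logits back to probabilities yields $p^\tau_a\le p^\tau_{a_+^i}$ for every $\tau\ge t$.

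The main obstacle is exactly the inequality $A^\tau_a\le A^\tau_{a_+^i}$ needed in the second case, i.e. controlling the advantage of the $I_0$ action $a$. Lemma \ref{lemma_6} says nothing about the sign of $A^\tau_a$ for $a\in I_0^{s,a^{\mathcal{P}^i},i}$, and the crude triangle-inequality bound coming from Lemma \ref{lemma_5} yields only $|A^\tau_a|\le\tfrac{\Delta}{2}$, which does not dominate $A^\tau_{a_+^i}>\tfrac{\Delta}{4}$. The fix is to enlarge $T_1$: by the definition of $I_0^{s,a^{\mathcal{P}^i},i}$ we have $Q^{\infty,i}(s,a^{\mathcal{P}^i},a)=Q^{\infty,i}(s,a^{\mathcal{P}^i})$, so $A^\tau_a=Q^{\tau,i}(s,a^{\mathcal{P}^i},a)-Q^{\tau,i}(s,a^{\mathcal{P}^i})\to0$ by the convergence of the $Q$-quantities established in Lemma \ref{lemma_5}. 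Thus we may take $T_1$ large enough that $A^\tau_a<\tfrac{\Delta}{4}<A^\tau_{a_+^i}$ for all $\tau\ge T_1$ and all $a\in I_0^{s,a^{\mathcal{P}^i},i}$, which secures $A^\tau_a\le A^\tau_{a_+^i}$ in the second case and completes the argument.
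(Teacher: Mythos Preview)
Your inductive approach on the logits is exactly the paper's argument, and your case split on the sign of $A^\tau_a$ is the right way to organize it. The only issue is how you establish the key comparison $A^\tau_a\le A^\tau_{a_+^i}$ in the second case.

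You try to control $A^\tau_a$ and $A^\tau_{a_+^i}$ separately, observe that the crude bounds $|A^\tau_a|\le\Delta/2$ and $A^\tau_{a_+^i}>\Delta/4$ do not mesh, and then propose to enlarge $T_1$ so that $A^\tau_a<\Delta/4$. That last step is problematic: $T_1$ is already fixed by Lemma~\ref{lemma_6} and appears in the \emph{hypothesis} of the lemma you are proving. Replacing it by a larger threshold proves a strictly weaker statement, and since the partition $B_0/\bar B_0$ and the downstream Lemmas~\ref{lemma_11}--\ref{lemma_12} refer back to these thresholds, you would have to chase the change through the rest of the argument.

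The paper avoids this entirely by noticing that the baseline cancels:
\[
A^\tau_{a_+^i}-A^\tau_a
\;=\;
Q^{\tau,i}(s,a^{\mathcal{P}^i},a_+^i)-Q^{\tau,i}(s,a^{\mathcal{P}^i},a),
\]
so no control of $Q^{\tau,i}(s,a^{\mathcal{P}^i})$ is needed. Applying the two-sided bound of Lemma~\ref{lemma_5} to both $Q$-values and using $Q^{\infty,i}(s,a^{\mathcal{P}^i},a_+^i)-Q^{\infty,i}(s,a^{\mathcal{P}^i},a)=A^{\infty,i}(s,a^{\mathcal{P}^i},a_+^i)\ge\Delta$ (since $a\in I_0$ and $a_+^i\in I_+$) gives
\[
Q^{\tau,i}(s,a^{\mathcal{P}^i},a_+^i)-Q^{\tau,i}(s,a^{\mathcal{P}^i},a)\;\ge\;\Delta-\tfrac{\Delta}{4}-\tfrac{\Delta}{4}\;=\;\tfrac{\Delta}{2}\;>\;0
\]
for every $\tau>T_0$, hence in particular for every $\tau\ge T_1$. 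This delivers $A^\tau_a<A^\tau_{a_+^i}$ without touching $T_1$, and your induction then goes through verbatim with the lemma as stated.
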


\begin{proof}
Suppose $a_+^i\in I_+^{s,a^{\mathcal{P}^i},i}, a\in I_0^{s,a^{\mathcal{P}^i},i},$ if $\pi^{t,i}(a|s,a^{\mathcal{P}^i})\leq \pi^{t,i}(a_+^i|s,a^{\mathcal{P}^i})$, then
$$\frac{\partial V^{t}(\mu)}{\partial \theta_{s,a^{\mathcal{P}^i}, a^i}^{i}}=\frac{1}{1-\gamma}d_\mu^{\pi^t}(s,a^{\mathcal{P}^i})\pi^{t,i}(a^i|s,a^{\mathcal{P}^i})A^{t,i}(s,a^{\mathcal{P}^i},a^i)$$
$$\frac{\partial V^{t}(\mu)}{\partial \theta_{s,a^{\mathcal{P}^i},a}^{i}}=\frac{1}{1-\gamma}d_\mu^{\pi^t}(s,a^{\mathcal{P}^i})\pi^{t,i}(a^i|s,a^{\mathcal{P}^i})\Big[Q^{t,i}(s,a^{\mathcal{P}^i},a^i)-Q^{t,i}(s,a^{\mathcal{P}^i})\Big]$$
$$\leq \frac{1}{1-\gamma}d_\mu^{\pi_\theta}(s,a^{\mathcal{P}^i})\pi^{t,i}(a_+^i|s,a^{\mathcal{P}^i})\Big[Q^{t,i}(s,a^{\mathcal{P}^i},a_+^i)-Q^{t,i}(s,a^{\mathcal{P}^i})\Big]=\frac{\partial V^{t}(\mu)}{\partial \theta_{s,a^{\mathcal{P}^i},a_+^i}^{i}}$$
where the last step holds because $Q^{{t,i}}(s,a^{\mathcal{P}^i},a_+^i)\geq Q^{{\infty,i}}(s,a^{\mathcal{P}^i},a_+^i)-\frac{\Delta}{4}\geq Q^{{\infty,i}}(s,a^{\mathcal{P}^i},a)+\Delta-\frac{\Delta}{4}\geq Q^{{t,i}}(s,a^{\mathcal{P}^i},a)-\frac{\Delta}{4}+\Delta-\frac{\Delta}{4}> Q^{{t,i}}(s,a^{\mathcal{P}^i},a)$
for $t>T_0$.
\\We can then partition $I_0^{s,a^{\mathcal{P}^i},i}$ into $B_0^{s,a^{\mathcal{P}^i},i}(a_+^i)$ and $\Bar{B}_0^{s,a^{\mathcal{P}^i},i}(a_+^i)$ as follows: 
$$B_0^{s,a^{\mathcal{P}^i},i}(a_+^i):\{a|a\in I_0^{s,a^{\mathcal{P}^i},i}\text{ and }\forall t\geq T_0,\pi^{t,i}(a_+^i|s,\mathcal{P}(i))< \pi^{t,i}(a|s,\mathcal{P}(i))\}$$
$$\Bar{B}_0^{s,a^{\mathcal{P}^i},i}(a_+^i):I_0^{s,a^{\mathcal{P}^i},i}\setminus B_0^{s,\mathcal{P}(i),i}(a_+^i).$$
\end{proof}

\begin{lemma}
\label{lemma_11}
For some $s,i,a^{\mathcal{P}^i}$, if $\lim_{t \to \infty} d_\mu^{\pi^t}(s,a^{\mathcal{P}^i})>0$, then suppose $I_+^{s,a^{\mathcal{P}^i},i}\neq\emptyset$. $\forall a_+^i\in I_+^{s,a^{\mathcal{P}^i},i}$, we have that $B_0^{s,a^{\mathcal{P}^i},i}(a_+^i)\neq\emptyset$ and that $$\sum_{a^i\in B_0^{s,a^{\mathcal{P}^i},i}(a_+^i)}\pi^{t,i}(a^i|s,a^{\mathcal{P}^i})\rightarrow 1\text{, as } t\rightarrow\infty.$$
This implies that: $$\max_{a^i\in B_0^{s,a^{\mathcal{P}^i},i}(a_+^i)}\theta_{s,a^{\mathcal{P}^i},a^i}^{t,i}\rightarrow\infty.$$

\end{lemma}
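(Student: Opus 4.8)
The plan is to exploit the partition $I_0^{s,a^{\mathcal{P}^i},i} = B_0^{s,a^{\mathcal{P}^i},i}(a_+^i) \,\sqcup\, \bar{B}_0^{s,a^{\mathcal{P}^i},i}(a_+^i)$ introduced in Lemma \ref{lemma_10}, and to show that asymptotically all the probability mass that Lemma \ref{lemma_7} places on $I_0$ in fact accumulates on $B_0$. First I would fix some $a_+^i \in I_+^{s,a^{\mathcal{P}^i},i}$ (nonempty by hypothesis) and recall from Lemma \ref{lemma_7} that, since $\lim_{t\to\infty} d_\mu^{\pi^t}(s,a^{\mathcal{P}^i})>0$, we have $\pi^{t,i}(a_+^i|s,a^{\mathcal{P}^i})\to 0$ and $\sum_{a^i\in I_0^{s,a^{\mathcal{P}^i},i}}\pi^{t,i}(a^i|s,a^{\mathcal{P}^i})\to 1$.

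The key step is to control $\bar{B}_0$. By definition, every $a\in\bar{B}_0^{s,a^{\mathcal{P}^i},i}(a_+^i)$ satisfies $\pi^{t,i}(a|s,a^{\mathcal{P}^i})\leq \pi^{t,i}(a_+^i|s,a^{\mathcal{P}^i})$ at some time, and by the monotone-ordering property established in Lemma \ref{lemma_10} this inequality then persists for all later $t$. Combined with $\pi^{t,i}(a_+^i|s,a^{\mathcal{P}^i})\to 0$, this forces $\pi^{t,i}(a|s,a^{\mathcal{P}^i})\to 0$ for each $a\in\bar{B}_0$, and since $\bar{B}_0$ is finite, $\sum_{a\in\bar{B}_0}\pi^{t,i}(a|s,a^{\mathcal{P}^i})\to 0$. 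Subtracting this from the total $I_0$ mass gives $\sum_{a^i\in B_0^{s,a^{\mathcal{P}^i},i}(a_+^i)}\pi^{t,i}(a^i|s,a^{\mathcal{P}^i})\to 1$, which is the second claim; nonemptiness of $B_0$ then follows immediately, since an empty $B_0$ would make this sum identically zero.

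For the divergence of the logits I would reuse the fact, shown inside the proof of Lemma \ref{lemma_9}, that the softmax normalizer $Z_t := \sum_{a^i}\exp(\theta^{t,i}_{s,a^{\mathcal{P}^i},a^i})\to\infty$, which holds because $\exp(\theta^{t,i}_{s,a^{\mathcal{P}^i},a_+^i})$ is bounded below (Lemma \ref{lemma_8} gives monotone increase of this logit) while $\pi^{t,i}(a_+^i|s,a^{\mathcal{P}^i})\to 0$. Writing $\sum_{a^i\in B_0}\exp(\theta^{t,i}_{s,a^{\mathcal{P}^i},a^i}) = Z_t\sum_{a^i\in B_0}\pi^{t,i}(a^i|s,a^{\mathcal{P}^i})$ and combining the two limits shows this quantity diverges; since $B_0$ is finite, this forces $\max_{a^i\in B_0}\theta^{t,i}_{s,a^{\mathcal{P}^i},a^i}\to\infty$.

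I expect the main obstacle to be the bookkeeping that turns the definition of $\bar{B}_0$ into a usable ``persists forever'' statement: the set $\bar{B}_0$ is defined via a crossing occurring at some $t\geq T_0$, whereas the invariance in Lemma \ref{lemma_10} is stated for crossings at $t\geq T_1 > T_0$, so I would need to argue carefully (or simply restrict attention to times $t\geq T_1$) that the eventual domination $\pi^{t,i}(a|s,a^{\mathcal{P}^i})\leq\pi^{t,i}(a_+^i|s,a^{\mathcal{P}^i})$ indeed holds for all sufficiently large $t$. The remaining steps are routine limit manipulations.
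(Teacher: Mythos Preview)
Your proposal is correct and follows essentially the same route as the paper: show each element of $\bar{B}_0$ is eventually dominated by $a_+^i$ via Lemma \ref{lemma_10}, use Lemma \ref{lemma_7} to kill its mass, subtract from the $I_0$ total to get $\sum_{B_0}\pi^{t,i}\to 1$ (hence $B_0\neq\emptyset$), and then rerun the normalizer argument from Lemma \ref{lemma_9} to force $\max_{B_0}\theta\to\infty$. Regarding your $T_0$ versus $T_1$ worry, the paper itself invokes the persistence from Lemma \ref{lemma_10} at times $t'>T_0$ (its proof of the ordering inequality only needs $t>T_0$), so you may safely do the same; this is a minor notational slack in the paper rather than a gap in your argument.
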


\begin{proof}
Let $a_+^i\in I_+^{s,a^{\mathcal{P}^i},i}$. Consider any $\Bar{a}^i\in \Bar{B}_0^{s,a^{\mathcal{P}^i},i}(a_+^i)$. Then by definition of $\Bar{B}_0^{s,a^{\mathcal{P}^i},i}(a_+^i)$, there exists $t^\prime>T_0$ such that $\pi^{t^\prime,i}(a_+^i|s,a^{\mathcal{P}^i})\geq \pi^{t^\prime,i}(\Bar{a}^i|s,a^{\mathcal{P}^i})$. From lemma \ref{lemma_10}, we know $\forall \tau>t^\prime,\pi^{\tau,i}(a_+^i|s,a^{\mathcal{P}^i})\geq \pi^{\tau,i}(\Bar{a}^i|s,a^{\mathcal{P}^i})$. From lemma \ref{lemma_7}, we know $\pi^{t,i}(a_+^i|s,a^{\mathcal{P}^i})\rightarrow 0 \text{ as }t\rightarrow \infty$, which implies $$\pi^{t,i}(\Bar{a}^i|s,a^{\mathcal{P}^i})\rightarrow 0 \text{ as }t\rightarrow \infty.$$ \\Since $B_0^{s,a^{\mathcal{P}^i},i}(a_+^i)\cup \Bar{B}_0^{s,a^{\mathcal{P}^i},i}(a_+^i) = I_0^{s,i}$ and $\sum_{a^i\in I_0^{s,a^{\mathcal{P}^i},i}}\pi^{t,i}(a^i|s,a^{\mathcal{P}^i})\rightarrow 1$, we know $$\sum_{a^i\in B_0^{s,a^{\mathcal{P}^i},i}(a_+^i)}\pi^{t,i}(a^i|s,a^{\mathcal{P}^i})\rightarrow 1$$
$$B_0^{s,a^{\mathcal{P}^i},i}(a_+^i)\neq \oldemptyset$$
Using the same techniques in \ref{lemma_9}, we know $$\max_{a^i\in B_0^{s,a^{\mathcal{P}^i},i}(a_+^i)}\theta_{s,a^{\mathcal{P}^i},a^i}^{t,i}\rightarrow\infty$$
\end{proof}

\begin{lemma}
\label{lemma_12}
Consider any $s,a^{\mathcal{P}^i}$, where $I_+^{s,a^{\mathcal{P}^i},i}\neq\emptyset$. Then, $\forall a_+^i\in I_+^{s,a^{\mathcal{P}^i},i}, \exists T_{a^{\mathcal{P}^i},a_+^i}$ such that $\forall t>T_{a^{\mathcal{P}^i},a_+^i},\forall a^i\in\Bar{B}_0^{s,i}(a_+^i),$
$$\pi^{t,i}(a_+^i|s,a^{\mathcal{P}^i}))> \pi^{t,i}(a^i|s,a^{\mathcal{P}^i}))$$
\end{lemma}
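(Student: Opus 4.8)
The plan is to track, for each fixed $a_+^i\in I_+^{s,a^{\mathcal{P}^i},i}$ and each $a^i\in\bar B_0^{s,a^{\mathcal{P}^i},i}(a_+^i)$, the parameter gap $D^{t}:=\theta^{t,i}_{s,a^{\mathcal{P}^i},a_+^i}-\theta^{t,i}_{s,a^{\mathcal{P}^i},a^i}$, and to show it becomes and stays strictly positive. Since the local policy is softmax, the desired inequality $\pi^{t,i}(a_+^i|s,a^{\mathcal{P}^i})>\pi^{t,i}(a^i|s,a^{\mathcal{P}^i})$ is equivalent to $D^{t}>0$, so it suffices to prove $D^{t}>0$ for all large $t$ and then take $T_{a^{\mathcal{P}^i},a_+^i}$ as the maximum of the finitely many thresholds obtained over $a^i\in\bar B_0^{s,a^{\mathcal{P}^i},i}(a_+^i)$.

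First I would record that $D^t$ is eventually nonnegative and nondecreasing. By the definition of $\bar B_0^{s,a^{\mathcal{P}^i},i}(a_+^i)$ together with the persistence established in Lemma \ref{lemma_10} (exactly as invoked in the proof of Lemma \ref{lemma_11}), there is a time $t'\ge T_1$ from which $\pi^{t,i}(a_+^i|s,a^{\mathcal{P}^i})\ge\pi^{t,i}(a^i|s,a^{\mathcal{P}^i})$, i.e.\ $D^t\ge 0$, holds for all $t\ge t'$. Moreover, whenever $D^t\ge 0$ and $t\ge T_1$, the gradient inequality displayed in the proof of Lemma \ref{lemma_10} gives $\partial V^t(\mu)/\partial\theta^i_{s,a^{\mathcal{P}^i},a^i}\le\partial V^t(\mu)/\partial\theta^i_{s,a^{\mathcal{P}^i},a_+^i}$, so the update \eqref{eq:BayesianPG} yields $D^{t+1}\ge D^t$; hence $D^t$ is nondecreasing for $t\ge t'$.

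The crux is to upgrade this weak inequality to a strict one, which is delicate because both $\pi^{t,i}(a_+^i|s,a^{\mathcal{P}^i})$ and $\pi^{t,i}(a^i|s,a^{\mathcal{P}^i})$ tend to $0$ (Lemma \ref{lemma_7} and the argument in Lemma \ref{lemma_11}), so I cannot simply compare limits. Instead I would examine the increment of $D^t$ at the moment it sits at zero. Using the policy-gradient form of Lemma \ref{BN_PG}, when $D^t=0$ we have $\pi^{t,i}(a_+^i|s,a^{\mathcal{P}^i})=\pi^{t,i}(a^i|s,a^{\mathcal{P}^i})=:p>0$, so
\begin{align*}
D^{t+1}-D^{t}=\frac{\eta}{1-\gamma}\,d_\mu^{\pi^t}(s,a^{\mathcal{P}^i})\,p\,\big[A^{t,i}(s,a^{\mathcal{P}^i},a_+^i)-A^{t,i}(s,a^{\mathcal{P}^i},a^i)\big].
\end{align*}
Here $d_\mu^{\pi^t}(s,a^{\mathcal{P}^i})>0$ and $p>0$ hold at every finite $t$ because the softmax assigns positive mass to all actions and $d_\mu^{\pi^t}(s)>0$ by Assumption \ref{assumption:discounted state visitation distribution}. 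By Lemma \ref{lemma_6}, $A^{t,i}(s,a^{\mathcal{P}^i},a_+^i)>\Delta/4$ for $t>T_1$, while $A^{t,i}(s,a^{\mathcal{P}^i},a^i)\to 0$ since $a^i\in I_0^{s,a^{\mathcal{P}^i},i}$ (Lemma \ref{lemma_5}); thus there is $T_2\ge t'$ with $A^{t,i}(s,a^{\mathcal{P}^i},a^i)<\Delta/4<A^{t,i}(s,a^{\mathcal{P}^i},a_+^i)$ for all $t\ge T_2$, making the bracket, and hence the increment, strictly positive. Consequently, for $t\ge T_2$ the nondecreasing sequence $D^t$ strictly increases whenever it equals zero, so it is forced strictly above zero and, being nondecreasing, remains there; setting $T_{a^{\mathcal{P}^i},a_+^i}:=T_2+1$ then gives $D^t>0$, i.e.\ the claim, for all $t>T_{a^{\mathcal{P}^i},a_+^i}$.

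I expect the main obstacle to be precisely this passage from $\ge$ to $>$: the naive route of comparing $t\to\infty$ limits fails because both probabilities vanish, so the argument must be carried out at the level of the parameters $\theta$ through the gap $D^t$, exploiting that the advantage of $a_+^i$ stays bounded away from zero by Lemma \ref{lemma_6} whereas the advantage of an $I_0$-action decays to zero by Lemma \ref{lemma_5}. A secondary technical point to state carefully is the reconciliation of the thresholds $T_0$ and $T_1$ when invoking the persistence of the weak inequality from Lemma \ref{lemma_10}, which I would handle exactly as in the proof of Lemma \ref{lemma_11}.
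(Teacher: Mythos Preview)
Your proposal is correct and follows the same skeleton as the paper's proof: use the definition of $\bar B_0^{s,a^{\mathcal{P}^i},i}(a_+^i)$ together with the persistence from Lemma~\ref{lemma_10} to obtain a per-action threshold, then take the maximum over the finite set $\bar B_0$. The paper's proof is a two-line invocation of exactly these ingredients.

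The one genuine difference is that the paper simply asserts the \emph{strict} inequality $\pi^{\tau,i}(a_+^i\mid s,a^{\mathcal{P}^i})>\pi^{\tau,i}(a^i\mid s,a^{\mathcal{P}^i})$ as a direct consequence of Lemma~\ref{lemma_10}, whereas that lemma literally yields only the weak inequality $\le$. You correctly spot this and supply the missing step: track the parameter gap $D^t$, observe it is eventually nonnegative and nondecreasing, and show via Lemma~\ref{BN_PG} that at any time $D^t=0$ the increment is strictly positive because $A^{t,i}(s,a^{\mathcal{P}^i},a_+^i)>\Delta/4$ (Lemma~\ref{lemma_6}) while $A^{t,i}(s,a^{\mathcal{P}^i},a^i)\to 0$ for $a^i\in I_0$. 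This is the right fix, and it makes your argument more rigorous than the paper's on this point. (For the downstream use of Lemma~\ref{lemma_12} in Lemma~\ref{lemma_15}, the weak inequality would in fact already suffice, but as stated the lemma claims strictness and your argument delivers it.) Your flagged $T_0$/$T_1$ reconciliation is a genuine inconsistency in the paper's definitions rather than a defect in your approach.
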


\begin{proof}
By the definition of $\Bar{B}_0^{s,i}(a_+^i)$ and lemma \ref{lemma_10}, $\forall a^i\in \Bar{B}_0^{s,i}(a_+^i)$, there exists $t_{a^{\mathcal{P}^i},a^i}>T_0$ such that $\forall \tau>t_{a^{\mathcal{P}^i},a^i}$, $\pi^{\tau,i}(a_+^i|s,a^{\mathcal{P}^i})> \pi^{\tau,i}(a^i|s,a^{\mathcal{P}^i})$. We can choose $T_{a^{\mathcal{P}^i},a_+^i}=\max_{a^{\mathcal{P}^i},a^i\in B_0^{s,i}(a_+^i)}t_{a^i}$.
\end{proof}

\begin{lemma}
\label{lemma_13}
$\forall i, a^{\mathcal{P}^i}, a^i$, if $\lim_{t \to \infty} d_\mu^{\pi^t}(s,a^{\mathcal{P}^i})>0$, then we have $\forall a^i_+\in I_+^{s,a^{\mathcal{P}^i},i}$, $\theta_{s,a^{\mathcal{P}^i},a^i_+}^{i}$ is lower bounded as $t\rightarrow\infty$ and $\forall a^i_-\in I_-^{s,a^{\mathcal{P}^i},i}$, $\theta_{s,a^{\mathcal{P}^i},a^i_-}^{i}\rightarrow-\infty$ as $t\rightarrow\infty$. 
\end{lemma}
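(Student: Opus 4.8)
The plan is to treat the two claims separately, since the $I_+$ part is immediate while the $I_-$ part carries all the difficulty. For $a^i_+\in I_+^{s,a^{\mathcal{P}^i},i}$, Lemma~\ref{lemma_8} already establishes that $\theta^{t,i}_{s,a^{\mathcal{P}^i},a^i_+}$ is strictly increasing for all $t>T_1$; a monotonically increasing real sequence is bounded below by its first term, so $\theta^{t,i}_{s,a^{\mathcal{P}^i},a^i_+}\ge \theta^{T_1,i}_{s,a^{\mathcal{P}^i},a^i_+}$ for all $t>T_1$, which is exactly the asserted lower bound. No further work is needed here.

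For $a^i_-\in I_-^{s,a^{\mathcal{P}^i},i}$, Lemma~\ref{lemma_8} gives that $\theta^{t,i}_{s,a^{\mathcal{P}^i},a^i_-}$ is strictly decreasing for $t>T_1$, hence it converges either to a finite limit $L$ or to $-\infty$; the whole task is to exclude the finite limit. First I would quantify the per-step decrease using Lemma~\ref{BN_PG} and Lemma~\ref{lemma_6}: for $t>T_1$ the decrement equals $\eta\,(1-\gamma)^{-1}\,d_\mu^{\pi^t}(s,a^{\mathcal{P}^i})\,\pi^{t,i}(a^i_-|s,a^{\mathcal{P}^i})\,|A^{t,i}(s,a^{\mathcal{P}^i},a^i_-)|$, and the premise $\lim_t d_\mu^{\pi^t}(s,a^{\mathcal{P}^i})>0$ together with $|A^{t,i}|>\Delta/4$ lets me bound it below by a positive constant times $\pi^{t,i}(a^i_-|s,a^{\mathcal{P}^i})$. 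Consequently a finite limit $L$ would force $\sum_t \pi^{t,i}(a^i_-|s,a^{\mathcal{P}^i})<\infty$.

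I would then try to contradict this summability. If $\theta^{t,i}_{s,a^{\mathcal{P}^i},a^i_-}\to L>-\infty$, then $\exp(\theta^{t,i}_{s,a^{\mathcal{P}^i},a^i_-})$ is bounded below by a positive constant, so $\pi^{t,i}(a^i_-|s,a^{\mathcal{P}^i})$ is comparable to $1/Z_t$, where $Z_t:=\sum_{\bar a^i}\exp(\theta^{t,i}_{s,a^{\mathcal{P}^i},\bar a^i})$; summability then forces $\sum_t 1/Z_t<\infty$, i.e.\ $Z_t$ must blow up fast. On the other hand the surviving mass lives on $I_0$: by Lemma~\ref{lemma_7} those actions carry probability tending to $1$, and (when $I_+\ne\emptyset$) by Lemma~\ref{lemma_11} the largest logit in $B_0(a^i_+)$ tends to $+\infty$, so the growth of $Z_t$ is driven by the $I_0$ logits, whose increments are governed by advantages $A^{t,i}$ that \emph{vanish} on $I_0$. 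I would use this vanishing to upper bound the growth rate of $\log Z_t$ and argue it is too slow for $\sum_t 1/Z_t$ to converge, forcing $L=-\infty$.

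The main obstacle is precisely this last step. The claim $\theta^{t,i}_{s,a^{\mathcal{P}^i},a^i_-}\to-\infty$ is strictly stronger than $\pi^{t,i}(a^i_-|s,a^{\mathcal{P}^i})\to 0$, which Lemma~\ref{lemma_7} already supplies: a bounded logit can still have vanishing probability if the normalizer $Z_t$ diverges. Converting ``probability $\to 0$'' into ``logit $\to-\infty$'' therefore demands a quantitative comparison between the decay of $\pi^{t,i}(a^i_-|\cdot)$ and the growth of $Z_t$, equivalently a rate estimate on how fast the surviving $I_0$ logits can grow given that their advantages vanish. Establishing that rate, and separately handling the case $I_+=\emptyset$ where Lemmas~\ref{lemma_9} and~\ref{lemma_11} are unavailable and one must argue directly from the constant-sum identity $\sum_{\bar a^i}\theta^{t,i}_{s,a^{\mathcal{P}^i},\bar a^i}=\mathrm{const}$ (a consequence of the gradients summing to zero, as in the proof of Lemma~\ref{lemma_9}), is the delicate heart of the argument.
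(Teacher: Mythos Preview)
Your treatment of the $I_+$ claim is correct and matches the paper: monotone increasing after $T_1$ gives the lower bound immediately.

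For the $I_-$ claim, your route diverges from the paper's and has a genuine gap. You reduce the contradiction to a \emph{rate} question: you need $\sum_t 1/Z_t=\infty$, which in turn requires an upper bound on the growth of $\log Z_t$, which you hope to extract from the fact that the $I_0$ advantages vanish. But none of the preceding lemmas gives you a rate on how fast $A^{t,i}(s,a^{\mathcal{P}^i},a^i)\to 0$ for $a^i\in I_0$. Without such a rate, the increment of the dominant $I_0$ logit, namely $\eta(1-\gamma)^{-1}d_\mu^{\pi^t}\,\pi^{t,i}(a^i|\cdot)\,A^{t,i}$, can decay arbitrarily slowly, allowing $Z_t$ to grow fast enough that $\sum_t 1/Z_t<\infty$. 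You yourself flag this as ``the delicate heart of the argument''; the point is that the lemmas available to you simply do not contain the information needed to close it.

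The paper's argument avoids rate estimates entirely via a \emph{gradient comparison}. Assuming $\theta^{t,i}_{s,a^{\mathcal{P}^i},a^i_-}\to\theta_0^i$ finite, it invokes Lemma~\ref{lemma_9} (so $I_+\neq\emptyset$ is being used) to obtain some action $a^i$ with $\liminf_t \theta^{t,i}_{s,a^{\mathcal{P}^i},a^i}=-\infty$. Fixing a threshold $\theta_0^i-\delta^i$ below the putative limit, it defines for each $t$ the last time $\tau^i(t)\in[T_1,t]$ at which $\theta^{\cdot,i}_{s,a^{\mathcal{P}^i},a^i}$ was still above the threshold, and looks at the subsequence $\mathcal{T}^{t,i}\subset(\tau^i(t),t)$ of steps where that logit decreases. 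On this subsequence $\theta^{t',i}_{s,a^{\mathcal{P}^i},a^i}<\theta_0^i-\delta^i<\theta^{t',i}_{s,a^{\mathcal{P}^i},a^i_-}$, so the policy ratio $\pi^{t',i}(a^i_-|\cdot)/\pi^{t',i}(a^i|\cdot)\geq\exp(\delta^i)$; combined with $|A^{t',i}(\cdot,a^i_-)|>\Delta/4$ and $|A^{t',i}(\cdot,a^i)|\leq V_{\max}-V_{\min}$ this gives
\[
\frac{\partial V^{t'}(\mu)}{\partial \theta^i_{s,a^{\mathcal{P}^i},a^i_-}}\;\le\; \exp(\delta^i)\,\frac{\Delta}{4(V_{\max}-V_{\min})}\cdot\frac{\partial V^{t'}(\mu)}{\partial \theta^i_{s,a^{\mathcal{P}^i},a^i}}\quad(\text{both sides negative}).
\]
Since the full sum $\sum_{t'=T_1}^{t-1}\partial_{\theta_{a^i_-}}V^{t'}$ has only negative terms, it is bounded above by the partial sum over $\mathcal{T}^{t,i}$, which in turn is a positive constant times $Z^{t,i}:=\sum_{t'\in\mathcal{T}^{t,i}}\partial_{\theta_{a^i}}V^{t'}$. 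A telescoping bound shows $\liminf_t Z^{t,i}=-\infty$, forcing $\theta^{t,i}_{s,a^{\mathcal{P}^i},a^i_-}\to-\infty$ and contradicting the assumed finite limit. Nothing here needs to know how fast any advantage decays; the argument is purely comparative.

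So the missing idea in your proposal is precisely this: rather than trying to control the normalizer, pick an action whose logit you already know diverges to $-\infty$ (via Lemma~\ref{lemma_9}) and compare its negative gradient steps to those of $a^i_-$. As a side remark, the paper's proof, like your sketch, tacitly uses $I_+\neq\emptyset$ through Lemma~\ref{lemma_9}; this is harmless because Lemma~\ref{lemma_13} is only invoked downstream (Lemmas~\ref{lemma_14} and~\ref{lemma_15}) under that hypothesis.
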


\begin{proof}
From lemma \ref{lemma_8}, we know that $\forall a^i_+\in I_+^{s,a^{\mathcal{P}^i},i}$, after $T_1$, $\theta_{s,a^{\mathcal{P}^i},a^i_+}^{i}$ is strictly increasing, and is therefore bounded from below.
\\For the second claim, we know from lemma \ref{lemma_8} that $\forall a^i_-\in I_-^{s,a^{\mathcal{P}^i},i}$, after $T_1$, 
$\theta_{s,a^{\mathcal{P}^i},a^i_-}^{i}$ is strictly decreasing. Then, by monotone convergence theorem, we know $\text{lim}_{t\rightarrow \infty}\theta_{s,a^{\mathcal{P}^i},a^i_-}^{i}$ exists and is either $-\infty$ or some constant $\theta_{0}^{i}$. We now prove by contraction that $\text{lim}_{t\rightarrow \infty}\theta_{s,a^{\mathcal{P}^i},a^i_-}^{i}$ cannot be some constant $\theta_{0}^{i}$. Suppose $\text{lim}_{t\rightarrow \infty}\theta_{s,a^{\mathcal{P}^i},a^i_-}^{i}=\theta_{0}^{i}$. We immediately know that $\forall t\geq T_1,\theta_{s,a^{\mathcal{P}^i},a^i_-}^{i}>\theta_{0}^{i}$. By lemma \ref{lemma_9}, we know $\exists a^i \in \mathbb{A}^i$ such that \begin{equation}\label{eq:3}\text{lim }\underset{t\rightarrow \infty}{\text{inf}}\theta_{s,a^{\mathcal{P}^i},a^i}^{t,i}=-\infty\end{equation}
Let us consider some $\delta^i>0$ such that $\theta_{s,a^{\mathcal{P}^i},a^i}^{T_1,i}\geq \theta_{0}^{i}-\delta^i$. Now for $t\geq T_1$, define $\tau^i(t)$ to be the largest iteration in $[T_1,t]$ such that $\theta_{s,a^{\mathcal{P}^i},a^i}^{\tau^i(t),i}\geq \theta_{0}^{i}-\delta^i$.
Define $\Tau^{t,i}$ to be subsequence $\{t^{\prime}\}$ of the interval $(\tau^i(t),t)$ such that $\theta_{s,a^{\mathcal{P}^i},a^i}^{t^\prime,i}$ decreases. 
\\Define 
$$Z^{t,a^{\mathcal{P}^i},i}=\sum_{t^\prime\in\Tau^{t,i}}\frac{\partial V^{t^\prime}(\mu)}{\partial \theta_{s,a^{\mathcal{P}^i},a^i}^{i}}$$
where $Z^{t,a^{\mathcal{P}^i},i}=0$ if $\Tau^{t,i}=\emptyset$.
\\For non-empty $\Tau^{t,i}$, we have:
$$Z^{t,a^{\mathcal{P}^i},i}=\sum_{t^\prime\in\Tau^{t,i}}\frac{\partial V^{t^\prime}(\mu)}{\partial \theta_{s,a^{\mathcal{P}^i},a^i}^{i}}\leq \sum_{t^\prime=\tau^i(t)+1}^{t-1}\frac{\partial V^{t^\prime}(\mu)}{\partial \theta_{s,a^{\mathcal{P}^i},a^i}^{i}}\leq \sum_{t^\prime=\tau^i(t)}^{t-1}\frac{\partial V^{t^\prime}(\mu)}{\partial \theta_{s,a^{\mathcal{P}^i},a^i}^{i}}+\frac{1}{(1-\gamma)}(V_{\rm max}-V_{\rm min})$$
$$=\frac{1}{\eta}(\theta_{s,a^{\mathcal{P}^i},a^i}^{t,i}-\theta_{s,a^{\mathcal{P}^i},a^i}^{\tau^i(t),i})+\frac{1}{(1-\gamma)}(V_{\rm max}-V_{\rm min})$$
where we have used that $|\frac{\partial V^{t^\prime}(\mu)}{\partial \theta_{s,a^{\mathcal{P}^i},a^i}^{i}}|\leq \frac{1}{(1-\gamma)}(V_{\rm max}-V_{\rm min})$. \\By equation $(\ref{eq:3})$, we know \begin{equation} \label{eq:4}\text{lim }\underset{t\rightarrow \infty}{\text{inf}}Z^{t,a^{\mathcal{P}^i},i}=-\infty\end{equation}
$$\frac{\partial V^{t}(\mu)}{\partial \theta_{s,a^{\mathcal{P}^i}, a^i}^{i}}=\frac{1}{1-\gamma}d_\mu^{\pi_\theta}(s)\pi^{t,i}(a^i|s,a^{\mathcal{P}^i})\pi^{t,\mathcal{P}^i}(a^{\mathcal{P}^i}|s)A^{t,i}(s,a^{\mathcal{P}^i},a^i)$$
\\For any $\Tau^{t,i}\neq \emptyset,\forall t^{\prime}\in \Tau^{t,i}$, from lemma \ref{BN_PG}, we know:
$$\Bigg|\frac{\partial V^{t^\prime}(\mu)/\partial \theta_{s,a^i_-}^{i}}{\partial V^{t^\prime}(\mu)/\partial \theta_{s,a^i}^{i}}\Bigg|=\Bigg|\frac{\pi^{t^\prime,i}(a^i_-|s,a^{\mathcal{P}^i})A^{t^\prime,i}(s,a^{\mathcal{P}^i},a^i_-)}{\pi^{t^\prime,i}(a^i|s,a^{\mathcal{P}^i})A^{t^\prime,i}(s,a^{\mathcal{P}^i},a^i)}\Bigg|\geq \text{exp}(\theta_{0}^{i}-\theta_{s,a^{\mathcal{P}^i},a^i}^{t^\prime,i})\frac{\Delta}{4(V_{\rm max}-V_{\rm min})}$$
$$\geq \text{exp}(\delta^i)\frac{\Delta}{4(V_{\rm max}-V_{\rm min})}$$
where we have used that $|A^{t^\prime,i}(s,a^{\mathcal{P}^i},a^i)|\leq V_{\rm max}-V_{\rm min}$ and $\forall t^\prime>T_1,|A^{t^\prime,i}(s,a^{\mathcal{P}^i},a^i_-)|\geq \frac{\Delta}{4}$.
\\Since both $\frac{\partial V^{t^\prime}(\mu)}{\partial \theta_{s,a^{\mathcal{P}^i},a^i_-}^{i}}$ and $\frac{\partial V^{t^\prime}(\mu)}{\partial \theta_{s,a^{\mathcal{P}^i},a^i}^{i}}$ are negative, we can get: \begin{equation} \label{eq:5}
\frac{\partial V^{t^\prime}(\mu)}{\partial \theta_{s,a^{\mathcal{P}^i},a^i_-}^{i}}\leq \text{exp}(\delta^i)\frac{\Delta}{4(V_{\rm max}-V_{\rm min})}\frac{\partial V^{t^\prime}(\mu)}{\partial \theta_{s,a^{\mathcal{P}^i},a^i}^{i}} 
\end{equation}
For non-empty $\Tau^{t,i}$, 
$$\frac{1}{\eta}(\theta_{s,a^{\mathcal{P}^i},a^i_-}^{t,i}-\theta_{s,a^{\mathcal{P}^i},a^i_-}^{T_1,i})=\sum_{t^\prime=T_1}^{t-1}\frac{\partial V^{t^\prime}(\mu)}{\partial \theta_{s,a^{\mathcal{P}^i},a^i_-}^{i}}\leq \sum_{t^\prime\in\Tau^{t,i}}\frac{\partial V^{t^\prime}(\mu)}{\partial \theta_{s,a^{\mathcal{P}^i},a^i_-}^{i}}$$
By Equation (\ref{eq:5})
$$\leq \text{exp}(\delta^i)\frac{\Delta}{4(V_{\rm max}-V_{\rm min})}\sum_{t^\prime\in\Tau^{t,i}}\frac{\partial V^{t^\prime}(\mu)}{\partial \theta_{s,a^{\mathcal{P}^i},a^i}^{i}}$$
$$=\text{exp}(\delta^i)\frac{\Delta}{4(V_{\rm max}-V_{\rm min})}Z^{t,a^{\mathcal{P}^i},i}$$
which together with the fact that $\theta_{s,a^{\mathcal{P}^i},a^i_-}^{T_1,i}$ is some finite constant and equation ($\ref{eq:4}$) lead to $$\theta_{s,a^{\mathcal{P}^i},a^i_-}^{t,i}\rightarrow -\infty \text{ as }t\rightarrow\infty$$
this contradicts the assumption that $\{\theta_{s,a^{\mathcal{P}^i},a_-^{t,i}}^i\}_{t\geq T_1}$ is lower bounded by $\theta_0^i$ and complete the proof.
\end{proof}

\begin{lemma}
\label{lemma_14}
Consider any $s,a^{\mathcal{P}^i}$ where $I_+^{s,a^{\mathcal{P}^i},i}\neq \emptyset$. Then, if $\lim_{t \to \infty} d_\mu^{\pi^t}(s,a^{\mathcal{P}^i})>0$, we have $\forall a_+^i\in I_+^{s,a^{\mathcal{P}^i},i}$, $$\sum_{a^i\in B_0^{s,a^{\mathcal{P}^i},i}(a_+^i)}\theta_{s,a^{\mathcal{P}^i},a^i}^{t,i}\rightarrow\infty $$

\end{lemma}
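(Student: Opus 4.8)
The plan is to deduce the divergence of the \emph{sum} $\sum_{a^i\in B_0^{s,a^{\mathcal{P}^i},i}(a_+^i)}\theta_{s,a^{\mathcal{P}^i},a^i}^{t,i}$ from the already-established divergence of the \emph{maximum} over the same index set, by showing that every summand is uniformly bounded below. Lemma~\ref{lemma_11} gives us, under the standing hypotheses $\lim_{t\to\infty}d_\mu^{\pi^t}(s,a^{\mathcal{P}^i})>0$ and $I_+^{s,a^{\mathcal{P}^i},i}\neq\emptyset$, that $\max_{a^i\in B_0^{s,a^{\mathcal{P}^i},i}(a_+^i)}\theta_{s,a^{\mathcal{P}^i},a^i}^{t,i}\to\infty$. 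Hence the single dominant coordinate already drives the sum upward; the only way the sum could fail to diverge is if some other coordinate in $B_0^{s,a^{\mathcal{P}^i},i}(a_+^i)$ ran off to $-\infty$ fast enough to cancel it. The whole task therefore reduces to ruling out that possibility via a uniform lower bound.

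First I would unpack the definition of $B_0^{s,a^{\mathcal{P}^i},i}(a_+^i)$ from the proof of Lemma~\ref{lemma_10}: every $a\in B_0^{s,a^{\mathcal{P}^i},i}(a_+^i)$ satisfies $\pi^{t,i}(a_+^i|s,a^{\mathcal{P}^i})<\pi^{t,i}(a|s,a^{\mathcal{P}^i})$ for all $t\ge T_0$. Since both probabilities are softmax outputs conditioned on the same $(s,a^{\mathcal{P}^i})$ and therefore share the normalizer $\sum_{b\in\mathcal{A}^i}\exp(\theta_{s,a^{\mathcal{P}^i},b}^{t,i})$, this probability inequality is equivalent to the logit inequality $\theta_{s,a^{\mathcal{P}^i},a}^{t,i}>\theta_{s,a^{\mathcal{P}^i},a_+^i}^{t,i}$ for all $t\ge T_0$. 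Next I would invoke Lemma~\ref{lemma_13}: because $a_+^i\in I_+^{s,a^{\mathcal{P}^i},i}$, the logit $\theta_{s,a^{\mathcal{P}^i},a_+^i}^{t,i}$ is strictly increasing after $T_1$ and is therefore bounded below by the constant $L:=\theta_{s,a^{\mathcal{P}^i},a_+^i}^{T_1,i}$. Combining the two facts yields $\theta_{s,a^{\mathcal{P}^i},a}^{t,i}>L$ for every $a\in B_0^{s,a^{\mathcal{P}^i},i}(a_+^i)$ and every $t\ge T_1$.

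With a uniform lower bound in hand, the conclusion follows by isolating the maximal term from the rest:
$$\sum_{a^i\in B_0^{s,a^{\mathcal{P}^i},i}(a_+^i)}\theta_{s,a^{\mathcal{P}^i},a^i}^{t,i}\ \ge\ \max_{a^i\in B_0^{s,a^{\mathcal{P}^i},i}(a_+^i)}\theta_{s,a^{\mathcal{P}^i},a^i}^{t,i}\ +\ \bigl(\lvert B_0^{s,a^{\mathcal{P}^i},i}(a_+^i)\rvert-1\bigr)L,$$
where the first term diverges to $\infty$ by Lemma~\ref{lemma_11} and the second is a fixed finite constant. Letting $t\to\infty$ gives the claim.

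The main obstacle — really the only subtle point — is establishing the \emph{uniform} lower bound on the $B_0$ logits and making the time thresholds line up, since $B_0^{s,a^{\mathcal{P}^i},i}(a_+^i)$ is defined using the threshold $T_0$, whereas the monotonicity of $\theta_{s,a^{\mathcal{P}^i},a_+^i}^{t,i}$ (and hence its lower bound) only begins after $T_1$. Because $T_1>T_0$ by construction in Lemma~\ref{lemma_6}, both the logit-ordering inequality coming from the $B_0$ definition and the lower bound coming from Lemma~\ref{lemma_13} hold simultaneously for all $t\ge T_1$, so the two ingredients compose cleanly. Once this alignment is observed, the remainder is bookkeeping.
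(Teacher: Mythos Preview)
Your proposal is correct and follows essentially the same route as the paper: use the definition of $B_0^{s,a^{\mathcal{P}^i},i}(a_+^i)$ to translate the probability ordering into the logit ordering $\theta_{s,a^{\mathcal{P}^i},a^i}^{t,i}>\theta_{s,a^{\mathcal{P}^i},a_+^i}^{t,i}$, invoke Lemma~\ref{lemma_13} to get a uniform lower bound on all $B_0$ logits, and then combine with Lemma~\ref{lemma_11} (divergence of the maximum) to force the sum to diverge. Your write-up is in fact more explicit than the paper's about the final inequality and the alignment of the time thresholds $T_0<T_1$.
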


\begin{proof}
For any $a^i\in B_0^{s,a^{\mathcal{P}^i},i}(a_+^i)$. By definition, we know that $\forall t>T_0, \pi^{t,i}(a_+^i|s,a^{\mathcal{P}^i})< \pi^{t,i}(a^i|s,a^{\mathcal{P}^i})$, which implies that $\theta_{s,a^{\mathcal{P}^i},a^i_+}^{t,i}<\theta_{s,a^{\mathcal{P}^i},a^i}^{t,i}$. Since in lemma \ref{lemma_13}, $\theta_{s,a^{\mathcal{P}^i},a^i_+}^{t,i}$ is lower bounded as $t\rightarrow\infty$, we know that $\theta_{s,a^{\mathcal{P}^i},a^i}^{t,i}$ is lower bounded as $t\rightarrow\infty$. This together with lemma \ref{lemma_11} proves that $$\sum_{a^i\in B_0^{s,a^{\mathcal{P}^i},i}(a_+^i)}\theta_{s.a^{\mathcal{P}^i},a^i}^{t,i}\rightarrow\infty $$
\end{proof}

\begin{lemma}
\label{lemma_15}
$\forall i, a^{\mathcal{P}^i}, a^i$, if $\lim_{t \to \infty} d_\mu^{\pi^t}(s,a^{\mathcal{P}^i})>0$, then $I_+^{s,a^{\mathcal{P}^i},i}=\emptyset$.    
\end{lemma}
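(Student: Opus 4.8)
The plan is to argue by contradiction: assume $I_+^{s,a^{\mathcal{P}^i},i}\neq\emptyset$ for some agent $i$ and augmented state $(s,a^{\mathcal{P}^i})$ with $\lim_{t\to\infty}d_\mu^{\pi^t}(s,a^{\mathcal{P}^i})>0$, fix a positive-advantage action $a_+^i\in I_+^{s,a^{\mathcal{P}^i},i}$, and derive a contradiction with Lemma \ref{lemma_7}, which asserts $\pi^{t,i}(a_+^i|s,a^{\mathcal{P}^i})\to 0$. First I would collect the facts already available under the standing hypothesis $\lim_t d_\mu^{\pi^t}(s,a^{\mathcal{P}^i})>0$: by Lemma \ref{lemma_8} the parameter $\theta^{t,i}_{s,a^{\mathcal{P}^i},a_+^i}$ is strictly increasing and by Lemma \ref{lemma_13} it is bounded from below; by Lemma \ref{lemma_6} its advantage obeys $A^{t,i}(s,a^{\mathcal{P}^i},a_+^i)>\tfrac{\Delta}{4}$ for all large $t$; and by Lemmas \ref{lemma_11} and \ref{lemma_14} the set $B_0^{s,a^{\mathcal{P}^i},i}(a_+^i)\subseteq I_0^{s,a^{\mathcal{P}^i},i}$ is nonempty, carries asymptotically all the probability mass, $\sum_{a\in B_0^{s,a^{\mathcal{P}^i},i}(a_+^i)}\pi^{t,i}(a|s,a^{\mathcal{P}^i})\to 1$, and satisfies $\sum_{a\in B_0^{s,a^{\mathcal{P}^i},i}(a_+^i)}\theta^{t,i}_{s,a^{\mathcal{P}^i},a}\to\infty$.

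The core of the argument is to show that $\pi^{t,i}(a_+^i|s,a^{\mathcal{P}^i})$ cannot in fact vanish. Since the mass concentrates on the finite set $B_0^{s,a^{\mathcal{P}^i},i}(a_+^i)$, some competitor $a_0\in B_0^{s,a^{\mathcal{P}^i},i}(a_+^i)$ keeps probability at least $1/|B_0^{s,a^{\mathcal{P}^i},i}(a_+^i)|$ infinitely often; because the policy is softmax, $\pi^{t,i}(a_+^i|s,a^{\mathcal{P}^i})/\pi^{t,i}(a_0|s,a^{\mathcal{P}^i})=\exp(\theta^{t,i}_{s,a^{\mathcal{P}^i},a_+^i}-\theta^{t,i}_{s,a^{\mathcal{P}^i},a_0})$, so it suffices to show that the gap $\theta^{t,i}_{s,a^{\mathcal{P}^i},a_0}-\theta^{t,i}_{s,a^{\mathcal{P}^i},a_+^i}$ stays bounded above uniformly over $a_0\in B_0^{s,a^{\mathcal{P}^i},i}(a_+^i)$, which would force $\pi^{t,i}(a_+^i|s,a^{\mathcal{P}^i})$ to be bounded away from $0$, contradicting Lemma \ref{lemma_7}. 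To control this gap I would compare the two gradients through Lemma \ref{BN_PG}: the increment of $\theta_{a_0}-\theta_{a_+^i}$ is proportional to $\pi^{t,i}(a_0|s,a^{\mathcal{P}^i})A^{t,i}(s,a^{\mathcal{P}^i},a_0)-\pi^{t,i}(a_+^i|s,a^{\mathcal{P}^i})A^{t,i}(s,a^{\mathcal{P}^i},a_+^i)$, where $a_0\in I_0^{s,a^{\mathcal{P}^i},i}$ gives $A^{t,i}(s,a^{\mathcal{P}^i},a_0)\to 0$ whereas $A^{t,i}(s,a^{\mathcal{P}^i},a_+^i)>\tfrac{\Delta}{4}$. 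Mirroring the telescoping/ratio estimate used for the negative set in Lemma \ref{lemma_13} (its Equation \eqref{eq:5}), I would restrict attention to the subsequence of iterations on which $\theta_{a_0}$ actually increases, where $A^{t,i}(s,a^{\mathcal{P}^i},a_0)>0$, and bound the accumulated increase of the gap there, using $|A^{t,i}|\leq V_{\rm max}-V_{\rm min}$ to keep the competitor's contribution under control relative to the persistent $\tfrac{\Delta}{4}$ drift favoring $a_+^i$.

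The main obstacle I anticipate is exactly this rate comparison: the advantages of the $I_0^{s,a^{\mathcal{P}^i},i}$ competitors vanish without an a priori known rate, so bounding the parameter gap is delicate. One cannot argue purely from the conserved quantity $\sum_{a}\theta^{t,i}_{s,a^{\mathcal{P}^i},a}=\text{const}$ (Lemma \ref{lemma_9}), because the contribution of $I_-^{s,a^{\mathcal{P}^i},i}$ drifts to $-\infty$ and can, by itself, balance the divergence of $\sum_{B_0}\theta$; thus a sum-level argument still admits the spurious softmax fixed point in which a positive-advantage action has zero limiting probability, and only a dynamical, subsequence-based estimate ruling it out will close the proof. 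I would therefore invest the bulk of the effort in the gradient-ratio bound along the increasing subsequence of $\theta_{a_0}$, as in Lemma \ref{lemma_13}, which is precisely where the persistent advantage $\tfrac{\Delta}{4}$ of $a_+^i$ must be shown to dominate the decaying advantage of its dominating competitors.
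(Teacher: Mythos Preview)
Your contradiction setup and the preliminary facts you collect (Lemmas \ref{lemma_6}--\ref{lemma_14}) are exactly right, but the final step you propose---bounding the parameter gap $\theta_{a_0}-\theta_{a_+^i}$ via a subsequence rate comparison in the style of Lemma \ref{lemma_13}---is a genuinely different route from the paper's, and as you yourself flag, it is not clear it closes. On the increasing subsequence of $\theta_{a_0}$ the increment of the gap is proportional to $\pi^{t,i}(a_0)A^{t,i}(a_0)-\pi^{t,i}(a_+^i)A^{t,i}(a_+^i)$; since $\pi^{t,i}(a_0)/\pi^{t,i}(a_+^i)=\exp(\text{gap})$ while $A^{t,i}(a_0)\to 0$ with no controlled rate, nothing you have forces the first product to be dominated by the second. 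The analogy with Lemma \ref{lemma_13} breaks because there one compares a persistent advantage against a bounded one, with the \emph{ratio of probabilities} lower-bounded by a fixed positive constant; here the probability ratio is precisely the quantity you are trying to control.

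The paper avoids this rate problem entirely by using a different ``sum-level'' identity than the conserved $\sum_a\theta_a$ you dismissed: at every $t$ one has $\sum_{a^i}\pi^{t,i}(a^i|s,a^{\mathcal{P}^i})A^{t,i}(s,a^{\mathcal{P}^i},a^i)=0$. Decomposing this over $I_+$, $I_-$, $B_0^{s,a^{\mathcal{P}^i},i}(a_+^i)$ and $\bar B_0^{s,a^{\mathcal{P}^i},i}(a_+^i)$, one shows that for all large $t$ the $I_+$ piece contributes at least $\pi^{t,i}(a_+^i)\tfrac{\Delta}{4}$, while the $I_-$ and $\bar B_0$ pieces can be bounded in magnitude by $\pi^{t,i}(a_+^i)\tfrac{\Delta}{8}$ and $\pi^{t,i}(a_+^i)\tfrac{\Delta}{16}$ respectively (using Lemma \ref{lemma_13} for $I_-$ and Lemma \ref{lemma_12} together with $A^{t,i}(a^i)\to 0$ for $\bar B_0$). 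This forces $\sum_{a^i\in B_0}\pi^{t,i}(a^i)A^{t,i}(a^i)<0$ eventually, i.e.\ $\sum_{a^i\in B_0}\partial V^t/\partial\theta^i_{s,a^{\mathcal{P}^i},a^i}<0$, which directly contradicts $\sum_{a^i\in B_0}\theta^{t,i}_{s,a^{\mathcal{P}^i},a^i}\to\infty$ from Lemma \ref{lemma_14}. The point you missed is that this identity holds \emph{at each} $t$ (it is the sum of the gradients, not of the parameters), so the $I_-$ drift to $-\infty$ never enters; your objection applies only to the parameter sum, not to this advantage-weighted sum.
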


\begin{proof} 
Suppose $I_+^{s,a^{\mathcal{P}^i},i}$ is non-empty for some $s,i,a^{\mathcal{P}^i}$, else the proof is complete. Let $a_+^i\in I_+^{s,a^{\mathcal{P}^i},i}$. Then, by lemma \ref{lemma_14}, we know
\begin{equation} \label{eq:6}
\sum_{a^i\in B_0^{s,a^{\mathcal{P}^i},i}(a_+^i)}\theta_{s,a^{\mathcal{P}^i},a^i}^{t,i}\rightarrow\infty
\end{equation}
For $a^i\in I_-^{s,a^{\mathcal{P}^i},i}$, since $\frac{\pi^{t,i}(a^i|s,a^{\mathcal{P}^i})}{\pi^{t,i}(a_+^i|s,a^{\mathcal{P}^i})}=\text{exp}(\theta_{s,a^{\mathcal{P}^i},a^i}^{t,i}-\theta_{s,a^{\mathcal{P}^i},a^i_+}^{t,i})\rightarrow 0$ (as $\theta_{s,a^{\mathcal{P}^i},a^i_+}^{t,i}$ is lower bounded and $\theta_{s,a^{\mathcal{P}^i},a^i}^{t,i}\rightarrow-\infty$ by lemma \ref{lemma_13}), there exists $T_2>T_0$ such that $$\frac{\pi^{t,i}(a^i|s,a^{\mathcal{P}^i})}{\pi^{t,i}(a_+^i|s,a^{\mathcal{P}^i})}<\frac{\Delta}{8|\mathcal{A}^i|(V_{\rm max}-V_{\rm min})}$$
\begin{equation} \label{eq:7}
\longrightarrow -\sum_{a^i\in I_-^{s,a^{\mathcal{P}^i},i}}\pi^{t,i}(a^i|s,a^{\mathcal{P}^i})(V_{\rm max}-V_{\rm min})>-\pi^{t,i}(a_+^i|s,a^{\mathcal{P}^i})\frac{\Delta}{8}
\end{equation}
For $a^i\in \Bar{B}_0^{s,a^{\mathcal{P}^i},i}(a^i_+)$, by definition of $\Bar{B}_0^{s,a^{\mathcal{P}^i},i}(a^i_+)$, we have $A^{t,i}(s,a^{\mathcal{P}^i},a^i)\rightarrow 0$ and by lemma \ref{lemma_12}, $\forall t>T_{a^{\mathcal{P}^i},a_+^i} 1<\frac{\pi^{t,i}(a_+^i|s,a^{\mathcal{P}^i})}{\pi^{t,i}(a^i|s,a^{\mathcal{P}^i})}$
. Then, $\exists T_3>T_2,T_{a^{\mathcal{P}^i},a_+^i}$ such that 
$$|A^{t,i}(s,a^{\mathcal{P}^i},a^i)|<\frac{\pi^{t,i}(a_+^i|s,a^{\mathcal{P}^i})}{\pi^{t,i}(a^i|s,a^{\mathcal{P}^i})}\frac{\Delta}{16|\mathcal{A}^i|}$$
$$\longrightarrow\sum_{a^i\in \Bar{B}_0^{s,a^{\mathcal{P}^i},i}(a_+^i)}\pi^{t,i}(a^i|s,a^{\mathcal{P}^i})|A^{t,i}(s,a^{\mathcal{P}^i},a^i)|<\pi^{t,i}(a^i_+|s,a^{\mathcal{P}^i})\frac{\Delta}{16}$$
\begin{equation} \label{eq:8}
\longrightarrow -\pi^{t,i}(a^i_+|s,a^{\mathcal{P}^i})\frac{\Delta}{16}<\sum_{a^i\in \Bar{B}_0^{s,a^{\mathcal{P}^i},i}(a_+^i)}\pi^{t,i}(a^i|s,a^{\mathcal{P}^i})A^{t,i}(s,a^{\mathcal{P}^i},a^i)<\pi^{t,i}(a^i_+|s,a^{\mathcal{P}^i})\frac{\Delta}{16}
\end{equation}
For $t>T_3$, 
$$0=\sum_{a^i\in\mathcal{A}^i}\pi^{t,i}(a^i|s,a^{\mathcal{P}^i})A^{t,i}(s,a^{\mathcal{P}^i},a^i)$$
$$=\sum_{a^i\in I_0^{s,a^{\mathcal{P}^i},i}}\pi^{t,i}(a^i|s,a^{\mathcal{P}^i})A^{t,i}(s,a^{\mathcal{P}^i},a^i)+\sum_{a^i\in I_+^{s,a^{\mathcal{P}^i},i}}\pi^{t,i}(a^i|s,a^{\mathcal{P}^i})A^{t,i}(s,a^{\mathcal{P}^i},a^i)$$ $$+\sum_{a^i\in I_-^{s,a^{\mathcal{P}^i},i}}\pi^{t,i}(a^i|s,a^{\mathcal{P}^i})A^{t,i}(s,a^{\mathcal{P}^i},a^i)$$
$$\stackrel{(a)}{\geq}\sum_{a^i\in B_0^{s,a^{\mathcal{P}^i},i}(a_+^i)}\pi^{t,i}(a^i|s,a^{\mathcal{P}^i})A^{t,i}(s,a^{\mathcal{P}^i},a^i)+\sum_{a^i\in \Bar{B}_0^{s,a^{\mathcal{P}^i},i}(a_+^i)}\pi^{t,i}(a^i|s,a^{\mathcal{P}^i})A^{t,i}(s,a^{\mathcal{P}^i},a^i)$$
$$+\pi^{t,i}(a^i_+|s,a^{\mathcal{P}^i})A^{t,i}(s,a^{\mathcal{P}^i},a^i_+)+\sum_{a^i\in I_-^{s,a^{\mathcal{P}^i},i}}\pi^{t,i}(a^i|s,a^{\mathcal{P}^i})A^{t,i}(s,a^{\mathcal{P}^i},a^i)$$
$$\stackrel{(b)}{\geq}\sum_{a^i\in B_0^{s,a^{\mathcal{P}^i},i}(a_+^i)}\pi^{t,i}(a^i|s,a^{\mathcal{P}^i})A^{t,i}(s,a^{\mathcal{P}^i},a^i)+\sum_{a^i\in \Bar{B}_0^{s,a^{\mathcal{P}^i},i}(a_+^i)}\pi^{t,i}(a^i|s,a^{\mathcal{P}^i})A^{t,i}(s,a^{\mathcal{P}^i},a^i)+\pi^{t,i}(a^i_+|s,a^{\mathcal{P}^i})\frac{\Delta}{4}$$
$$-\sum_{a^i\in I_-^{s,a^{\mathcal{P}^i},i}}\pi^{t,i}(a^i|s,a^{\mathcal{P}^i})(V_{\rm max}-V_{\rm min})$$
$$\stackrel{(c)}{\geq}\sum_{a^i\in B_0^{s,a^{\mathcal{P}^i},i}(a_+^i)}\pi^{t,i}(a^i|s,a^{\mathcal{P}^i})A^{t,i}(s,a^{\mathcal{P}^i},a^i)-\pi^{t,i}(a^i_+|s,a^{\mathcal{P}^i})\frac{\Delta}{16}+\pi^{t,i}(a^i_+|s,a^{\mathcal{P}^i})\frac{\Delta}{4}-\pi^{t,i}(a_+^i|s,a^{\mathcal{P}^i})\frac{\Delta}{8}$$
$$>\sum_{a^i\in B_0^{s,a^{\mathcal{P}^i},i}(a_+^i)}\pi^{t,i}(a^i|s,a^{\mathcal{P}^i})A^{t,i}(s,a^{\mathcal{P}^i},a^i)$$
where (a) uses $\forall a^i\in I_+^{s,a^{\mathcal{P}^i},i} \text{ and }t>T_3>T_1, A^{t,i}(s,a^{\mathcal{P}^i},a^i)>0$ from lemma \ref{lemma_6}, (b) uses $\forall t>T_3>T_1, A^{t,i}(s,a^{\mathcal{P}^i},a^i_+)>\frac{\Delta}{4}$ from lemma \ref{lemma_6} and $A^{t,i}(s,a^{\mathcal{P}^i},a^i)\geq -(V_{\rm max}-V_{\rm min})$, (c) uses equation $(\ref{eq:7})$ and equation $(\ref{eq:8})$. This implies that $$\forall t>T_3, \sum_{a^i\in B_0^{s,a^{\mathcal{P}^i},i}(a_+^i)}\frac{\partial V^{t}(\mu)}{\partial \theta_{s,a^{\mathcal{P}^i},a^i}^{i}}<0$$
which contradicts with equation $(\ref{eq:6})$ which leads to $$\text{lim}_{t\rightarrow\infty}\sum_{a^i\in B_0^{s,a^{\mathcal{P}^i},i}(a_+^i)}(\theta_{s,a^{\mathcal{P}^i},a^i}^{t,i}-\theta_{s,a^{\mathcal{P}^i},a^i}^{T_3,i})=\eta\sum_{t=T_3}^\infty\sum_{a^i\in B_0^{s,a^{\mathcal{P}^i},i}(a_+^i)}\frac{\partial V^{t}(\mu)}{\partial \theta_{s,a^{\mathcal{P}^i},a^i}^{i}}\rightarrow\infty$$
Therefore, the set $I_+^{s,a^{\mathcal{P}^i},i}=\emptyset$.
\end{proof}

\begin{theorem}
\label{proof:Nash}
Under Assumptions \ref{assumption:discounted state visitation distribution} - \ref{assumption:PG},
suppose every agent $i$ follows the policy gradient dynamics \eqref{eq:PG}, which results in the update dynamics \eqref{eq:BayesianPG} for each each agent $i$, parent actions $a^{\mathcal{P}^i}$, and local action $a^i$, with $\eta\leq \frac{(1-\gamma)^3}{8N(r_{\rm max}-r_{\rm min})}$, then the converged BN policy 
$(\pi_{\theta^1_*}^1, \cdots, \pi_{\theta^N_*}^N,\mathcal{G})$ is a Nash policy.
\end{theorem}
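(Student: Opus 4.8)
\textbf{The plan} is to reduce the BN policy-gradient analysis to the single-agent / product-policy case by treating each agent's parent-action configuration as part of its state. Concretely, for agent $i$ the pair $(s,a^{\mathcal{P}^i})$ plays the role of an augmented state, and the gradient formula of Lemma \ref{BN_PG} then takes exactly the form of the single-agent softmax policy gradient of \cite{agarwal2021theory}, with visitation measure $d_\mu^{\pi_\theta}(s,a^{\mathcal{P}^i})$, local action distribution $\pi^i_{\theta^i}(\cdot|s,a^{\mathcal{P}^i})$, and advantage $A^{\pi_\theta,i}(s,a^{\mathcal{P}^i},a^i)$. Under this reduction I would follow the asymptotic-convergence argument of \cite{zhang2022effect,chen2022convergence}, which is already assembled across Lemmas \ref{lemma_4}--\ref{lemma_15}, and the theorem then only needs to collect these pieces into the best-response characterization of a Nash policy.

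\textbf{The main chain of steps} is as follows. First, smoothness of $V$ (Lemma \ref{lemma:smoothness_of_V}) together with the stepsize bound $\eta\leq (1-\gamma)^3/(8N(r_{\rm max}-r_{\rm min}))$ gives monotone improvement of $V^t$ and $Q^t$ (Lemma \ref{lemma_4}); boundedness (Assumption \ref{assumption:Reward function is bounded}) together with Assumption \ref{assumption:PG} then yields convergence of $V^t, Q^t$ and of the per-agent quantities $Q^{t,i}$ (Lemma \ref{lemma_5}), defining the limiting advantage gap $\Delta$. Second, the advantages on the positive-advantage set $I_+^{s,a^{\mathcal{P}^i},i}$ and the negative-advantage set $I_-^{s,a^{\mathcal{P}^i},i}$ stay bounded away from zero past some $T_1$ (Lemma \ref{lemma_6}). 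Third, smoothness with small stepsize forces the gradient to vanish, so by Lemma \ref{BN_PG} at every augmented state with $\lim_t d_\mu^{\pi^t}(s,a^{\mathcal{P}^i})>0$ the mass on all $I_+\cup I_-$ actions vanishes (Lemma \ref{lemma_7}). The softmax-logit dynamics of Lemmas \ref{lemma_8}--\ref{lemma_14} then track which coordinates diverge, culminating in Lemma \ref{lemma_15}: for every such positively-visited augmented state, $I_+^{s,a^{\mathcal{P}^i},i}=\emptyset$.

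\textbf{Assembling the Nash conclusion.} Fix an agent $i$ and an augmented state $(s,a^{\mathcal{P}^i})$ with positive limiting visitation. Lemma \ref{lemma_15} gives $I_+^{s,a^{\mathcal{P}^i},i}=\emptyset$, so $A^{\infty,i}(s,a^{\mathcal{P}^i},a^i)\leq 0$ for every $a^i$, while Lemma \ref{lemma_7} shows the limiting policy is supported on the zero-advantage set $I_0^{s,a^{\mathcal{P}^i},i}$. Thus $\pi^i_{\theta^i_*}(\cdot|s,a^{\mathcal{P}^i})$ places all its mass on actions maximizing $Q^{\infty,i}(s,a^{\mathcal{P}^i},\cdot)$, i.e.\ it is greedy. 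Reading agent $i$'s decision problem against the fixed $\pi^{-i}_*$ as a single-agent MDP whose state is $(s,a^{\mathcal{P}^i})$ and whose action-value is exactly $Q^{\infty,i}$, this greedy property at every positively-visited augmented state is the Bellman best-response condition, so no conditional deviation $\bar\pi^i$ improves $V$. Since this holds for every $i$, the converged BN policy $(\pi^1_{\theta^1_*},\cdots,\pi^N_{\theta^N_*},\mathcal{G})$ is a Nash policy per Definition \ref{Nash policy}.

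\textbf{The main obstacle} is precisely the gap between $d^\pi_\mu(s)$ and the augmented measure $d_\mu^{\pi_\theta}(s,a^{\mathcal{P}^i})=d_\mu^{\pi_\theta}(s)\sum_{a^{-\mathcal{P}^i}}\pi_\theta(a^{-\mathcal{P}^i},a^{\mathcal{P}^i}|s)$: although Assumption \ref{assumption:discounted state visitation distribution} guarantees $d^\pi_\mu(s)>0$, the augmented measure can still converge to zero, so the clean positivity that drives the single-agent proof is unavailable at the augmented-state level. Every lemma from \ref{lemma_7} onward must therefore be conditioned on $\lim_t d_\mu^{\pi^t}(s,a^{\mathcal{P}^i})>0$, and augmented states whose visitation vanishes are handled separately by observing that they contribute nothing to $V_\pi(\mu)$ and hence cannot host a profitable deviation. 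This same vanishing is exactly why the guarantee stops at Nash rather than global optimality, and why strengthening to optimality for the fully-correlated graph (Corollary \ref{theorem:Asymptotic convergence to optimal with gradient ascent}) requires the extra positivity hypothesis of Assumption \ref{assumption:augumented_state_visitation}.
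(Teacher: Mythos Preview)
Your proposal follows the same architecture as the paper: the augmented-state view $(s,a^{\mathcal{P}^i})$, the same chain of Lemmas \ref{lemma_4}--\ref{lemma_15} leading to $I_+^{s,a^{\mathcal{P}^i},i}=\emptyset$ whenever $\lim_t d_\mu^{\pi^t}(s,a^{\mathcal{P}^i})>0$. The only difference is in the final assembly: the paper closes via the performance difference lemma, expanding $V^{\pi'_*}-V^{\pi_*}$ as an expectation and bounding it term-by-term, whereas you invoke a Bellman/greedy argument in agent $i$'s induced single-agent MDP.

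There is one genuine gap in your final step. Your treatment of augmented states with vanishing visitation---``they contribute nothing to $V_\pi(\mu)$ and hence cannot host a profitable deviation''---is a non-sequitur as stated. The concern is not whether these states contribute to $V_{\pi_*}(\mu)$, but whether a deviation $\tilde\pi^i$ can shift mass onto such $(s,a^{\mathcal{P}^i})$ and profit there. The missing observation, which the paper makes explicit, is that the parent-action marginal $\overline\pi^{\mathcal{P}^i}(a^{\mathcal{P}^i}|s)=\sum_{a^{-\mathcal{P}^i}}\pi(a|s)$ depends only on $\pi^{-i}$: in the DAG, $i$ is not an ancestor of any node in $\mathcal{P}^i$, so $\pi^i$ never enters this marginal. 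Consequently $\overline\pi^{\mathcal{P}^i}_{\theta'_*}=\overline\pi^{\mathcal{P}^i}_{\theta_*}$ under any unilateral deviation of agent $i$, and together with Assumption \ref{assumption:discounted state visitation distribution} the set of positively-visited augmented states is \emph{the same} under $\pi_*$ and under any $(\tilde\pi^i,\pi^{-i}_*)$. Once you insert this invariance, your greedy-at-reachable-states argument becomes valid and is equivalent to the paper's performance-difference computation; without it, the Bellman claim that ``greedy at positively-visited augmented states $\Rightarrow$ best response'' is false in general for MDPs whose reachable set can change with the policy.
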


\begin{proof}
For convenience, denote  $\sum_{a^{-\mathcal{P}^i}}\pi_{\theta}(a^{-\mathcal{P}^i},a^{\mathcal{P}^i}|s)\text{ as }\overline{\pi}^{\mathcal{P}^i}_{\theta}(\cdot|s)$ so that $d_\mu^{\pi_\theta}(s,a^{\mathcal{P}^i}) = d_\mu^{\pi_\theta}(s)\overline{\pi}^{\mathcal{P}^i}_{\theta}(\cdot|s)$.
\\$\forall i\in\mathcal{N}$, let $\theta^\prime_* = [\theta^{-i}_*,\tilde{\theta}^i_*]$ be the parameters of any joint policy where only agent $i$'s parameters are changed.
\\By performance difference lemma, 
$$V^{\pi_{\theta^\prime_*}}-V^{\pi_{\theta_*}}=\frac{1}{1-\gamma}\E_{\bar{s}\sim d_\mu^{\pi_{\theta^\prime_*}}}\E_{\bar{a}\sim\pi_{{\theta^\prime_*}}}\Big[A^{\pi_{\theta_*}}(\bar{s},\bar{a})\Big]$$
$$=\frac{1}{1-\gamma}\E_{\bar{s}\sim d_\mu^{{\pi_{\theta^\prime_*}}}(\cdot)}\E_{\bar{a}^{\mathcal{P}^i}\sim\overline{\pi}^{\mathcal{P}^i}_{\theta^\prime_*}(\cdot|\bar{s})}\E_{\bar{a}^{i}\sim\pi^{i}_{\tilde{\theta}^i_*}(\cdot|\bar{s},\bar{a}^{\mathcal{P}^i})}\E_{\bar{a}^{-\mathcal{P}^i_+}\sim\pi^{-\mathcal{P}^i_+}_{\theta_*^\prime}(\cdot|\bar{s},a^{\mathcal{P}^i_+})}\Big[Q^{\pi_{\theta_*}}(\bar{s},\bar{a}^{\mathcal{P}^i},\bar{a}^i,\bar{a}^{-\mathcal{P}^i_+})-V^{\pi_{\theta_*}}(\bar{s})\Big]$$
Since $(\theta^\prime_*)^{-i}=\theta_*^{-i}$ which means $\overline{\pi}^{\mathcal{P}^i}_{\theta_*^\prime}(\cdot|\bar{s})=\overline{\pi}^{\mathcal{P}^i}_{\theta_*}(\cdot|\bar{s}), \pi^{-\mathcal{P}^i_+}_{\theta_*^\prime}(\cdot|\bar{s},a^{\mathcal{P}^i_+})=\pi^{-\mathcal{P}^i_+}_{\theta_*}(\cdot|\bar{s},a^{\mathcal{P}^i_+}) $,
$$=\frac{1}{1-\gamma}\E_{\bar{s}\sim d_\mu^{{\pi_{\theta^\prime_*}}}(\cdot)}\E_{\bar{a}^{\mathcal{P}^i}\sim\overline{\pi}^{\mathcal{P}^i}_{\theta_*}(\cdot|\bar{s})}\E_{\bar{a}^{i}\sim\pi^{i}_{\tilde{\theta}^i_*}(\cdot|\bar{s},\bar{a}^{\mathcal{P}^i})}\E_{\bar{a}^{-\mathcal{P}^i_+}\sim\pi^{-\mathcal{P}^i_+}_{\theta_*}(\cdot|\bar{s},a^{\mathcal{P}^i_+})}\Big[Q^{\pi_{\theta_*}}(\bar{s},\bar{a}^{\mathcal{P}^i},\bar{a}^i,\bar{a}^{-\mathcal{P}^i_+})-V^{\pi_{\theta_*}}(\bar{s})\Big]$$

By lemma \ref{lemma_15} which proves either $\overline{\pi}^{\mathcal{P}^i}_{\theta_*}(\cdot|\bar{s})=0$ or $I_{+}^{s,a^{\mathcal{P}^i},i}=\emptyset$,
$$\leq\E_{\bar{s}\sim d_\mu^{{\pi_{\theta^\prime_*}}}(\cdot)}\E_{\bar{a}^{\mathcal{P}^i}\sim\pi^{\mathcal{P}^i}_{\theta_*}(\cdot|\bar{s})}\E_{\bar{a}^{i}\sim\pi^{i}_{\tilde{\theta}^i_*}(\cdot|\bar{s},\bar{a}^{\mathcal{P}^i})}\E_{\bar{a}^{-\mathcal{P}^i}\sim\pi^{-\mathcal{P}^i}_{\theta_*}(\cdot|\bar{s},a^{\mathcal{P}^i})}\Big[Q^{\pi_{\theta_*}}(\bar{s},\bar{a}^{\mathcal{P}^i},\bar{a}^{-\mathcal{P}^i})-V^{\pi_{\theta_*}}(\bar{s})\Big]$$

$$=\E_{\bar{s}\sim d_\mu^{{\pi_{\theta^\prime_*}}}(\cdot)}\E_{\bar{a}^{\mathcal{P}^i}\sim\pi^{\mathcal{P}^i}_{\theta_*}(\cdot|\bar{s})}\E_{\bar{a}^{-\mathcal{P}^i}\sim\pi^{-\mathcal{P}^i}_{\theta_*}(\cdot|\bar{s},a^{\mathcal{P}^i})}\Big[Q^{\pi_{\theta_*}}(\bar{s},\bar{a}^{\mathcal{P}^i},\bar{a}^{-\mathcal{P}^i})-V^{\pi_{\theta_*}}(\bar{s})\Big]$$
$$=\E_{\bar{s}\sim d_\mu^{{\pi_{\theta^\prime_*}}}(\cdot)}\Big[V^{\pi_{\theta_*}}(\bar{s})-V^{\pi_{\theta_*}}(\bar{s})\Big]=0$$.
$$\longrightarrow V^{\pi_{\theta^\prime_*}}\leq V^{\pi_{\theta_*}}$$
Therefore, $(\pi_{\theta^1_*}^1, \cdots, \pi_{\theta^N_*}^N,\mathcal{G})$ is a Nash policy.
\end{proof}

\begin{corollary}[Asymptotic convergence of BN policy gradient to optimal fully-correlated BN joint policy]
\label{proof to corollary 1}
Under Assumptions \ref{assumption:discounted state visitation distribution} - \ref{assumption:PG} 
and additional Assumption \ref{assumption:augumented_state_visitation} that assumes positive visitation measure for any augmented state, suppose every agent $i\in\mathcal{N}$ follows the policy gradient dynamics \eqref{eq:PG}, which results in the update dynamics \eqref{eq:BayesianPG} for each each agent $i$, parent actions $a^{\mathcal{P}^i}$, and local action $a^i$, with $\eta\leq \frac{(1-\gamma)^3}{8N(r_{\rm max}-r_{\rm min})}$, then the converged fully-correlated BN policy 
$(\pi_{\theta^1_*}^1, \cdots, \pi_{\theta^N_*}^N,\mathcal{G})$ is an optimal policy.
\end{corollary}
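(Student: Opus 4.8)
The plan is to exploit the special structure of the fully-correlated BN. When $|\mathcal{E}|=N(N-1)/2$ (Definition \ref{definition:epsilon-Nash policy}), the DAG $\mathcal{G}$ is a complete acyclic graph and hence admits a total topological order; relabeling the agents so that this order is $(1,\dots,N)$ gives $\mathcal{P}^i=\{1,\dots,i-1\}$ and $\mathcal{P}^i_+=\{1,\dots,i\}$. Under this ordering the BN policy $\pi_\theta(a|s)=\prod_i\pi^i_{\theta^i}(a^i|s,a^1,\dots,a^{i-1})$ is exactly the chain-rule factorization of a joint distribution, so the tabular softmax parameters are expressive enough to represent an arbitrary joint policy. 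This is the structural reason one can hope for global optimality here rather than only a Nash point.

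The key algebraic step I would carry out first is a telescoping identity for the joint advantage. Using $A^i_\theta(s,a^{\mathcal{P}^i},a^i)=Q_\theta(s,a^{\mathcal{P}^i_+})-Q_\theta(s,a^{\mathcal{P}^i})$ together with $Q_\theta(s,a^{\emptyset})=V_\theta(s)$ and $Q_\theta(s,a^{\mathcal{P}^N_+})=Q_\theta(s,a)$, summing over $i=1,\dots,N$ collapses to
\begin{align*}
\sum_{i=1}^N A^i_\theta(s,a^{\mathcal{P}^i},a^i)=Q_\theta(s,a)-V_\theta(s)=A_\theta(s,a),
\end{align*}
the ordinary joint advantage. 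This identity holds precisely because of the fully-correlated ordering and is the crux that separates this corollary from Theorem \ref{proof:Nash}: for a general DAG the per-agent advantages simply do not recombine into the joint advantage.

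Next I would reuse the convergence machinery already established for the Nash result. Assumption \ref{assumption:augumented_state_visitation} guarantees $d^{\pi_\theta}_\mu(s,a^{\mathcal{P}^i})>0$ for \emph{every} augmented state, so the hypothesis of Lemma \ref{lemma_15} is met uniformly, yielding $I_+^{s,a^{\mathcal{P}^i},i}=\emptyset$ for all $i$, $s$, and $a^{\mathcal{P}^i}$; equivalently, the limiting per-agent advantages satisfy $A^i_{\theta_*}(s,a^{\mathcal{P}^i},a^i)\le 0$ for every local action. Substituting these pointwise inequalities into the telescoping identity gives $A_{\theta_*}(s,a)\le 0$, i.e.\ $Q_{\theta_*}(s,a)\le V_{\theta_*}(s)$, for every state-action pair.

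Finally I would close with the standard Bellman-optimality argument. Maximizing over $a$ gives $\max_a Q_{\theta_*}(s,a)\le V_{\theta_*}(s)$, while $V_{\theta_*}(s)=\sum_a\pi_{\theta_*}(a|s)Q_{\theta_*}(s,a)\le\max_a Q_{\theta_*}(s,a)$ supplies the reverse inequality, so $V_{\theta_*}(s)=\max_a Q_{\theta_*}(s,a)$ for all $s$; combined with $Q_{\theta_*}(s,a)=r(s,a)+\gamma\E_{s'\sim P(\cdot|s,a)}[V_{\theta_*}(s')]$ this is the Bellman optimality equation, whose unique fixed point is $V^\star$, whence $\pi_{\theta_*}$ is optimal. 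I expect the main obstacle to lie not in any single computation but in the uniformity: one must verify that Assumption \ref{assumption:augumented_state_visitation} genuinely licenses $A^i_{\theta_*}\le 0$ at \emph{all} augmented states, not merely on the support of the limiting policy, since it is exactly this uniform coverage—absent in the setting of Theorem \ref{proof:Nash}—that promotes the Nash conclusion to global optimality.
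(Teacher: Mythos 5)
Your proposal is correct and takes essentially the same route as the paper's proof: your telescoping identity $\sum_{i=1}^N A^i_{\theta_*}(s,a^{\mathcal{P}^i},a^i)=Q_{\theta_*}(s,a)-V_{\theta_*}(s)$ is precisely the paper's chain of inequalities built on Equation \eqref{eq:corollary} (that $Q^{\pi_\theta,i}(s,a^{\mathcal{P}^i})=Q^{\pi_\theta,i-1}(s,a^{\mathcal{P}^{i-1}_+})$ under the fully-correlated topological ordering), combined with Lemma \ref{lemma_15} applied at every augmented state via Assumption \ref{assumption:augumented_state_visitation} to get each $A^i_{\theta_*}\le 0$. Your explicit Bellman-optimality closing step merely spells out what the paper asserts in its final line, so the two arguments coincide in substance.
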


\begin{proof}
We can assume without loss of generality that agents $1\cdots N$ in $\mathcal{G}$ has a topological ordering of $1\cdots N$ (This means that agent $i\in\mathcal{N}$ is the source of $N-i$ edges and target of $i-1$ edges). 
\\Note that in this case, $\forall i, a^{\mathcal{P}^i}=[a^{\mathcal{P}^{i-1}_+}]$,
\begin{align}
Q^{\pi_\theta,i}(s,a^{\mathcal{P}^i}) &= \E_{\bar{a}^{-\mathcal{P}^i}\sim\pi_{\theta}(\cdot|s,a^{\mathcal{P}^i})}\Big[Q^{\pi_\theta}(s,a^{\mathcal{P}^i},\bar{a}^{-\mathcal{P}^i})\Big] \nonumber\\&= \E_{\bar{a}^{-\mathcal{P}^{i-1}_+}\sim\pi_{\theta}(\cdot|s,a^{\mathcal{P}^{i-1}_+})}\Big[Q^{\pi_\theta}(s,a^{\mathcal{P}^{i-1}_+},\bar{a}^{-\mathcal{P}^{i-1}_+})\Big]=Q^{\pi_\theta,i-1}(s,a^{\mathcal{P}^{i-1}_+}) \label{eq:corollary}
\end{align}
\\With assumption \ref{assumption:augumented_state_visitation}, we know that $\forall i,a^{\mathcal{P}^i},I_{+}^{s,a^{\mathcal{P}^i},i}=\emptyset$.
\\$\forall a=[a^{\mathcal{P}^{N}},a^N]$,
$$Q^{\pi_{\theta_*}}(s,a)=Q^{\pi_{\theta_*}}(s,a^{\mathcal{P}^{N}},a^N)=Q^{\pi_{\theta_*},N}(s,a^{\mathcal{P}^{N}},a^N)$$
By $I_{+}^{s,a^{\mathcal{P}^{N},N}}=\emptyset$,
$$\leq Q^{\pi_{\theta_*},N}(s,a^{\mathcal{P}^{N}})$$
By Equation (\ref{eq:corollary}),
$$= Q^{\pi_{\theta_*},N-1}(s,a^{\mathcal{P}^{N-1}},a^{N-1})$$
By $I_{+}^{s,a^{\mathcal{P}^{N-1},N-1}}=\emptyset$,
$$\leq Q^{\pi_{\theta_*},N-1}(s,a^{\mathcal{P}^{N-1}})$$
By Equation (\ref{eq:corollary}),
$$= Q^{\pi_{\theta_*},N-2}(s,a^{\mathcal{P}^{N-2})},a^{N-2})$$
By keep doing the same procedure above,
$$\leq Q^{\pi_{\theta_*},1}(s,a^{\mathcal{P}^{1}})$$
Since $a^{\mathcal{P}^{1}}=\emptyset$,
$$=V^{\pi_{\theta_*}}(s)$$
Then, since $\forall s,a, Q^{\pi_{\theta_*}}(s,a)\leq V^{\pi_{\theta_*}}(s)$, we know that $(\pi_{\theta^1_*}^1, \cdots, \pi_{\theta^N_*}^N,\mathcal{G})$ is an optimal policy.
\end{proof}

\newpage\clearpage
\section{Experiment details}
\label{Experiment details}
\subsection{Tabular Coordination Game}
\subsubsection{Pseudocode for the reward function in Coordination Game}
\begin{algorithm}
	\caption{Calculate the team reward for $N$ agents in state $s$}
	\begin{algorithmic}
	    \IF {$(N=2)$ or $(N=3)$}
		\STATE \texttt{difference\_bound=1}
        \ELSE 
        \STATE \texttt{difference\_bound=2}
		\ENDIF
		
		\IF {$\text{abs}(s.\text{count(0)}-s.\text{count(1))}\leq \text{difference\_bound}$}
		\IF {$s.\text{count(0)}<s.\text{count(1)}$}
		\STATE reward$=1$
		\ELSE 
		\STATE reward$=0$
		\ENDIF

		\ELSIF {$s.\text{count(0)}>s.\text{count(1)}$}
		\STATE reward$=3$
        \ELSE 
        \STATE reward$=2$
		\ENDIF
	
	\end{algorithmic} 
\end{algorithm}

\subsubsection{Coordination Game Environment Hyperparameters}
\begin{table}[ht]
\centering
\caption{CG Env Hyperparameters}
\label{table:Env Hyperparameters}
\begin{tabular}{ll} 
\toprule
Hyperparameter                     & Value                                          \\ 
\hline
$\gamma$ (discount factor)         & 0.95                                           \\
$\mu$ (initial state distribution) & Uniform                                        \\
$\epsilon$                    & 0.1  \\
\bottomrule
\end{tabular}
\end{table}

\subsubsection{Hyperparameters for Coordination Game (tabular)}
\begin{table}[ht]
\centering
\caption{Hyperparameters}
\label{table:Implementation details}
\begin{tabular}{ll} 
\toprule
Hyperparameter                     & Value                                          \\ 
\hline
Environment steps         & 2e5, 1e6, 2e7 for CG,Aloha, and SMAC, respectively.                                           \\
Episode length & 20, 25, 400 for CG,Aloha, and SMAC, respectively.                                        \\
PPO epoch             & 5 for all environments.                                          \\
Critic Learning rate                             & 7e-4 for CG and aloha, and 5e-4 for SMAC.                                \\
Actor Learning rate                                  & 7e-4 for CG and aloha, and 5e-4 for SMAC.
\\Optimizer                                  & Adam.
\\
\#Episodes for evaluation & 100 for CG and aloha, and 32 for SMAC.
\\
\#Rollout threads & 32 for CG and aloha, 8 for SMAC.
\\
\#Training threads & 32 for CG and aloha, 1 for SMAC.
\\
Hidden size & 64 for all environments.
\\
Actor architecture for CG                   & Concat(\text{Base}$(s),a^{\mathcal{P}^i})$)-FC(action dim)-softmax  \\
Actor architecture for Aloha                   & Concat(\text{Base}$(o^i),\text{Base}(\text{Concat}(o^{\mathcal{P}^i},a^{\mathcal{P}^i})$)-FC(action dim)-softmax  \\
Actor architecture for SMAC                   & Concat($o^i,a^{\mathcal{P}^i}$)-Base(hidden)-softmax  \\
Edge Net architecture for CG                    & Concat($\{o^i\}_{i=1}^N$)-DeepSet\\
Edge Net architecture for Aloha and SMAC                    & Concat($\{o^i\}_{i=1}^N$)-FC(hidden)-Relu-FC($2N^2$)\\
Permutation Net architecture for CG and Aloha                   & Concat($\{o^i\}_{i=1}^N$)-FC(hidden)-Relu-FC($2N^2$)\\
Permutation Net architecture for SMAC                   & Always output identity matrix\\
Critic architecture for all environments                  & joint observation or state-Base(hidden)-FC(1)  \\
\bottomrule
\end{tabular}

Coordination Game is abbreviated as CG.
\\Base(hidden): FC(hidden)-Relu-FC(hidden)-Relu 
\\DeepSetEncoder: FC(hidden)-Relu-FC(hidden)-Relu-FC(hidden)
\\DeepSetDecoder: FC(hidden)-Relu-FC(hidden)-Relu-FC($2N^2$)
\\DeepSet: input-DeepSetEncoder-mean(dim for agents)-DeepSetEncoder
\end{table}


\end{document}